\renewcommand\thesection{\arabic{section}}
\renewcommand\thesubsection{\arabic{section}.\arabic{subsection}}
\renewcommand\p@subsection{}
\renewcommand\p@subsubsection{}
\newcommand{\eq}[1]{Eq.~\eqref{#1}}
\newcommand{\fig}[1]{Fig.~(\ref{#1})}
\newcommand{\sect}[1]{Sec.~\ref{#1}}
\newcommand{\app}[1]{Appendix~(\ref{#1})}
\newcommand{\lemref}[1]{Lemma~\ref{#1}}
\newcommand{\thmref}[1]{Theorem~\ref{#1}}
\newcommand{\corref}[1]{Corollary~\ref{#1}}
\newcommand{\ts}{\textsuperscript}
\newcommand{\<}{\langle} 
\renewcommand{\>}{\rangle}
\newcommand{\ket}[1]{\left|{#1}\right\rangle}
\newcommand{\pr}[1]{|{#1}\rangle \langle {#1} |}
\newcommand{\thickbar}[1]{\mathbf{\bar{\text{$#1$}}}}
\newcommand{\h}{{\mathcal{H}}}
\newcommand{\kk}{{\mathcal{K}}}
\newcommand{\rr}{{\mathcal{R}}}
\newcommand{\oo}{{\mathcal{O}}}
\renewcommand{\aa}{{\mathcal{A}}}
\newcommand{\co}{\mathds{C}}
\newcommand{\re}{\mathds{R}}
\newcommand{\one}{\mathds{1}}
\newcommand{\imag}{\mathfrak{i}}
\newcommand{\louv}{\mathscr{L}}
\newcommand{\tr}{\mathrm{tr}}
\theoremstyle{plain}
\numberwithin{defn}{section}
\newtheorem{prop}{Proposition}
\numberwithin{prop}{section}
\newtheorem{con}{Conjecture}
\numberwithin{con}{section}
\newtheorem{lem}{Lemma}
\numberwithin{lem}{section}
\newtheorem{thm}{Theorem}
\numberwithin{thm}{section}
\newtheorem{cor}{Corollary}
\numberwithin{cor}{section}
\numberwithin{equation}{section}
\begin{document}

\title{Minimising the heat dissipation of   quantum information erasure}

 \author{M. Hamed Mohammady}
 \email[Correspondence email: ]{hamed.mohammady@lx.it.pt} 
 \affiliation{Physics of Information Group, Instituto de Telecomunica\c{c}\~oes, P-1049-001 Lisbon, Portugal} 
\author{Masoud Mohseni}
\affiliation{Research Laboratory of Electronics, Massachusetts Institute of Technology, Cambridge, MA 02139}
\affiliation{Google Inc., Venice, CA 90291} 
\author{Yasser Omar}
\affiliation{Physics of Information Group, Instituto de Telecomunica\c{c}\~oes, P-1049-001 Lisbon, Portugal}
\affiliation{CEMAPRE, ISEG, Universidade de Lisboa, P-1200-781 Lisbon, Portugal}
\affiliation{IST, Universidade de Lisboa, P-1049-001 Lisbon, Portugal}

\date{ October 7, 2015}

\begin{abstract}
Quantum state engineering and quantum computation rely on information erasure procedures that, up to some fidelity, prepare a quantum object in a pure state.  Such processes occur within Landauer'�s framework if they rely on an interaction between the object and a thermal reservoir. Landauer'�s principle dictates that this must dissipate a minimum quantity of heat, proportional to the entropy reduction that is incurred by the object, to the thermal reservoir. However, this lower bound is only reachable for some specific physical situations, and it is not necessarily achievable for any given reservoir. The main task of our work can be stated as the minimisation of heat dissipation given probabilistic information erasure, i.e., minimising the amount of energy transferred to the thermal reservoir as heat if we require that the probability of preparing the object in a specific pure state $\ket{\varphi_1}$ be no smaller than $p_{\varphi_1}^{\max}-\delta$. Here $p_{\varphi_1}^{\max}$ is the maximum probability of information erasure that is permissible by the physical context, and $\delta\geqslant 0$ the error. To determine the achievable minimal heat dissipation of quantum information erasure within a given physical context, we explicitly optimise over all possible unitary operators that act on the composite system of object and reservoir. Specifically, we characterise the equivalence class of such optimal unitary operators, using tools from majorisation theory, when we are restricted to finite-dimensional Hilbert spaces. Furthermore, we discuss how pure state preparation processes could be achieved with a smaller heat  cost than Landauer'�s limit, by operating outside of Landauer'�s framework.

\end{abstract}

\maketitle

\section{Introduction}
  
\subsection{Information erasure and thermodynamics}
In his attempt to exorcise  Maxwell's demon \citep{Maxwellian-demon, Maxwells-demon-colloquium},   Leo Szilard    conceived of an engine \citep{Szilard} composed of a box that is in thermal contact with a  reservoir at temperature $T$, and contains a single gas particle. By placing a partition in the middle of the box and determining on which side of this the particle is located, the Maxwellian demon can attach to said partition a weight-and-pulley system  so that, as the gas expands, the weight is elevated. By ensuring that the partition moves without friction, and continuously adjusting the weight to make the process quasi-static, one may fully convert $k_B T \log(2)$ units of heat energy from the gas into  work.  Here, $k_B$ is Boltzmann's constant and $\log(\cdot)$ is the natural logarithm. In order to save the second law of thermodynamics the engine must dissipate at least $k_B T \log(2)$ units of energy to the thermal reservoir as heat. While it was initially believed that this heat dissipation is due to the measurement act by the Maxwellian demon,  following the work of   Landauer, Penrose, and  Bennet \citep{Landauer, Penrose-stat-mech, Bennett-Landauer-review, Bennett-Landauer-Notes} the responsible process was identified as the  erasure of information in the demon's memory -- the logically irreversible process of assigning a prescribed value to the memory, irrespective of its prior state. That the minimum heat dissipation required to erase one bit of information cannot be any smaller than $k_B T \log(2)$ is commonly known as Landauer's principle, and said minimum quantity as Landauer's limit. In general, Landauer's principle may be encapsulated by the Clausius inequality
\begin{equation}
 \Delta Q \geqslant k_B T\Delta S, \label{Landauer-traditional}
\end{equation}    
where $\Delta Q$ is the heat dissipation to the thermal reservoir and  $\Delta S$ is the entropy reduction in the object of information erasure. 

\subsection{Thermodynamics in the quantum regime}
Recent years have been witness to  a growing interest in   thermodynamics and statistical mechanics in the quantum regime (See \citep{Goold-thermo-review, Anders-thermo-review} for a review). This has lead to  a lively debate regarding the  definition  of two central concepts in thermodynamics -- work and heat -- within the framework of quantum theory.
In classical physics, the work done during a process  is defined as the increase in useful, ordered energy. Conversely, the heat dissipated during a process is the increase in unusable, disordered energy. In   Szilard's engine,  for example,   work is   characterised as the (deterministic) elevation of a weight, and hence the increase of its gravitational potential energy. The heat dissipated, on the other hand, would be stored as  kinetic energy  in the random motion of the atoms that constitute Szilard's engine, as well as the environment. This clear distinction fails in  quantum mechanics, which is an inherently probabilistic theory. 

Broadly speaking, work may be characterised in two different ways: (i)  $\epsilon$-deterministic work \citep{Dahlsten-Inadequacy-von-Neumann,Anders-Single-shot}; and (ii)  average work \citep{thermo-open-quantum,Eisert-Operational-Work}. In either case, one may  include the work storage device -- a quantum analogue of the elevated weight in Szilard's engine -- explicitly in the formalism, such as \citep{Horodecki-limitations} and \citep{thermo-individual-quantum}. This is not always done, and one may directly examine the energy change in the system under consideration.     In the $\epsilon$-deterministic framework, the work of a process is defined as the difference in energy measurement outcomes on the system (or work storage device), observed prior and posterior to the process. The $\epsilon$-deterministic work is then the maximum value of work, thus defined, which occurs with a probability of at least $1-\epsilon$.   Meanwhile, average work is given as either the difference in expectation values of energy, or the difference in the free energies, of the system (or work storage device) observed prior and posterior to the process. The difference in average energy can be converted to the difference in free energy by subtracting the von Neumann entropy of the system, multiplied by the temperature, from its average energy. 

Definitions of heat can similarly be broadly classified into two categories: (i) where the thermal reservoir is treated \emph{extrinsically} \citep{thermo-open-quantum,Anders-Measurement-Thermodynamics}; and (ii) where the thermal reservoir is treated \emph{intrinsically} \citep{Paternostro-Landauer-collision, Modi-nonequilibrium-Landauer} . If the thermal reservoir is treated extrinsically, whereby it does not explicitly appear in the framework as a quantum system susceptible to change and examination, heat is a property of the system of interest. One may therefore define heat after having determined work -- that is to say, given the change in total energy of the system, $\Delta E$, and the work, $\Delta W$, the heat $\Delta Q$ is given by the first law of thermodynamics as $\Delta Q = \Delta E - \Delta W$. Alternatively,  Landauer's principle may be invoked to get a lower bound of heat dissipation, given that the system has undergone an entropy change of $\Delta S$.    If the thermal reservoir is treated intrinsically, on the other hand, heat can be defined as the average energy change of the reservoir itself. In other words, heat is average work pertaining to the thermal reservoir.
A thermal reservoir, considered intrinsically, is   a system that is initially uncorrelated from every other system considered, and is prepared in a Gibbs state.  We note that, from this perspective, treating the thermal reservoir with the Born Markov approximation would render it  extrinsic; this is because the state of the reservoir, in the coarse-grained picture, is assumed to never change. As such,   defining heat dissipation during a process as the average energy increase of the reservoir would lead one to conclude that no heat is dissipated at all. Indeed, the physical justification for the Born Markov approximation is that, at time-scales much shorter than that at which the system changes, the reservoir  relaxes to its equilibrium state by interacting with an unseen and, hence extrinsic, environment. If this  environment is explicitly accounted for quantum mechanically, then the total system will again evolve unitarily, and the energy increase of this environment has to also be accounted for.   

In this article, we shall adopt the view that work is the change in average energy of the system. Moreover, whenever a thermal reservoir is mentioned, we will consider it intrinsically and include it as part of the system under investigation.    The work storage device, however, is considered extrinsically: by the first law of thermodynamics we take as a priori the notion that the change in average energy of the  system -- including the reservoir if it is present -- must come from an external energy source. This total change in average energy is defined as the \emph{work done by the extrinsic work storage device}. If the total system is composed of an object and thermal reservoir, each with a well-defined Hamiltonian, then the portion of this work that is taken up by the object is called the \emph{work done on the object}, and the portion taken by the reservoir is called the \emph{heat dissipated to the reservoir}. If the total system is thermal, then the entirety of the work done by the extrinsic work storage device is defined as heat.
  
\subsection{A quantum mechanical Landauer's principle}

The surge of interest in quantum thermodynamics has included  attempts to consider Landauer's principle  quantum mechanically \citep{Piechocinska-erasure, Quantum-Landauer-Anders, Renner-quantitative-landauer, negative-entropy, Sagawa-thermodynamic-measurement, Modi-nonequilibrium-Landauer, Landauer-Quantum-Statistical}. Most notable among such efforts is that 
of  Reeb and Wolf \citep{Reeb-Wolf-Landauer}, who provide a fully quantum statistical mechanical derivation of Landauer's principle by considering the process of reducing the entropy of a quantum  object by its joint unitary evolution with a thermal reservoir. Here, they consider heat dissipation as the average energy increase of the reservoir, which is initially  in a Gibbs state and is  not correlated with the object.   For a reservoir with a Hilbert space of finite dimension $d_\rr$, they arrive at an equality form of Landauer's principle,
\begin{equation}
 \Delta Q= k_B T \left( \Delta S + I(\oo: \rr)_{\rho'}+ S(\rho_\rr'\| \rho_\rr(\beta))\right), \label{Landauer-equality}
\end{equation}
where $I(\oo: \rr)_{\rho'}$ is the mutual information between object and reservoir after the joint evolution, and $S(\rho_\rr'\| \rho_\rr(\beta))$ is the relative entropy between the post-evolution state of the reservoir and its initial state at thermal equilibrium. As the mutual information and relative entropy terms are non-negative, this implies Landauer's principle. While \eq{Landauer-equality}  always yields the exact heat dissipation, it involves terms that are cumbersome to calculate and, perhaps more importantly, it is not a function of $\Delta S$ alone. As such, Reeb and Wolf  provide an inequality form of Landauer's principle,
\begin{equation}
\Delta Q \geqslant k_B T (\Delta S + M(\Delta S, d_\rr)),\label{Landauer-corrected}
\end{equation}
where $M(\Delta S, d_\rr)$ is a non-negative correction term that vanishes in the limit as $d_\rr$ tends to infinity.  
\subsection{The need for a context-dependent  Landauer's principle}
       The study in \citep{Reeb-Wolf-Landauer}  provides a lower bound of energy transferred to the thermal reservoir as heat dissipation, given that the object's entropy decreases by $\Delta S$ and that the reservoir's Hilbert space dimension is $d_\rr$. The crucial point however is that this lower bound can be  obtained for \emph{some} physical context,  but not all of them. By physical context, we mean the tuple $(\h_\oo, \rho_\oo, \h_\rr,  H_\rr,T)$. Here $\h_\oo$ and $\rho_\oo$ are respectively the Hilbert space and state of the object, while $\h_\rr$, $H_\rr$, and $T$ are respectively the Hilbert space, Hamiltonian, and  temperature of the reservoir.   For example, one way to achieve the  lower bound of  \eq{Landauer-corrected} is for the object and reservoir to have the same Hilbert space dimension, allowing us to perform a swap map between them; this will take the mutual information term in \eq{Landauer-equality} to zero. The next step of the optimisation would be to pick a  specific  $\rho_\oo$, $H_\rr$ and $T$ so as to minimise the relative entropy term.    Conversely, for a given physical context  such inequalities may prove less instructive. Indeed,  if it is impossible to achieve  the lower bound of \eq{Landauer-corrected} in a given experimental setup,  in what sense can we consider this as the lowest possible heat dissipation due to information erasure?  In this study, therefore, we aim to approach the problem of information erasure from the dual perspective: given a physical context, what is the minimum heat that must be dissipated in order to achieve a certain level of information erasure. This context-dependent Landauer's principle will be characterised by the equivalence class of unitary operators that achieve our task. Of course,  this first requires a re-examination of what exactly we mean by information erasure.

\subsection{Information erasure: pure state preparation and entropy reduction}
In this article, we take  information erasure to be synonymous with pure state preparation; just as in classical mechanics erasure (in the Landauer sense) involves the many-to-one mapping on the information bearing degrees of freedom, then in quantum mechanics this translates naturally as the irreversible process of preparing the object in a pure state. Probabilistic information erasure, then, refers to the case where the probability of preparing the object in the desired pure state is lower than unity.   Although erasing the information of an object as presently defined leads to a reduction of its entropy, the two processes are not quantitatively the same. If we wish to maximise the largest eigenvalue in the object's probability spectrum, thereby maximising the probability of preparing it in a given pure state,  in general we need not minimise its entropy to do so; the only cases where maximising the probability of information erasure leads to minimising the entropy  are when the object has a two-dimensional Hilbert space, or where we are able to fully purify the object and thereby take its entropy to zero. In general, then, a given probability of information erasure is compatible with many different values of entropy reduction. By choosing the smallest entropy reduction, one would expect that we may minimise the consequent heat dissipation, as per \eq{Landauer-equality}.  Consequently, our desired task can be stated as the  minimisation of heat dissipation given probabilistic information erasure -- that is to say, of minimising the amount of energy transferred to the thermal reservoir as heat if we require that the  probability of preparing the object  in a specific pure state $\ket{\varphi_1}$ be no smaller than $p_{\varphi_1}^\mathrm{max}-\delta$. Here $p_{\varphi_1}^\mathrm{max}$ is the maximum probability of information erasure that is permissible by the physical context, and $\delta\geqslant 0$ the error. We will refer to the equivalence class of unitary operators that achieve this as $[U_\mathrm{opt}(\delta)]$. If the object also has a non-trivial Hamiltonian, then to further reduce the total work cost of information erasure, conditional on  first minimising the heat dissipation, we may further optimise the unitary operators within the equivalence class $[U_\mathrm{opt}(\delta)]$ so that the state of the object is made to be  passive \cite{Pusz-Passive-States,Energetic-Passive-State}, and with as small an expected energy value as possible. This reduced equivalence class is referred to as $[U_\mathrm{opt}^\mathrm{p}(\delta)]$.  

\subsection{Information erasure and information processing}
Reducing the heat dissipation due to information erasure is important for both classical and quantum information processing devices. As recent studies suggest \citep{ICT-scaling}, heat dissipation is a major limiting factor on the continual growth in the computational density of modern CMOS transistors. Meanwhile for quantum computation in the circuit-based model,  error correction requires a constant supply of ancillary qubits, in pure states, for syndrome measurements. Indeed, the authors in \citep{Quantum-refrigerator} show that in the absence of such a constant supply the number of steps in which the computation can be performed fault tolerantly
will be limited. Given a finite supply of ancillary qubits,  we must constantly purify them during the execution of the algorithm. If the resulting heat dissipation leads to the intensification of thermal noise beyond the threshold for fault tolerance \citep{Aharonov-fault-tolerance}, then the computation will fail.    
 A context-dependent Landauer's principle will thus prove especially important for information processing devices, in both classical and quantum architectures, where the structure of the reservoir Hamiltonian will  usually be fixed. Furthermore, our work may be useful for certain  high-performance, probabilistic (classical) information processing devices, that would operate at or near the quantum regime.    Although the current state of the art in information processing devices dissipates heat orders of magnitude in excess of Landauer's limit, our ever increasing ability to control microscopic devices will mean that achieving such theoretical limits may be possible in the not-too-distant future. Indeed, experiments already exist, both in classical \citep{Landauer-experiment-classical} and quantum \citep{Goold-experimental-Landauer} systems, which have achieved heat dissipation very close to Landauer's limit.           

\subsection{Layout of article}

  In \sect{optimal unitary section} we shall  characterise the equivalence class of unitary operators acting on the composite system of object and reservoir, as a result of which the object  undergoes probabilistic information erasure and, given this,  the  reservoir gains the minimal quantity of heat. If the object also has a non-trivial Hamiltonian, the unitary operators can be further optimised so as to reduce the energy gained by the object. Here, we operate within  Landauer's framework --   the object and reservoir are initially uncorrelated and  the composite system evolves unitarily.    We demonstrate, using a sequential swap algorithm introduced in \sect{trade-off relation}, the tradeoff  between probability of information erasure and minimal heat dissipation; an increase in probability of preparing the object in a defined pure state is accompanied by an increase in the minimal heat that must be dissipated to the thermal reservoir.    In \sect{examples} we apply the general results to the case of  erasing a    maximally mixed qubit with the greatest allowed probability of success. Two reservoir classes will be considered: (i) a  $d$-dimensional ladder system, where the energy gap between consecutive eigenstates is uniformly $\omega$; and (ii)  a spin chain with nearest-neighbour interactions, that is under a local magnetic field gradient.  For both models, we shall also inquire into the effect of energy conserving,  pure dephasing channels on the erasure process. In \sect{qudit erasure calculations}, we determine the minimum quantity of heat that must be dissipated given full information erasure of a general qudit prepared in a maximally mixed state, in the limit of utilising an infinite-dimensional ladder system, which is a harmonic oscillator.  In \sect{resetting with correlations} we shall address how information erasure can be achieved at a lower heat cost than  Landauer's limit, by operating outside of Landauer's framework, but in such a way that  terms like heat and temperature would continue to have referents in the mathematical description. In \app{Majorisation theory section} we provide a brief overview of certain key results from majorisation theory that will be used throughout the article. In \app{equivalence class of unitary operators} we explain what an equivalence class of unitary operators constitutes. Finally, in \app{technical proofs} we provide proofs for the main results.

\section{Information erasure within Landauer's framework}\label{optimal unitary section}

\subsection{The setup} 

\begin{figure}[!htb]
\centering
\includegraphics[width=10 cm]{setup.eps}
\caption{The  object $\oo$ with Hilbert space $\h_\oo \simeq\co^{d_\oo}$ and thermal reservoir $\rr$ with Hilbert space $\h_\rr \simeq\co^{d_\rr}$. The eigenbasis of the reservoir Hamiltonian $H_\rr$ is $\{\ket{\xi_m}\}_m$, with the vector numbering being in order of increasing energy.  The eigenbasis with respect to which the object is initially diagonal is $\{\ket{\varphi_n}\}_n$.  }
\label{setup}
\end{figure}

   We consider a system composed of an object, $\oo$, with Hilbert space $\h_\oo \simeq \co^{d_\oo}$ and reservoir, $\rr$, with Hilbert space $\h_\rr \simeq \co^{d_\rr}$. Let the Hamiltonian of the reservoir be the self-adjoint operator $H_\rr=\sum_{m=1}^{d_\rr} \lambda_m^{\uparrow} \pr{\xi_m}$, where  $\bm{\lambda}^{\uparrow}:=\{\lambda_m^{\uparrow}\}_m$ is a non-decreasing vector of energy eigenvalues. This means that $\lambda_i^\uparrow \leqslant \lambda_j^\uparrow$  for any $i<j$.  Similarly, the object Hamiltonian is denoted $H_\oo$.  The compound  system is initially in the uncorrelated state $\rho= \rho_\oo \otimes \rho_\rr(\beta)$, where $\rho_\oo :=\sum_{l=1}^{d_\oo} o_l^{\downarrow} \pr{\varphi_l}$ is the initial state of the object, such that  $\bm{o}^{\downarrow}:=\{o_l^{\downarrow}\}_l$ is a non-increasing vector of probabilities. This means that $o_i^\downarrow \geqslant o_j^\downarrow$ for any $i <j$.  Additionally, the reservoir is initially in the Gibbs state $\rho_\rr(\beta):=e^{-\beta H_\rr}/\tr[e^{-\beta H_\rr}]$ at inverse temperature  $\beta:=(k_B T)^{-1} \in (0,\infty)$. \fig{setup} represents the setup diagrammatically.  Because of the ordering on the energy eigenvalues, we may represent this state  as   $\rho_\rr(\beta) :=\sum_{m=1}^{d_\rr} r_m ^{\downarrow}\pr{\xi_m}$,  such that    $\bm{r}^{\downarrow}:=\{r_m^{\downarrow}\}_m$ is a non-increasing  vector of probabilities.   For simplicity, we write the initial state $\rho$ in the equivalent form
\begin{equation}\label{composite system state representation initial}
\rho=\sum_{l=1}^{d_\oo}\sum_{m=1}^{d_\rr} o_l^{\downarrow}r_m ^{\downarrow}\pr{\varphi_l} \otimes \pr{\xi_m}\equiv \sum_{n=1}^{d_\oo d_\rr} p_n ^{\downarrow}\pr{\psi_n}, \end{equation}  
where the non-increasing vector $\bm{p}^{\downarrow}:=\{ p_n^{\downarrow}\}_n$ is the ordered permutation of  $\{o_l ^{\downarrow}r_m^{\downarrow}\}_{l,m}$, and $\{\ket{\psi_n} \in \h_\oo\otimes \h_\rr\}_n$ the associated permutation of  $\{\ket{\varphi_l} \otimes \ket{\xi_m}\}_{l,m}$. We note that this state representation  is unique if and only if there are no degeneracies in the probability distribution $\bm{p}^{\downarrow}$. 
We assume that the total system is thermally isolated, so that the process of information erasure will be characterised by a unitary operator   $U$. The state of the system after the process is complete is therefore  
\begin{equation}
\rho':=U \rho U^\dagger=\sum_{n=1}^{d_\oo d_\rr}  p_n ^{\downarrow}U\pr{\psi_n}U^\dagger.
\end{equation}
The marginal states of $\rho'$  are  $\rho_\oo':= \tr_\rr[\rho']$ and $\rho_\rr':= \tr_\oo[ \rho']$, where $\tr_A[\cdot]$ represents the partial trace, of a composite system $A+B$, over the system $A$.  

As the pure state we wish to prepare the object in is arbitrary up to local unitary operations, for simplicity we choose this to be  $\ket{\varphi_1}$; this is the eigenstate of $\rho_\oo$ with the largest eigenvalue, i.e., $o_1^\downarrow$. The probability of preparing  $\rho_\oo'$ in the state $\ket{\varphi_1}$ is defined as 
\begin{align}
p(\varphi_1| \rho_\oo'):= \<\varphi_1|\rho_\oo'|\varphi_1\> &= \sum_{n=1}^{d_\oo d_\rr}  p_n ^{\downarrow} \<\psi_n|U^\dagger(\pr{\varphi_1}\otimes \one_\rr)U|\psi_n\>, \nonumber \\ & = \sum_{n=1}^{d_\oo d_\rr}  p_n^{\downarrow} g_n(U) \equiv\bm{p}^{\downarrow}\cdot\bm{g(U)},
\end{align} 
where $\bm{g(U)}$ is a vector of  positive numbers $g_n(U):= \<\psi_n|U^\dagger(\pr{\varphi_1}\otimes \one_\rr)U|\psi_n\>$ such that $\sum_n g_n(U)=d_\rr$. In general, we wish to achieve $p(\varphi_1| \rho_\oo') \geqslant p^{\max}_{\varphi_1} - \delta$, where $p_{\varphi_1}^{\max}$ is the maximum probability of information erasure permissible by the physical context, and $\delta \in [0, p^{\max}_{\varphi_1}- o_1^{\downarrow}]$ is the error. As we want the  process  to produce a larger $p(\varphi_1|\rho_\oo')$ than $o_1^\downarrow$, this will lead to a decrease in the von Neumann entropy of $\oo$. The von Neumann entropy of a state $\rho$ is  $S(\rho):= -\tr[\rho \log(\rho)]$.   We define the reduction in  entropy of $\oo$ as $\Delta S:= S(\rho_\oo)-S(\rho'_\oo)$. 

The process is also assumed to be cyclic, meaning that the total Hamiltonian at the start of the process is identical with that at the end. As such,      the total  average energy consumption of the erasure protocol will be 
\begin{align}\label{energy change of battery}
\Delta E := \tr[(H_\oo+ H_\rr)(\rho'- \rho)]&= \tr[H_\oo(\rho_\oo'- \rho_\oo)]+ \tr[H_\rr(\rho_\rr'- \rho_\rr(\beta)], \nonumber \\
&= \Delta W + \Delta Q.
\end{align}
A positive $\Delta E$ implies that the process requires energy from an external work storage device. Conversely, a negative $\Delta E$ implies that the process produces energy that can, in turn, be stored in the work storage device. Here, $\Delta W$ is the energy change in the object, which we call work done on the object, and $\Delta Q$ the energy change in the reservoir, or the heat dissipated to the reservoir.  As shown in \citep{Esposito-Entropy-Correlation,Reeb-Wolf-Landauer}, these terms can also be written as
\begin{align}
\beta\Delta W &=S(\rho_\oo' \| \rho_\oo(\beta))- S(\rho_\oo\| \rho_\oo(\beta))- \Delta S, \label{Work term}
\end{align}
\begin{align}
\beta \Delta Q &= \Delta S + I(\oo:\rr)_{\rho'}+ S(\rho_\rr' \| \rho_\rr(\beta)), \label{Heat term} \end{align}
 where $S(\rho \| \sigma):= \tr[\rho(\log(\rho) - \log(\sigma))]$ is the entropy of  $\rho$ relative to  $\sigma$, and $I(A:B)_\rho:= S(\rho_A)+S(\rho_B)-S(\rho)$ is the mutual information of a state $\rho$ of a bipartite system $A+B$.    As we are only interested in cases where $\Delta S$ is positive, we can infer from the non-negativity of the relative entropy and mutual information that $\Delta Q$ is always positive for information erasure, even though $\Delta W$ may be negative. 

We wish to make the physical interpretation that $\Delta Q$ is energy that is irreversibly lost during the information erasure process, and is  hence qualitatively different in nature from $\Delta W$. For this to be true, it must be impossible to extract work from the reservoir, after the process is complete, by means of a cyclic unitary process involving  the reservoir alone. This is satisfied  if $\rho_\rr'$ is passive, i.e., $\rho_\rr'= \sum_m r_m'^{\downarrow}\pr{\xi_m}$; that is to say, if $\rho_\rr'$ is diagonal in the Hamiltonian eigenbasis, and its eigenvalues are non-increasing with respect to energy. If $\rho_\rr'$ is not passive, as shown by \citep{Allahverdyan-maximal-work} it is possible to extract a maximum amount of work, given as
\begin{equation}
\Delta W^{\max}:= \tr[H_\rr (\rho_\rr' - \rho_\rr^\mathrm{passive})], 
\end{equation}
where $\rho_\rr^\mathrm{passive}$ has the same spectrum as $\rho_\rr'$, but is passive.   As will be shown in the following sections, not only is it possible for $\rho_\rr'$ to be passive, but this is always satisfied in the case of minimal heat dissipation. However, if the dimension of $\h_\rr$ is at least three, and we have access to $N$ copies of $\rho_\rr'$, it may be possible, for a sufficiently large $N$, to have the compound state $\rho_\rr'^{\otimes N}$ be non-passive.   This is called \emph{activation}. Consequently, by keeping the reservoir systems after their utility in the erasure protocol, and then acting globally on this collection, we may be able to retrieve some energy.  The only passive state which cannot be activated, no matter how many copies we have access to, is the Gibbs state \citep{Pusz-Passive-States}. However,  $\rho_\rr'$ will not in general be in a Gibbs state.  To ensure that $\Delta Q$ is truly lost, irrespective of what reservoir is used, we must impose an additional structure. The simplest method  is to impose the condition that the reservoir system is irrevocably lost after the process is complete. For example, if the reservoir system $\rr$ is randomly chosen from an infinite collection of identical systems, but we do not know which particular system was used, then the probability of picking this system again at random, after the erasure protocol, will be vanishingly small.

\subsection{Maximising the probability of information erasure}\label{maximising probability of information erasure}
 In \app{proof maximising probability of information erasure} we prove that the maximum probability of information erasure  is 
\begin{equation}
p_{\varphi_1}^{\max}:= \sum_{m=1}^{d_\rr} p_m^{\downarrow},
\end{equation} and the equivalence class of unitary operators that achieve this, denoted $[U^g_\mathrm{maj}]$,   is characterised by the rule
\begin{equation}
\text{for all } m\in \{1, \dots, d_\rr\} \ , \ U_\mathrm{maj}^g\ket{\psi_m}=  \ket{\varphi_1} \otimes\ket{ \xi_m'},\label{maximising probability rule equation}
\end{equation}
where $\{\ket{\xi_m'}\}_m$ is an arbitrary orthonormal basis in $\h_\rr$.
To see what we mean by an equivalence class of unitary operators, refer to \app{equivalence class of unitary operators}.  In other words, to maximise the probability of information erasure the unitary operator must take the $d_\rr$ vectors $\ket{\psi_m}$, that have the largest probabilities associated with them in the spectral decomposition of $\rho$, to the product vectors $\ket{\varphi_1}\otimes \ket{\xi_m'}$.  
Similar results, leading to the conclusion  that $p_{\varphi_1}^{\max}$ in general cannot be brought to unity, have been reported in \citep{Reeb-Wolf-Landauer, Quantum-resource-cooling-Viola, Dynamical-Cooling,No-go-cooling-Brumer}.

A necessary and sufficient condition for $p^{\max}_{\varphi_1}$ to be greater than the largest eigenvalue of the object's initial state, i.e,  $p(\varphi_1|\rho_\oo):=o_1^{\downarrow}$, is that $o_2^{\downarrow} r_1^{\downarrow}$ be greater than $o_1^{\downarrow} r_{d_\rr}^{\downarrow}$. If this were not the case, the $d_\rr$ largest probabilities $p_m^\downarrow$ would be the set $\{o_1^\downarrow r_m^\downarrow\}_m$. Recall that the maximum probability of information erasure is given by  summing over this set, which   gives $o_1^\downarrow$. That is to say, $p^{\max}_{\varphi_1}:= \sum_{m=1}^{d_\rr} p_m^\downarrow \equiv \sum_{m=1}^{d_\rr} o_1^{\downarrow} r_m^{\downarrow} = o_1^{\downarrow}$. This implies that for a non-trivial  erasure process, whereby the probability of preparing the object in the state $\ket{\varphi_1}$ is increased, we require that
\begin{align}
\frac{o_1^{\downarrow}}{o_2^{\downarrow}} < \frac{r_1^{\downarrow}}{r_{d_\rr}^{\downarrow}}=e^{\beta(\lambda_{d_\rr}^{\uparrow}-\lambda_1^{\uparrow})},
\end{align}
where the equality is a consequence of $r_m^\downarrow:= e^{-\beta \lambda_m^\uparrow}/\tr[e^{-\beta H_\rr}]$.  Similar arguments were made in \citep{Reeb-Wolf-Landauer}, although there the focus was on providing a bound on the smallest eigenvalue of $\rho_\oo'$ that could be obtained. 

\subsection{Minimising the heat dissipation}\label{Minimising the heat dissipation}

As the initial state of the reservoir is fixed, the heat dissipation is minimised by lowering the expected energy of the post-transformation marginal state of the reservoir, $\tr[H_\rr \rho_\rr']$. In \app{proof Minimising the heat dissipation} we prove that $\Delta Q$ is minimised by the equivalence class of unitary operators $[U_\mathrm{maj}^f]$ characterised by the rule
\begin{equation}
\text{for all } m \in \{1, \dots,d_\rr\}  \text{ and } n \in\{(m-1)d_\oo+1,\dots ,m d_\oo\} \ , \ U_\mathrm{maj}^f\ket{\psi_n}= \ket{\varphi_{l}^m} \otimes \ket{\xi_m} ,
\label{minimising heat dissipation rule}\end{equation}
with the set  $\{\ket{\varphi_{l}^m}|l \in \{1, \dots, d_\oo\}\}$ forming an orthonormal basis in $\h_\oo$ for each $m$. A unitary operator from this equivalence class will ensure that $\rho_\rr'$ is passive, and that it majorises any other passive state that could have been prepared. This is done by first maximising the probability of preparing the reservoir in the ground state $\ket{\xi_1}$, by taking the $d_\oo$ vectors $\ket{\psi_n}$, that have the largest probabilities associated with them in the spectral decomposition of $\rho$, to the product vectors $\ket{\varphi_{l}^1} \otimes \ket{\xi_1}$. After this, the probability of preparing the reservoir in the next energy state $\ket{\xi_2}$ is maximised in a similar fashion, and so on for all other energy eigenstates. 

\subsection{Minimal heat dissipation conditional on maximising the probability of information erasure }\label{maximal-probability-optimal-unitary}

\begin{figure}[!htb]
\centering
\includegraphics[width=10 cm]{unitary-schematic.eps}
\caption{(Colour online)  (a) The partitioning of  $\bm{p^{\downarrow}}$, the  vector of eigenvalues of $\rho$ arranged in a non-increasing order, into the vectors $\Pi_0^{\downarrow}$ and $\Pi^{\downarrow}_{m\geqslant 1}$.  (b) The density operator $\rho':=U_\mathrm{opt}^{\mathrm{p}}(0)\rho {U_\mathrm{opt}^{\mathrm{p}}(0)}^\dagger$, in matrix representation, where $U_\mathrm{opt}^{\mathrm{p}}(0)$ is the optimal unitary operator for passive, maximally probable information erasure. The post-transformation marginal state of the object, $\rho_\oo'$, is  passive. It is also the least energetic passive state that is possible to prepare, given the constraints: (i) $p(\varphi_1|\rho_\oo')=p_{\varphi_1}^{\max}$; and (ii) $\Delta Q$ is minimal given (i).   }
\label{optimal unitary schematic}
\end{figure}

If we compare the rule that maximises the probability of information erasure, given by \eq{maximising probability rule equation}, and the rule that minimises the heat dissipation, given by \eq{minimising heat dissipation rule}, we notice that they  are incompatible. As such, no unitary operator simultaneously exists in both equivalence classes:    $[U_{\mathrm{maj}}^g] \cap\ [U_{\mathrm{maj}}^f]=\{\emptyset\}$. The two tasks are in some sense complementary, and there will be a tradeoff between them. Here, we  shall prioritise; a unitary operator will be chosen such that it maximises the probability of information erasure and, given this constraint,  minimises the heat dissipation.
In other words, we find the equivalence class of unitary operators $[U_\mathrm{opt}(0)]\subset[U^g_\mathrm{maj}]$ that minimise $\Delta Q$. The zero in braces indicates that the error in probability of information erasure, $\delta$, is zero.    To this end we first  divide the vector of probabilities $\bm{p ^{\downarrow}}$ to form the non-increasing vector of cardinality $d_\rr$, denoted $\Pi_0^{\downarrow}$, and the non-increasing vectors of cardinality $d_\oo-1$, denoted $\{\Pi_m^{\downarrow}|m \in \{1,\dots, d_\rr\}\}$,  defined as
\begin{align}
\Pi_0^{\downarrow}&:= \{ p_m^{\downarrow}| m \in\{1,\dots,d_\rr\}\}, \nonumber \\
 \Pi^{\downarrow}_{m\geqslant 1}&:= \{ p^{\downarrow}_{d_\rr + (m-1)( d_\oo-1) +l} | l \in \{1, \dots, d_\oo-1\}\}.
\end{align}
 We refer to the $m\ts{th}$   element of $\Pi_0^{\downarrow}$ as $\Pi_0^{\downarrow}(m)$, and the $l\ts{th}$ element of  $\Pi^{\downarrow}_{m\geqslant 1}$ as $\Pi^{\downarrow}_{m\geqslant 1}(l)$. 

In \app{proof maximal-probability-optimal-unitary} we prove that the equivalence class of  unitary operators that maximise the probability of information erasure and, given this constraint, minimise the heat dissipation, is  characterised by the rules 
\begin{align}
U_\mathrm{opt}(0):\begin{cases}\ket{\psi_n} \mapsto \ket{\varphi_1} \otimes \ket{\xi_m} & \text{if }  p(\psi_n|\rho) = \Pi^{\downarrow}_0(m), \\
\ket{\psi_n} \mapsto \ket{\varphi_l^m}\otimes\ket{\xi_{m}} & \text{if }  p(\psi_n|\rho)= \Pi^{\downarrow}_m(l)  \text{ and }  m\geqslant 1,\\
\end{cases}\label{optimal U equation}
\end{align}
where, for all $m$, each member of the orthonormal set $\{\ket{\varphi_l^m}\}_l$ is orthogonal to $\ket{\varphi_1}$.

 Effectively, the first line of \eq{optimal U equation} conforms with \eq{maximising probability rule equation} and hence maximises the probability of information erasure. The orthonormal vectors $\{\ket{\xi_m'}\}_m$ are chosen to be the eigenvectors of the reservoir Hamiltonian, however, in order to minimise the contribution to heat from this line. The second line is an altered version of \eq{minimising heat dissipation rule}, thereby minimising the heat dissipation given the constraint posed by the first line. We now make the following observations: 
\begin{enumerate}[(a)]
\item
If we choose  $\ket{\varphi_l^m}=\ket{\varphi_{l+1}}$ for all $m$, and such that $\{\ket{\varphi_l}\}_l$ are the eigenvectors of   the object Hamiltonian $H_\oo$ in increasing order of energy, then $U_\mathrm{opt}(0)$ would also ensure that erasure  to the ground state $\ket{\varphi_1}$ would  be done in such a way that $p(\varphi_i|\rho_\oo')\geqslant p(\varphi_j|\rho_\oo') $ for all $i<j$; the object is brought to  a passive state, although this state will in general not be thermal \citep{Pusz-Passive-States}. We refer to this as \emph{passive information erasure}, and the resultant equivalence class of  unitary   operators as $[U_\mathrm{opt}^{\mathrm{p}}(0)]\subset [U_\mathrm{opt}(0)]$.  These unitary operators will result in the smallest possible $\Delta E$, conditional on first maximising the probability of information erasure, and then minimising the heat dissipation; that is to say, $[U_\mathrm{opt}^{\mathrm{p}}(0)]$ minimises $\Delta W$ for all unitary operators in the equivalence class $[U_\mathrm{opt}(0)]$.   \fig{optimal unitary schematic} shows the matrix representation of $\rho'= U_\mathrm{opt}^{\mathrm{p}}(0) \rho {U_\mathrm{opt}^{\mathrm{p}}(0)}^\dagger$. 
\item Since the desired task is the maximisation of $p(\varphi_1|\rho_\oo')$, we need not in general maximise $\Delta S$ because this will lead to a greater amount of heat dissipation than necessary, as per \eq{Heat term}. The only cases where maximisation of $p(\varphi_1|\rho_\oo')$ necessarily leads to the maximisation of $\Delta S$  are when: (i) $p_{\varphi_1}^{\max}=1$;  and (ii) where $\h_\oo \simeq  \co^2$. In case (i) the entropy of the object is brought to zero, so $\Delta S$ is trivially maximised. In case (ii), we note that  if $\bm{o_1}^\downarrow \succ \bm{ o_2}^\downarrow$, where $\bm{o_1}^{\downarrow}$ and $\bm{o_2}^{\downarrow}$ are the  spectra of $\rho^1_\oo$ and $\rho_\oo^2$ respectively, then $S(\rho_\oo^1)\leqslant S(\rho_\oo^2)$. If we maximise $p(\varphi_1|\rho_\oo')$ in the case of $\oo$ being a two-level system, this will necessarily  minimise $p(\varphi_2|\rho_\oo')$, which in turn will result in the  spectrum of $\rho_\oo'$ to majorise all possible spectra. Consequently,  $S(\rho_\oo')$ will be minimised, and hence  $\Delta S$ will be maximised. 

However, one can always say that maximising the probability of information erasure requires that we minimise the min-entropy, $S_{\min}$, defined as 
\begin{equation}
S_{\min}(\rho):=\min_i\{-\log(p_i)\},
\end{equation}
where $\{p_i\}_i$ is the  spectrum of $\rho$ \citep{Tomamichel-thesis}. The min-entropy is clearly given by the largest value in the spectrum. To minimise the min-entropy, therefore, we must maximise the largest value in the spectrum; this is  the definition of maximising the probability of information erasure. 
\item The only instance where  $\h_\oo\simeq \h_\rr \simeq \co^d$, and $U_\mathrm{opt}^{\mathrm{p}}(0)$ for passive, maximally probable information erasure is a swap operation, is when $d=2$. For larger dimensions, this is no longer the case. 

\item It is evident that $\rho_\rr'$ is diagonal with respect to the eigenbasis of $H_\rr$, and that the spectrum of $\rho_\rr'$  is non-increasing with respect to the eigenvalues of $H_\rr$. In other words, $\rho_\rr'$ is a passive state. However,  its spectrum is majorised by that of $\rho_\rr(\beta)$. As such, by \corref{corollary-monotone-lattice-superadditive}, $\Delta Q \geqslant 0$. This conforms with Landauer's principle that information erasure must dissipate heat. 

\end{enumerate}

\subsection{The tradeoff between probability of information erasure and minimal heat dissipation}\label{trade-off relation}

 We would now like to relax the condition of maximising the probability of information erasure, and allow the error $\delta$ to take non-zero values. The question we would now like to ask is: how will the minimal achievable $\Delta Q$ be affected by varying $\delta$, and how may we then characterise the equivalence class of  optimal unitary operators $[ U_\mathrm{opt}^{\mathrm{p}}(\delta)]$ ?  The answer for the extremal cases is trivial; we have already addressed the case of $\delta=0$ in \sect{maximal-probability-optimal-unitary}, and when $\delta=p^{\max}_{\varphi_1} -  o_1^{\downarrow}$, then $[ U_\mathrm{opt}^{\mathrm{p}}(\delta)]=\one$ and $\Delta Q=0$.  In \app{proof trade-off relation} we prove that the  algorithm  of sequential swaps,  shown in \fig{sequential swap algorithm}, will result in a non-increasing sequence of errors, $\bm{\delta^\downarrow}:=\{\delta_j^\downarrow\}_j$, commensurate with a non-decreasing sequence of heat, $\bm{\Delta Q ^\uparrow}:= \{\Delta Q_j^\uparrow\}_j$.  For each error $\delta_j^\downarrow$, the associated value of heat $\Delta Q_j^\uparrow$ will be minimal. Furthermore, the marginal state of the object, $\rho_\oo'$, will always be passive. Each swap operation acts on a  subspace spanned by $\{\ket{\varphi_i},\ket{\varphi_j}\}\otimes \{\ket{\xi_k},\ket{\xi_l}\}$.  As the state is initially diagonal with respect to the basis $\{\ket{\varphi_l}\otimes \ket{\xi_m}\}_{l,m}$, and swap operations only permute the probabilities in the state's spectrum, the composite system will always be diagonal with respect to this basis at every stage of the algorithm.  
\begin{figure}[!htb]
\begin{tabular}{|c||c|}\hline
Step (1) & Set $i=2$ and $m=d_\rr$. \\\hline
Step (2) & Sequentially swap  $\ket{\varphi_1}\otimes \ket{\xi_{i}}$  with the vectors $\ket{\varphi_{l}}\otimes \ket{\xi_{m}}$ \\ &  with $l$ running from $d_\oo$ down through to $2$, only if $p_{1,i}< p_{l,m}$. \\\hline
Step (3) & If $m>1$, set $m= m-1$ and go back to Step (2). Else, proceed to Step (4). \\\hline
Step (4) & If $i <d_\rr$, set $i=i+1$, $m=d_\rr$, and go back to Step (2). Else, terminate.  \\\hline
\end{tabular}
\caption{A sequence of swap operations that results in a non-increasing sequence of errors, $\bm{\delta^\downarrow}$, commensurate with a non-decreasing sequence of minimal heat $\bm{\Delta Q^\uparrow}$. At each stage of the algorithm, the probability associated with the vector $\ket{\varphi_i}\otimes \ket{\xi_j}$ is denoted as $p_{i,j}$.}
\label{sequential swap algorithm}
\end{figure}

\begin{figure}[!htb]
\centering
\includegraphics[width=10 cm]{error-reset.eps}
\caption{(Colour online) The diagonal elements of $\rho':=U^\mathrm{p}_\mathrm{opt}(\delta)\rho {U^\mathrm{p}_\mathrm{opt}(\delta)}^\dagger$, for $\rho = \frac{1}{3} \one_\oo \otimes \rho_\rr(\beta)$, resulting in $p(\varphi_1|\rho_\oo')=p^{\max}_{\varphi_1}-\delta$.  $\Delta Q$ is minimised and $\rho_\oo'$ is  passive with the smallest average energy possible given this constraint.   Here $\h_\oo \simeq \h_\rr \simeq \co^3$, and  $\{\delta_j^{\downarrow}\}_j$ is a non-increasing sequence of errors. The elements inside a dashed circle (red online) are those which must be swapped to move from $\delta_j^{\downarrow}$ to $\delta_{j+1}^{\downarrow}$.   }
\label{error reset}
\end{figure}

\fig{error reset} depicts this process for the case where $\h_\oo\simeq\h_\rr \simeq \co^3$, with $\rho_\oo=\frac{1}{3}\one_\oo$. Here the diagonal entries of the density operator $\rho'$ are shown in each column, with the first column from the right representing the initial state, and the final column representing the case of passive, maximally probable information erasure. The algorithm for reducing error by increasing heat moves from  right to left, as shown by the arrows.  The elements surrounded by dashed circles, and coloured in red, are those which must be swapped to decrease $\delta$, with the resultant diagonal elements of the new state shown to the left.

To allow for a continuous change in $\delta$, we need to generalise the swap operation to an entangling swap. That is to say, for the vectors $\ket{\varphi_1}\otimes \ket{\xi_{i}}$ and $\ket{\varphi_{l}}\otimes\ket{\xi_m}$, and the real number $\gamma \in [0,1]$, we define
\begin{equation}
\mathrm{SW}_\gamma: \begin{cases}\ket{\varphi_1}\otimes \ket{\xi_{i}} \mapsto \sqrt{1-\gamma } \ket{\varphi_1}\otimes \ket{\xi_{i}} +\sqrt{\gamma}\ket{\varphi_{l}}\otimes\ket{\xi_m},  \\
\ket{\varphi_{l}}\otimes\ket{\xi_m} \mapsto  \sqrt{\gamma } \ket{\varphi_1}\otimes \ket{\xi_{i}} -\sqrt{1-\gamma}\ket{\varphi_{l}}\otimes\ket{\xi_m}. \\
\end{cases}
\end{equation} 
Therefore, $\mathrm{SW}_0=\one$ and as $\gamma \to 1$, $\mathrm{SW}_\gamma$ converges to the swap operation. Hence, for any error $\delta \in (\delta_j^{\downarrow}, \delta^{\downarrow}_{j+1})$, the optimal unitary operator $U_\mathrm{opt}^p(\delta)$ would be given by following the algorithm for discrete errors up to $\delta^{\downarrow}_j$, and then replacing the swap operation which would give the error $\delta^{\downarrow}_{j+1}$ with the entangling swap operation defined above, with an appropriate choice of $\gamma$. This will ensure for a continuous decrease in $\delta$ and a continuous increase in $\Delta Q$.

\section{Examples: Erasing a fully mixed qubit with maximal probability of success}\label{examples}
We shall now consider the erasure of a qubit, with Hilbert space $\h_\oo \simeq\co^2$.  We are also  interested in examining the   scenario where no a priori information about the state of the object is known; the probabilities $o_1^\downarrow$ and $o^\downarrow_2$ are both one-half. For simplicity, we make the substitution $d_\rr\equiv d$ for the dimension of the reservoir's Hilbert space. The action of the optimal unitary operator for passive, maximally probable information erasure, would therefore be such that the diagonal elements of $\rho'$, as depicted in Fig.~(\ref{optimal unitary schematic}(b)) and from top to bottom in decreasing order, are the probabilities $\bm{p^\downarrow}\equiv \{\frac{r_1^\downarrow}{2},\frac{r_1^\downarrow}{2},\dots, \frac{r_d^\downarrow}{2}, \frac{r_d^\downarrow}{2}\}$.     We will consider two models for the reservoir: 

\begin{enumerate}[(a)]

\item
\textbf{A  ladder system}.

The ground state of the system has an energy of zero, and for every $m$,
\begin{equation}
\<\xi_{m+1}| H_\rr |\xi_{m+1}\>-\<\xi_m|H_\rr |\xi_m\>=\omega.
\end{equation}
The  $m\ts{th}$  energy of such a system, in increasing order,   is given as $\lambda_m^\uparrow= \omega(m-1)$. The Hamiltonian  has the operator norm $\|H_\mathcal{\rr}\|=\lambda_d^\uparrow=\omega(d-1)$ which grows with $d$. In the limit as $d$ tends to infinity, this system will be a harmonic oscillator of frequency $\omega$, with a spectrum bounded from below by zero, and unbounded from above. 

\item \textbf{A chain of spin-half systems, with nearest-neighbour interactions, that are under a linear magnetic field gradient}

Here, the reservoir has the Hilbert space $\h_\rr=\bigotimes_{k=1}^N \h_k$, with $\h_k\simeq\co^2$ for all $k$. The Hamiltonian is 
\begin{equation}
H_\rr= \sum_{k=1}^N (k\Theta) \sigma_z^k + J\sum_{k=1}^{N-1} \sum_{a \in \{x,y,z\}}\sigma_a^k \otimes \sigma_a^{k+1},
\label{spin-chain Hamiltonian}
\end{equation}
where $\{\sigma_a|a \in \{x,y,z\}\}$ are the Pauli operators. The operator $\sigma_a^k$ acts nontrivially only on Hilbert space $\h_k$.  The parameters  $\Theta \in \re^+$ and $J\in \re^+$ represent, respectively, an effective magnetic field gradient in the $z$-axis and the nearest-neighbour spin-spin coupling strength. This Hamiltonian conserves the total magnetisation, $\sum_k \sigma_z^k$.

\end{enumerate}

 For each reservoir, we wish to determine how much heat is dissipated   in excess of the improved lower bound of Landauer's inequality, determined in \citep{Reeb-Wolf-Landauer},    given as
\begin{equation}
\Delta L:= \Delta Q - \frac{1}{\beta} \left(\Delta S + \frac{2(\Delta S)^2}{\log^2(d-1)+4} \right).\label{Landauer harmonic oscillator offset}
\end{equation} 
We use the simple form of this lower bound, which is not tight.

\subsection{Comparison of reservoirs given unitary evolution}\label{section blah}

\begin{figure}[!htb]
\centering
\subfigure[Ladder system, $\omega=1$ and varying $d$.]{\label{reset-landauer}
\includegraphics[width=8 cm]{reset-Landauer.eps}}
\subfigure[Spin chain, $J=\beta=1$,  and varying $\Theta$.]{\label{reset-spin-chain-local-field}
\includegraphics[width=8 cm]{reset-spin-chain-local-field.eps}}
\subfigure[Spin chain, $\Theta=\beta=1$,  and varying $J$.]{\label{reset-spin-chain-coupling}
\includegraphics[width=8 cm]{reset-spin-chain-coupling.eps}}
\subfigure[Spin chain, $\Theta=J=1$,  and varying $\beta$.]{\label{reset-spin-chain-temperature}
\includegraphics[width=8 cm]{reset-spin-chain-temperature.eps}}
\caption{ (Colour online)  dependence of $\Delta L$ and $p_{\varphi_1}^{\max}$ as a function of one parameter. (a) The reservoir is formed by a ladder system with energy spacing $\omega=1$.  (b)-(c) Here the reservoir is formed by a chain of $N$ spins with nearest-neighbour Heisenberg coupling $J$ and linear magnetic field gradient $\Theta$.      }
\label{Landauer offset}
\end{figure}

 \fig{reset-landauer} demonstrates the dependence of $\Delta L$ and $p_{\varphi_1}^\mathrm{max}$ on $\beta$ and $d$, when the reservoir is a ladder system with a fixed frequency $\omega=1$. \fig{reset-spin-chain-local-field}-\fig{reset-spin-chain-temperature} depict the dependence of $\Delta L$ and $p_{\varphi_1}^\mathrm{max}$ on $\Theta$, $J$, and $\beta$ when the reservoir is a spin chain of length $N$. When varying any of these, the other two are left constant at the value of one. We now make the following observations:

\begin{enumerate}[(a)]
\item When the reservoir is a ladder system,  an increase in  $d$ increases  $p_{\varphi_1}^\mathrm{max}$ and also, generally,  $\Delta L$, for all finite temperatures.  In the limit as $\beta$ tends to infinity, $\rho_\rr(\beta)=\pr{\xi_1}$ and $U_\mathrm{opt}^\mathrm{p}(0)$ effects a swap map in the subspace $\{\ket{\varphi_1},\ket{\varphi_2}\}\otimes \{\ket{\xi_1},\ket{\xi_2}\}$. As such, in this limit $p_{\varphi_1}^\mathrm{max}$ and $\Delta Q$ tend to unity and $\omega/2$ respectively.  
\item When the reservoir is a spin chain, as $N$ increases, so does $p_{\varphi_1}^{\max}$. This can be explained by noting that $p_{\varphi_1}^{\max}$ grows with $\lambda_d^\uparrow-\lambda_1^\uparrow$, which is always greater than or equal to $2 \Theta \sum_{k=1}^N k$.
\item For spin chains of even length, in the limit of large $J$,  $\Delta L$ quickly diverges. However, there is some critical value of $J$ for odd-length chains such that an increase in $J$ beyond this drastically reduces the rate at which    $\Delta L$ increases. 
\item The best case scenario is when the reservoir is a long chain, with $\Theta < J,\beta$ and $J \sim \beta$.  For example, for a chain of eleven spins, with $\Theta=0.25$, and $J=\beta=1$, we get $p_{\varphi_1}^{\max}\approx 1$ while $\Delta L \approx 0.12$. Compare this with the case where the reservoir is given by a ladder system of dimension $d=2^{11}$ and $\beta=1$. Here, in order to achieve the same value of $p_{\varphi_1}^{\max}$, realised when $\omega \approx 0.1$, we get $\Delta L \approx 0.29$.  
\end{enumerate} 

\subsection{Comparison of reservoirs under energy-conserving, Markovian dephasing channels}

\begin{figure}[!htb]
\subfigure[Spin chain, $\Theta=J=\beta=1$,  and varying $\Gamma$.]{\label{reset-spin-chain-dephasing}
\includegraphics[width=8 cm]{reset-spin-chain-dephasing.eps}}
\subfigure[Ladder system, $\omega=1$, $\beta=5$, and varying $\Gamma$.]{\label{reset-landauer-dephasing}
 \includegraphics[width=8 cm]{reset-dephasing.eps}}

\subfigure[Ladder system, $\beta=5$ and $\omega=\Gamma=1$.]{\label{dephasing-heat-dimension}
\includegraphics[width=8 cm]{dephasing-heat-dimension.eps}}
\subfigure[Ladder system, $\beta=5$ and $\omega=\Gamma=1$.]{\label{dephasing-probability-dimension}
\includegraphics[width=8 cm]{dephasing-probability-dimension.eps}}
\caption{ (Colour online)  (a) and (b)  show the effect of dephasing rate $\Gamma$  on   $ \Delta Q $ and $p(\varphi_1|\rho_\oo')$. The system is evolved for time $\tau=1$.  (a) The reservoir is given by a spin chain of length $N$ and where all the parameters are set to one. (b) The reservoir is given by a ladder system, with energy spacing  $\omega=1$,  and inverse temperature $\beta = 5$.  (c) and (d) show, respectively, the  effect of ladder system dimension $d$ on $\Delta Q$ and $p(\varphi_1|\rho_\oo')$, at a constant value of $\Gamma =1$. It appears that for dimensions $d=2^n$, with $n \in \mathds{N}$ and $n\geqslant 2$,  $\Delta Q$ is smaller  than that of all larger dimension values, while $p(\varphi_1|\rho_\oo')$ take the largest global values. In other words, the ladder system is most robust to energy conserving, Markovian dephasing, when it is dimensionally equivalent to a spin chain.   }
\end{figure}

Before this juncture, we have considered the active element of erasure -- the unitary operator -- as  a bijection between two orthonormal basis sets. To consider this as a bona fide dynamical process we must conceive of the time-ordered  sequence $\{H_{k} |k \in \{1,\dots,N\}\}$, where $H_{k}$ is the Hamiltonian of the composite system $\oo+\rr$  in the time period $t \in (t_{k-1}, t_{k})$.  If the system is thermally isolated, then this will be accompanied by the time-ordered sequence of unitary operators $\{U_{\Delta t_k}=e^{-i \Delta t_k H_k}|k \in \{1,\dots, N\}\}$
where the time duration is defined as $\Delta t_k:= t_{k}-t_{k-1}$. The time-ordered application of these results in the unitary operator $U_\tau$, where $\tau = t_N-t_0$, which determines the total evolution of the system.  
 If we identify $H_0:=H_\oo  +  H_\rr$ as the Hamiltonian of the system at times  prior to $t_0$ and posterior to $t_N$,  whereby  the new sequence of Hamiltonians can be aptly called a \emph{Hamiltonian cycle}, then 
\begin{equation}
\Delta Q = \tr[H_\rr (\tr_\oo[U_\tau \rho U_\tau^\dagger]- \rho_\rr(\beta))],
\end{equation}   
will refer to the amount by which the average energy of the reservoir, at times $t>t_N$, will be greater than that at times $t<t_0$, and will have the same meaning as the heat term in \eq{energy change of battery}.   Implicit in this framework is the notion that changing the Hamiltonian acting on the system will take energy from, or put energy into, a work storage device which we do not account for explicitly.   If  a non-unitary evolution is effected, however, we cannot in general make such an identification. This is because a general completely positive, trace preserving map can always be conceived, via Stinespring's dilation theorem \citep{Stinespring}, as resulting from a unitary evolution on the system coupled with an environment. Indeed, the energy consumption in such a case will be determined by the total Hamiltonian of the system plus the environment. If energy is allowed to flow between the system and environment, then the energy increase of $\rr$ (plus the energy increase in $\oo$) will not be identical to the energy consumed from the work storage device; $\Delta Q$ may be less or greater than the energy lost. 

 The only exception to this rule is when the unitary evolution between system and environment conserves the energy of the two individually, whereby no energy is transferred amongst them. This will result in the system to undergo pure dephasing with respect to the (time-local) Hamiltonian eigenbasis; we refer to such a generalised evolution as an energy conserving one. The simplest realisation of such a scenario would require us to  consider the sequence of Hamiltonians to be accompanied by the time-ordered sequence of  super-operators $\{e^{\Delta t_k \louv_k}| k \in \{1,\dots, N\}\}$, with the Liouville super-operators $\louv_k$ defined as
\begin{equation}
\louv_k: \rho \mapsto i[\rho,H_k]_-+\Gamma\sum_{n=1}^{d_\oo d} \left(\pr{\phi_n^k} \rho \pr{\phi_n^k} -\frac{1}{2}[\rho, \pr{\phi_n^k}]_+ \right),\label{reset Lindblad}
\end{equation}
where $\{\ket{\phi_n^k}\}_n$ is the eigenbasis of $H_k$, while  $[\cdot,\cdot]_-$ and $[\cdot,\cdot]_+$ are the commutator and anti-commutator respectively, and $\Gamma \in [0,\infty)$ is the dephasing rate. In each time period $t \in (t_{k-1},t_k)$ the system evolves as $\rho \mapsto (e^{\Delta t_k\louv_k})(\rho)$ while conserving $H_k$; the system evolves by energy conserving, Markovian dephasing channels. As such channels are unital, they will cause the consequent heat dissipation to increase in proportion to the entropy reduction in the object; energy conserving, Markovian dephasing will be detrimental to the erasure process \citep{Reeb-Wolf-Landauer}.

For our two models,  we will  consider the simplest Hamiltonian cycle where the sequence of Hamiltonians sandwiched by $H_0$ is the singleton $\{H_1\}$. Furthermore, we set  $U_\tau=e^{-i \tau H_1}$ to equal $U_\mathrm{opt}^\mathrm{p}(0)$, as determined by the sequential swap algorithm given in \sect{trade-off relation},   when $\tau=1$.  
 Now, we let the system evolve instead  as  $\rho\mapsto(e^{\tau\louv_1})(\rho)$. By again evolving the system for a period of $\tau=1$, we may ascertain how such an environmental interaction affects both the probability of qubit erasure, and the  heat dissipation.

\fig{reset-spin-chain-dephasing} shows the effect of dephasing on the erasure process, when the reservoir is a spin chain of length $N \in \{2,3,4,5\}$, with $\Theta=J=\beta=1$. Similarly, \fig{reset-landauer-dephasing} shows the effect of dephasing on the erasure process, when the reservoir is a ladder system with $\omega=1$ and $\beta=5$. In both instances, an increase in $\Gamma$ results in a  decrease in $p(\varphi_1|\rho_\oo')$ and, with the exception of $N=2$ and $d \leqslant 4$, an increase in $\Delta Q$.   However, not all ladder dimensions $d$, or spin chain lengths $N$, are affected the same way.

We note that when the two reservoirs are dimensionally equivalent, i.e., when  the ladder system has dimensions $d \in \{2^2,2^3,2^4,2^5\}$, commensurate with spin chains of length $N \in \{2,3,4,5\}$,  they display the same behaviour under energy conserving, Markovian dephasing channels. This is because the generator of their evolution, the Liouville super-operator $\louv_1$, is the same in such cases. In both instances, an increase in dimension leads to an increase in $\Delta Q$, while the   probability of qubit erasure increases as we move from $d= 2^2$ to $d=2^3$, decreasing again as we increase further  still to $d=2^4$ and $d=2^5$.  

What is most striking, however, is that the ladder system seems to perform the best precisely when it is dimensionally equivalent to a spin chain.  Consider \fig{dephasing-heat-dimension} and \fig{dephasing-probability-dimension}. Here, $\Delta Q$ and $p(\varphi_1|\rho_\oo')$  are calculated for dimensions $d \in \{2,\dots,32\}$, while keeping all other parameters constant. For dimensions $d \in \{2^2,2^3,2^4,2^5\}$,  we observe that $\Delta Q$ is smaller than that for all larger $d$, while $p(\varphi_1|\rho_\oo')$ attain the largest global values.  As such, we make the following conjecture: 

\
\begin{con}
Let the reservoir be given by a  $d$-dimensional ladder system with a constant energy spacing $\omega$. In the presence of energy conserving, Markovian dephasing, reservoirs with  $d=2^n$, with $n \in \mathds{N}$ and $n\geqslant 2$, allow for the largest global probabilities of qubit erasure while, at the same time, dissipating less heat than all such reservoirs of larger dimension.
\end{con}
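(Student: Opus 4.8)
The plan is to first make the dissipative dynamics of \eq{reset Lindblad} explicitly solvable, then reduce both figures of merit to finite combinatorial sums over the cycles of the optimal permutation, and finally argue that those sums are extremal precisely at power-of-two dimensions. The generator splits as $\louv_1=\uu+\Gamma\dd$, where $\uu(\sigma):=i[\sigma,H_1]_-$ generates conjugation by $e^{-i\tau H_1}$ and $\dd(\sigma):=\sum_n\pr{\phi_n^1}\sigma\pr{\phi_n^1}-\sigma$ is the full dephasing generator in the eigenbasis of $H_1$ (using $\sum_n[\sigma,\pr{\phi_n^1}]_+=2\sigma$). Since $\dd$ dephases in the eigenbasis of $H_1$ it acts diagonally on matrix elements in that basis, and so does $\uu$; hence $\uu$ and $\dd$ commute and $e^{\tau\louv_1}=e^{\tau\uu}\circ e^{\tau\Gamma\dd}$ with commuting factors. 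Setting $\tau=1$, writing $U:=U_\mathrm{opt}^{\mathrm p}(0)=e^{-iH_1}$ and $q:=e^{-\Gamma}\in(0,1]$, and using that $U$ is diagonal in the $H_1$-eigenbasis so that $U\,\Delta_{H_1}(\rho)\,U^\dagger=\Delta_{H_1}(\rho)$ for the pinching $\Delta_{H_1}$ onto that basis, one gets
\begin{equation}
\rho'(q)=q\,U\rho\,U^\dagger+(1-q)\,\Delta_{H_1}(\rho),\qquad \rho=\tfrac12\one_\oo\otimes\rho_\rr(\beta).
\end{equation}
Therefore $p(\varphi_1|\rho_\oo')=q\,p^{\max}_{\varphi_1}+(1-q)\,p^{\infty}$ and $\Delta Q=q\,\Delta Q^{(0)}+(1-q)\,\Delta Q^{\infty}$, where $(0)$ and $\infty$ label the $\Gamma\to0$ and $\Gamma\to\infty$ values; the whole $\Gamma$-dependence is now explicit, so the conjecture reduces to a statement about the dimension dependence of $p^{\infty}$ and $\Delta Q^{\infty}$, which dominate once $q$ is small, i.e.\ the dephasing strong enough (recall $p^{\max}_{\varphi_1}$ grows with $d$ and $\Delta Q^{(0)}$ is governed by \eq{Landauer-equality}).

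Next I would analyse the strong-dephasing limit as a cycle average. The operator $U$, produced by the sequential-swap algorithm of \fig{sequential swap algorithm}, permutes the product basis $\{\ket{\varphi_l}\otimes\ket{\xi_m}\}$; let $\{C\}$ be its disjoint-cycle decomposition. On a cycle $C$ of length $\ell_C$ the eigenvectors of $U$ are the discrete Fourier modes over $C$, and because $\rho$ is diagonal in the product basis, every such eigenvector carries the same diagonal weight $\bar p_C:=\ell_C^{-1}\sum_{\ket v\in C}\bra v\rho\ket v$. Hence $\Delta_{H_1}(\rho)=\sum_C\bar p_C\,\Pi_C$, with $\Pi_C$ the projector onto the span of $C$: the $\Gamma\to\infty$ state is again diagonal in the product basis and replaces every product-basis weight by the average over its cycle. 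This yields $p^{\infty}=\sum_C\bar p_C\,\#\{\ket{\varphi_1}\otimes\ket{\xi_m}\in C\}$ and $\beta\Delta Q^{\infty}=\sum_C\bar p_C\!\!\sum_{\ket{\varphi_l}\otimes\ket{\xi_m}\in C}\!\!\beta\lambda_m^{\uparrow}-\beta\tr[H_\rr\rho_\rr(\beta)]$, finite expressions determined entirely by the cycle structure of $U$ and the ladder energies $\lambda_m^{\uparrow}=\omega(m-1)$.

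The combinatorial core is then to determine the cycle structure of the swap-algorithm output for the qubit object ($o_1^{\downarrow}=o_2^{\downarrow}=\tfrac12$, so $\bm{p}^{\downarrow}$ consists of the values $r_k/2$, each appearing twice, with $r_k=e^{-\beta\omega(k-1)}/\tr[e^{-\beta H_\rr}]$) and the $d$-dimensional ladder, and to show it is ``balanced'' exactly when $d=2^n$. Concretely I would (i) establish a doubling recursion relating the decomposition at dimension $d$ to that at $\lfloor d/2\rfloor$, which the two-fold repetition of every entry of $\bm{p}^{\downarrow}$ strongly suggests, and (ii) show that among all dimensions $d'\geqslant 2^n$, the decomposition at $d'=2^n$ simultaneously maximises $\sum_C\bar p_C\,\#\{\ket{\varphi_1}\otimes\ket{\xi_m}\in C\}$ and minimises $\sum_C\bar p_C\sum_m\lambda_m^{\uparrow}[\ket{\varphi_l}\otimes\ket{\xi_m}\in C]$. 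The heuristic is that at a power of two the relocation targets in \eq{optimal U equation} nest perfectly into the ``bottom half'' $\{\ket{\varphi_2}\otimes\ket{\xi_m}\}$, so no cycle is forced to bridge distant energy levels, whereas any other dimension forces at least one long cycle that both leaks probability out of $\ket{\varphi_1}$ and raises the averaged reservoir energy under cycle averaging; combining this with the affine $q$-dependence of Step~1 then gives the conjecture for $\Gamma$ above a threshold.

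The hard part is exactly (i)--(ii). The swap condition $p_{1,i}<p_{l,m}$ depends on the geometric spectrum, so the cycle structure is not a function of $d$ alone, and one must show the residual $\beta\omega$-dependence does not spoil extremality at $d=2^n$; above all, the conjecture quantifies over \emph{all} larger dimensions, so one needs a monotonicity or uniform-bound argument rather than the finite check behind \fig{dephasing-heat-dimension} and \fig{dephasing-probability-dimension} --- in particular one must show the growth of $p^{\max}_{\varphi_1}$ with $d$ (which favours large $d$ when $q$ is near $1$) is outweighed by the cycle-averaging penalty at the fixed $\Gamma$ of interest. A secondary subtlety is that the statement is tied to the specific representative $U_\mathrm{opt}^{\mathrm p}(0)$ output by the swap algorithm: a different member of $[U_\mathrm{opt}^{\mathrm p}(0)]$ has a different cycle structure and dephases differently, so the argument must track that particular dynamical implementation throughout.
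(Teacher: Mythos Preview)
The statement you are attempting to prove is presented in the paper as a \emph{conjecture}, supported only by the numerical evidence of \fig{dephasing-heat-dimension} and \fig{dephasing-probability-dimension}; the paper offers no proof. There is therefore nothing to compare your proposal against, and any complete argument would constitute new work.

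Your first reduction is correct and genuinely useful: the generators $\uu$ and $\dd$ are both diagonal on matrix units in the $H_1$-eigenbasis and hence commute, so the evolved state is exactly the affine interpolation $\rho'(q)=q\,U\rho U^\dagger+(1-q)\,\Delta_{H_1}(\rho)$ with $q=e^{-\Gamma}$, and both figures of merit are affine in $q$. The cycle-average formula $\Delta_{H_1}(\rho)=\sum_C\bar p_C\,\Pi_C$ is also correct, with one caveat you touch on but should make explicit: it presumes the eigenbasis $\{\ket{\phi_n^1}\}$ used in \eq{reset Lindblad} is the DFT basis on each cycle separately. Cycles of commensurate lengths give coinciding eigenvalues of $U$ (in particular every cycle contributes the eigenvalue $1$), so $H_1$ is degenerate and the Lindblad dephasing depends on the choice of eigenbasis within those degenerate subspaces; the paper does not fix this choice, so your analysis applies to one natural implementation but not to all.

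As you candidly say, the combinatorial core remains open. You have not established the doubling recursion (i), and the heuristic that power-of-two dimensions ``nest perfectly'' is not yet an inequality. More seriously, even granting extremality of $p^{\infty}$ and $\Delta Q^{\infty}$ at $d=2^n$, the affine interpolation means the $q\,p_{\varphi_1}^{\max}$ and $q\,\Delta Q^{(0)}$ contributions (both monotone in $d$) can overturn the ordering unless you control the competition quantitatively at the fixed $\Gamma$ of interest --- and the conjecture as stated does not restrict to large $\Gamma$. Your proposal is a sound framework that correctly isolates where the difficulty lies; it is not yet a proof, and the paper does not supply one either.
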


\subsection{Full erasure of a  qudit with a  harmonic oscillator}\label{qudit erasure calculations}

Here, we expound on the example of using a ladder system as a reservoir, but consider what happens as we take the limit of infinitely large $d$. In this limit we may call the ladder system a harmonic oscillator. Let us first consider the case where the object is a qudit, with Hilbert space $\h_\oo \simeq \co^{d_\oo}$, prepared in the maximally mixed state
\begin{equation}
\rho_\oo= \frac{1}{d_\oo}\sum_{l=1}^{d_\oo} \pr{\varphi_l}.
\end{equation}

In \app{qudit erasure calculations appendix} we show that the heat dissipation when the reservoir is a  harmonic oscillator is
\begin{align}
 \lim_{d \to \infty}\Delta Q &=  \frac{\omega(d_\oo-1)}{2}\coth\left(\frac{\beta \omega}{2}\right)> \frac{(d_\oo-1)}{\beta}.
\end{align}
 $\Delta Q$ approaches $(d_\oo-1)k_B T$ in the limit as $\omega$ becomes vanishingly small, whereby the spectrum of $H_\rr$ will be approximately continuous.

\begin{figure}[!htb]
\centering
\includegraphics[width=8 cm]{log-base.eps}
\caption{ The difference between $\beta\Delta Q$ and $\Delta S$, denoted $\underline{\underline \Delta }$, as a function of the initial bias in the qubit state, $q$. The two coincide only in the trivial case of $q =1$, commensurate with $\Delta S= \Delta Q=0$.   }
\label{Log-base}
\end{figure}

Now let us focus on the case where the object is a qubit, but with an initial bias in its spectrum:  \begin{equation}
\rho_\oo = q \pr{\varphi_1}+(1-q) \pr{\varphi_2} \ , \ q \in \left[\frac{1}{2},1\right).
\end{equation} 
 In \app{general qubit erasure appendix} we show that, in the limit as $\omega$ tends to zero,  $\Delta Q$ will  be
\begin{align}
\Delta Q &= \frac{2q(1-q)\log(\frac{q}{1-q})}{\beta(2q -1)}.
\end{align}
 In the limit as $q$ tends to one-half, $\Delta Q$ approaches $k_B T$ as in our previous analysis.  The concomitant entropy reduction is, of course, always 
\begin{equation}
\Delta S=q \log\left(\frac{1}{q}\right)+ (1-q) \log\left( \frac{1}{1-q}\right).
\end{equation}
By defining the function
\begin{equation}
\underline {\underline\Delta}:= \beta \Delta Q - \Delta S, \end{equation}
as shown in \fig{Log-base}, it is evident that except for the trivial case of $q=1$, commensurate with $\Delta S=\Delta Q=0$, the heat dissipation will exceed Landauer's limit.

\section{Information erasure beyond Landauer's framework}\label{resetting with correlations}

In \sect{optimal unitary section} the  setup for information erasure  had the compound system  of object and thermal reservoir -- our system of interest -- as a thermally isolated quantum system whose constituent parts are initially uncorrelated. The system then undergoes a cyclic process described by a unitary operator, and the average energy increase of the reservoir is defined as heat. Indeed, these are the basic assumptions under which Landauer's principle holds.   To achieve heat dissipation lower than that discussed in \sect{optimal unitary section} we must operate outside of Landauer's framework by abandoning some of these assumptions.  However,  dissipating less heat than Landauer's limit will become meaningless if there is no referent of heat or temperature in the mathematical model. As such, if we wish to avoid making category errors, there are restrictions on the ways in which we may change our assumptions. That is to say, the model must continue to involve a  system that is initially prepared in a Gibbs state that is uncorrelated from any other system considered.  This way, the system has a well-defined temperature, and we may continue to consider its energy increase  as heat. In addition, the process must still be cyclic, i.e., the Hamiltonian of the total system -- in particular the thermal system -- must be the same at the end of the process, as it was at the beginning. If this condition is not satisfied,  we may observe any value of heat we desire by appropriately changing the final Hamiltonian.   

One option available is to move beyond unitary evolution.  This can be achieved by introducing an  auxiliary system to the setup introduced in \sect{optimal unitary section} so that the unitary evolution of the totality results in the object and reservoir to evolve non-unitarily; the auxiliary system must also have a trivial Hamiltonian, proportional to the identity, for the resultant decrease in $\Delta Q$ to always translate to a decrease in energy consumption.   Although the reservoir must always be uncorrelated from the other subsystems for it to be thermalised relative to them \citep{relative-thermalisation}, the auxiliary system and object may  have initial correlations. Unless these correlations are classical, then the resulting dynamics of the object plus reservoir subsystem would cease to be described by completely positive maps  \cite{CP-map-discord,vanishing-discord-CP-map}.  

The other option available is to first consider  a system that  is in a thermal state and, therefore, has a temperature.  Subsequently, the system may be (conceptually) partitioned into two correlated subsystems, with one of them taking the role of the object.  The energy generation due to information erasure of the object, of course, must then be determined over the total system itself. This is because the subsystems do not have well defined Hamiltonians. Although  there is technically no  thermal reservoir to speak of, since the total system was initially thermal, the average energy change thereof may still be called heat in a consistent manner as before.

\subsection{Information erasure with the aid of an auxiliary system}\label{information erasure with an auxiliary system}

\begin{figure}[!htb]
\centering
\includegraphics[width=15 cm]{auxiliary.eps}
\caption{ The augmentation of the basic setup by the inclusion of a third, auxiliary system $\aa$  with Hilbert space $\h_\aa\simeq\co^{d_\aa}$. As before, the reservoir is initially in a thermal state and uncorrelated from the rest of the system. The initial state of the object and auxiliary, however, may or may not be correlated.  }
\label{auxiliary}
\end{figure}

Consider a  system composed of: the object, $\oo$, with Hilbert space $\h_\oo \simeq \co^{d_\oo}$; the auxiliary system, $\aa$, with Hilbert space $\h_\aa \simeq \co^{d_\aa}$; and the thermal reservoir, $\rr$, with Hilbert space $\h_\rr \simeq \co^{d_\rr}$.   Let the initial state of the system be $\rho \otimes \rho_\rr(\beta)$, with  $\rho$ the state of $\oo+\aa$ and $\rho_\rr(\beta)$  the Gibbs state of the thermal reservoir. This setup is represented diagrammatically in \fig{auxiliary}. We may (probabilistically) prepare the object in a pure state by conducting a cyclic process on the total system, characterised by a unitary operator, as before.  By letting the Hamiltonian of the auxiliary system, $H_\aa$, be proportional to the identity, we may ensure that  the total energy consumption due to this process would be accounted for by the energy change of the object and thermal reservoir alone. As before, the energy change of the thermal reservoir, $\Delta Q$, is heat. 
 
In the extreme case, we may consider that the unitary operator  acts non-trivially only on the object plus auxiliary subsystem; the thermal reservoir will thus not be involved, and no heat will be dissipated. We would like to know what the necessary and sufficient conditions for fully erasing the object would be in this case. The mapping $\rho_\oo \mapsto \rho_\oo':= \tr_\aa[U \rho U^\dagger]$, with $U $ acting on $\h_\oo\otimes \h_\aa$, will fully erase $\oo$ into the pure state $\ket{\varphi
_1}$ if and only if  
\begin{equation}
U\rho U^\dagger= \pr{\varphi_1}\otimes \sum_{n=1}^{ R_\aa \leqslant d_\aa} q_n^\downarrow\pr{\phi_n},
\end{equation}
where $R_\aa$ is the rank of $\rho_\aa'$ and, hence, the rank of $\rho$.  The class of states that allow for such a transformation can, without loss of generality, be represented as \begin{equation}
\rho = \sum_{n=1}^{R_\aa\leqslant d_\aa} q_n^\downarrow U^\dagger(\pr{\varphi_1} \otimes \pr{\phi_n})U.
\end{equation}
Therefore, a necessary and sufficient condition for full information erasure by  unitary evolution, without using the thermal reservoir, is for the rank of $\rho$ to be less than, or equal to, $d_\aa$.  
To see how correlations between $\oo$ and $\aa$ come into play, consider the simple case where $d_\oo=d_\aa=2$ and, for $\lambda \in (1/2,1)$, the following states:
\begin{align}
\rho_{\mathrm{u.c.}}&=(\lambda\pr{\varphi_1}+(1-\lambda)\pr{\varphi_2})\otimes \pr{\phi_1},\nonumber \\
\rho_{\mathrm{c.c.}}&=\lambda \pr{\varphi_1}\otimes \pr{\phi_1}+(1-\lambda)\pr{\varphi_2}\otimes \pr{\phi_2}, \nonumber \\
\rho_{\mathrm{q.d.}}&=\lambda \pr{\varphi_1}\otimes \pr{\phi_1}+(1-\lambda)\pr{\varphi_2}\otimes \pr{\phi_+}, \ \ \ \ket{\phi_\pm}:=\frac{1}{\sqrt{2}}(\ket{\phi_1}\pm \ket{\phi_2}),\nonumber \\
\rho_{\mathrm{p.e.}}&=\pr{\psi}, \ \ \ \ket{\psi}=\sqrt{\lambda}\ket{\varphi_1}\otimes \ket{\phi_1}+\sqrt{1-\lambda}\ket{\varphi_2}\otimes \ket{\phi_2}. 
\end{align}
All of these have a rank of at most 2, and   the  reduced state $\rho_\oo= \lambda\pr{\varphi_1}+(1-\lambda)\pr{\varphi_2}$ which, with an appropriate unitary operator, can be fully erased to $\pr{\varphi_1}$. Each state, however, falls under a different class of correlations: $\rho_{\mathrm{u.c.}}$ is uncorrelated, $\rho_{\mathrm{c.c.}}$ is classically correlated,  $\rho_{\mathrm{q.d.}}$ has quantum discord, and   $\rho_{\mathrm{p.e.}}$ is a  pure entangled state. The only case where the state of $\aa$ is also left intact, however, is when the two systems are classically correlated. Notwithstanding, this cannot be seen as allowing for $\aa$ to act as a catalyst for information erasure. For $\aa$ to be utilised in the information erasure of another object system, with the same unitary operator, the two must first be correlated; this process  will have a thermodynamic cost itself \citep{correlation-thermo-cost}. In the case where $\oo$ and $\aa$ are in a pure entangled state, the unitary operator which prepares $\oo$ in a pure state will also prepare $\aa$  in a pure state. As discussed in \citep{negative-entropy}, this will allow for $\rr$ to be cooled by transferring entropy from it to $\aa$, resulting in a negative $\Delta Q$.  

In either scenario, the initial   state $\rho$ on the composite system of $\oo+\aa$, which has a rank smaller than $d_\aa$, can be seen as a thermodynamic resource. This is  because it is a system that is highly out of equilibrium. Recall that the Hamiltonian of $\aa$ is considered to be trivial, being proportional to the identity. As such, if this system was also at thermal equilibrium with the inverse temperature $\beta$, then we would have $\rho=\rho_\aa(\beta)\otimes \rho_\oo=\frac{1}{d_\aa} \one_\aa \otimes \rho_\oo$. Any unitary operator acting on such a system would not be able to increase the largest eigenvalue of $\rho_\oo$. As such, information erasure would not be possible.

In the case where the rank of $\rho$ is greater than $d_\aa$, but smaller than $d_\oo d_\aa$, the reservoir may be used to facilitate information erasure using similar arguments as in \sect{optimal unitary section}. This will allow for a  larger $p_{\varphi_1}^{\max}$, and a smaller consequent $\Delta Q$,   than if $\aa$ was not present.

\subsection{Object  as a component of a thermal system}

\begin{figure}[!htb]
\centering
\includegraphics[width=10 cm]{reset-corellated.eps}
\caption{(Colour online) Optimal information erasure of system $\oo$, with the composite system of $\oo+\kk$  initially in a thermal state. As the entanglement in the eigenvectors of the Hamiltonian vanishes, where $\gamma \to 1$, both $\Delta Q$ and $\Delta S$ decrease. This is done, however, in such a way that   $ \Delta Q - \frac{1}{\beta}\Delta S$ becomes negative at intermediate temperatures, thereby resulting in the ``violation'' of Landauer's limit.  }
\label{correlated-reservoir}
\end{figure}      

Consider a system composed of an object, $\oo$, with Hilbert space $\h_\oo\simeq \co^{d_\oo}$, and some other system, $\kk$, with Hilbert space $\h_\kk\simeq \co^{d_\kk}$.  The composite system has the Hamiltonian 
\begin{equation}
H=\sum_{n=1}^{d_\oo d_\kk} \lambda_n^\uparrow \pr{\xi_n}. 
\end{equation}
Let the initial state of the system be in the thermal state $\rho(\beta)=e^{-\beta H}/\tr[e^{-\beta H}]\equiv\sum_n p_n^\downarrow \pr{\xi_n}$ with the non-increasing vector of probabilities $\bm{p^\downarrow}:=\{p_n^\downarrow\}_n$. We wish to prepare the subsytem $\oo$ in the pure state $\ket{\Psi}$ by some cyclic process characterised by a unitary operator $U$ acting on $\h_\oo \otimes \h_\kk$. By \lemref{maximum-probability} the maximal probability of achieving this is accomplished by choosing $U$ from an equivalence class of unitary operators $[U_\mathrm{maj}]$ characterised by the rule 
\begin{equation}
U_\mathrm{maj}:\begin{cases}\ket{\xi_n} \mapsto \ket{\Psi}\otimes \ket{\phi_j} & \text{ if } n \in \{1,\dots,d_\kk\}, \\
\ket{\xi_n} \mapsto \ket{\nu_k} & \text{ if } n \notin \{1,\dots,d_\kk\}. \\
\end{cases}\label{correlated systems maximal information erasure unitary}
\end{equation}
  Each of the vectors in $\{\ket{\nu_k}\in \h_\oo \otimes \h_\kk\}_k$ are orthogonal to those in $\{\ket{\Psi}\otimes \ket{\phi_j}\}_j$, so that the union thereof forms an orthonormal basis that spans $\h_\oo\otimes \h_\kk$. As the system was initially thermal,   the gain in its average energy is heat, which obeys the identity
\begin{align}
\Delta Q:= \tr[H(\rho'-\rho(\beta))]&=\frac{ S(\rho'\|\rho(\beta))+S(\rho')-S(\rho(\beta))}{\beta}= \frac{1}{\beta}S(\rho'\| \rho(\beta)). \end{align}
Here, we make the substitution $\rho':=U \rho(\beta) U^\dagger$.  As unitary evolution does not alter the von Neumann entropy, this energy production is a function of the relative entropy alone;
$\Delta Q$ is therefore nonnegative and  independent of $\Delta S:= S(\rho_\oo)- S(\rho_\oo')$. To determine how $\Delta Q$ can be minimised, we write $S(\rho'\| \rho(\beta))$ in the alternative way
\begin{align}
S(\rho'\| \rho(\beta)) &=  \sum_{n=1}^{d_\oo d_\kk} q^U_n \log\left(\frac{1}{p_n^\downarrow}\right)-S(\rho'),\nonumber \\
&= \bm{q^U} \cdot \bm{\mathrm{log}_p^\uparrow} - S(\rho(\beta)),
\label{relative entropy term}\end{align} 
where $\bm{q^U}$ is a vector of real numbers
\begin{equation}
q^U_n:=\sum_{m=1}^{d_\oo d_\kk}p_m^\downarrow|\<\xi_n|U|\xi_m\>|^2,\label{relative entropy term probabilities}
\end{equation}
and $\bm{\mathrm{log}_p^\uparrow}:= \{\log(1/p_n^\downarrow)\}_n$ a non-decreasing vector. In \app{Object as a component of a thermal system proof} we prove that to minimise $\Delta Q$, after having maximised the probability of preparing $\oo$ in the pure state $\ket{\Psi}$, we must choose the unitary operator $U \in [U_\mathrm{maj}^1] \subset [U_\mathrm{maj}]$  so that $\bm{q^U}$ is non-increasing  with respect to the energy eigenvalues of $H$, and that it majorises all such possible vectors. If we are free to choose what Hamiltonian to construct for the system, then the heat dissipation may be further minimised by choosing the eigenvectors of $H$, that have support on $\{\ket{\Psi} \otimes \ket{\phi_j}\}_j$, to be chosen from this set itself. In other words, the optimal value of $\Delta Q$ is achieved when the eigenvectors of $H$ are uncorrelated with respect to the $\oo:\kk$ partition.

Let us  consider a simple example, where $d_\oo=d_\kk=2$, and the eigenvectors of $H$ are given as
\begin{align}
\ket{\xi_1}&=\sqrt{\gamma_+}\ket{\varphi_1}\otimes\ket{\phi_1} +\sqrt{1-\gamma_+}\ket{\varphi_2}\otimes\ket{\phi_2}, \nonumber \\
\ket{\xi_2}&=\sqrt{1-\gamma_+}\ket{\varphi_1}\otimes\ket{\phi_1} -\sqrt{\gamma_+}\ket{\varphi_2}\otimes\ket{\phi_2},  \nonumber\\
\ket{\xi_3}&=\sqrt{\gamma_-}\ket{\varphi_1}\otimes\ket{\phi_2} +\sqrt{1-\gamma_-}\ket{\varphi_2}\otimes\ket{\phi_1}, \nonumber\\
\ket{\xi_4}&=\sqrt{1-\gamma_-}\ket{\varphi_1}\otimes\ket{\phi_2} -\sqrt{\gamma_-}\ket{\varphi_2}\otimes\ket{\phi_1}. \end{align} 
Moreover, let   $\lambda_1^\uparrow=0$ and $\lambda_{n+1}^\uparrow-\lambda_n^\uparrow=1$ for all $n$. Conforming with \eq{correlated systems maximal information erasure unitary}, the unitary operator
\begin{align}
U:\begin{cases} 
\ket{\xi_1} \mapsto \ket{\Psi} \otimes \ket{\phi'_1},  \\
\ket{\xi_2} \mapsto \ket{\Psi} \otimes \ket{\phi'_2},  \\
\ket{\xi_3} \mapsto \ket{\Psi^\perp} \otimes \ket{\phi''_1},  \\
\ket{\xi_4} \mapsto \ket{\Psi^\perp} \otimes \ket{\phi''_2},  \\
\end{cases}\label{correlated unitary}
\end{align}
will then prepare $\oo$ in the state $\ket{\Psi}$,  with  $p^{\max}_\Psi=p_1^\downarrow+p_2^\downarrow$.   We may then  minimise $\Delta Q$ by choosing $U$  from the equivalence class $[U_\mathrm{maj}^1]$.  This can be achieved if we choose  the vectors $\ket{\Psi}$, $\ket{\phi_1'}$, and $\ket{\phi_1''}$ respectively from the sets $\{\ket{\varphi_1},\ket{\varphi_2}\}$, $\{\ket{\phi_1},\ket{\phi_2}\}$, and  $\{\ket{\phi_1},\ket{\phi_2}\}$; what particular permutation does this depends on the temperature and the values of $\gamma_\pm$. In the special case of $\gamma_+=\gamma_-=\gamma $, for example, we find that irrespective of the temperature, when $\gamma> 1/2$ this is achieved when $\ket{\Psi}=\ket{\varphi_1}$, $\ket{\phi_1'}=\ket{\phi_1}$, and $\ket{\phi_1''}=\ket{\phi_2}$. Conversely when $\gamma<1/2$ this is realised when  $\ket{\Psi}=\ket{\varphi_2}$, $\ket{\phi_1'}=\ket{\phi_2}$, and $\ket{\phi_1''}=\ket{\phi_1}$. 

\fig{correlated-reservoir} shows the dependence of both $\Delta S$ and $ \Delta Q-\Delta S/\beta$ on the entanglement of the Hamiltonian eigenvectors $\{\ket{\xi_n}\}_n$, with $\gamma_+=\gamma_-=\gamma \in \{1/2,3/4,1\}$. The system is always evolved  using  $U \in [U_\mathrm{maj}^1]$. As $\gamma$ tends to one,  thereby resulting in  uncorrelated Hamiltonian eigenvectors,   both  $\Delta Q$  and  $\Delta S$ decrease,  vanishing in the  the limit as $\beta$ tends to infinity. However, for intermediate temperatures, $\Delta Q$ becomes so low that it ``violates'' Landauer's limit. This is similar to the possibility of extracting work from the correlations between a quantum system and its environment, which are initially in a thermal state \citep{Gelbwaser-work-extraction-QND}.

\section{Conclusions}\label{conclusions}

In this article, we  have developed a context-dependent, dynamical variant of Landauer's principle. We used techniques from majorisation theory to characterise the equivalence class of unitary operators that bring the probability of information erasure to a desired value and minimise the consequent heat dissipation to the thermal reservoir. By constructing a sequential swap algorithm, we demonstrated that there is a tradeoff between the probability of information erasure and the minimal heat dissipation. Furthermore, we showed that except for the cases where the object is a two-level system, or when we are able to fully erase the object's information, we may maximise the probability of information erasure without also minimising the object's entropy; this allows for a more energy-efficient procedure for probabilistic information erasure.

We also investigated methods of reducing heat dissipation due to information erasure by operating outside of Landauer's framework. However, we wanted this departure to preserve the referent of heat and temperature in our mathematical description; dissipating less heat than Landauer's limit becomes meaningless when there is no temperature or  heat to speak of.  Therefore, we arrived at two alterations to Landauer's framework which would not result in a category error with respect to heat and temperature. The first avenue was to introduce an auxiliary system to the object and reservoir, while the second was to consider the object as a subpart of a system in thermal equilibrium. In the first instance, the figure of merit was identified as the rank of the system in the object-plus-auxiliary subspace; if the rank of this state is less than the dimension of the auxiliary Hilbert space, then full information erasure is possible with at most zero heat dissipation to the reservoir. In the second instance, information erasure can be achieved with possibly less heat than Landauer's limit when the eigenvectors of the system Hamiltonian, that have support on the pure state we which to prepare the object in, are product states. 

The primary question we have not addressed in this study, and shall leave for future work,  is the inclusion of time-dynamics into what we consider as the physical context; the optimal unitary operator for information erasure is considered here as a bijection between orthonormal basis sets.  In most realistic settings, however, one is restricted in the Hamiltonians they can establish between the object and reservoir. As such, the optimal unitary operator may not always be reachable, resulting in a smaller maximal probability of information erasure, a larger minimal heat dissipation, or both. Furthermore, an interesting question to address is the number of times  that we must switch between the Hamiltonians, that generate the unitary group, in order to  obtain the optimal unitary operator, and how  this would scale with the reservoir's dimension. This would provide a link between the present work and the third law of thermodynamics \citep{Masanes-third-law} from a control-theoretic \citep{Domenico-control-dynamics} viewpoint.   

\section*{Acknowledgments}
 The authors wish to thank Ali Rezakhani for his useful comments on the manuscript. M. H. M. and Y. O. thank the support from Funda\c{c}\~{a}o para a Ci\^{e}ncia e a Tecnologia (Portugal), namely through programmes PTDC/POPH and projects UID/Multi/00491/2013, UID/EEA/50008/2013, IT/QuSim and CRUP-CPU/CQVibes, partially funded by EU FEDER, and from the EU FP7 projects LANDAUER (GA 318287) and PAPETS (GA 323901).   Furthermore M. H. M. acknowledges the support from the EU FP7 Marie Curie Fellowship (GA 628912).

\makeatletter
\renewcommand\p@subsection{\thesection\,}
\makeatother
\makeatletter
\renewcommand\p@subsubsection{\thesection\,\thesubsection\,}
\makeatother

\appendix

\section{Majorisation theory}\label{Majorisation theory section}      
    
Here we shall introduce some useful concepts from the theory of majorisation \citep{Majorisation}.  Given a vector of real numbers $\bm{a}:=\{a_i\}_{i} \in \re^N$, where $i$ runs over the index set $I:=\{1,\dots,N\}$,   we may construct the ordered vectors $\bm{a^{\uparrow}}:= \{a_i^{\uparrow}\}_i$ and $\bm{a^{\downarrow}}:= \{a_i^{\downarrow}\}_i$  by permuting the elements in $\bm{a}$.  The non-decreasing vector $\bm{a}^{\uparrow}$  is defined such that for all $i,j \in I$ where $i<j$, we have   $a_i^{\uparrow} \leqslant a_{j}^{\uparrow}$.  Conversely the non-increasing vector $\bm{a}^{\downarrow}$ is defined such that for all $i,j \in I$ where $i<j$, we have $a_i^{\downarrow} \geqslant a_{j}^{\downarrow}$. The vector $\bm{a}$ is said to be weakly majorised by $\bm{b}$ from below, denoted $\bm{a} \prec_w \bm{b}$, if and only if for every $k \in I$, $\sum_{i=1}^k b_i^{\downarrow}\geqslant \sum_{i=1}^k a_i^{\downarrow}$. Conversely,  $\bm{a}$ is said to be weakly majorised by $\bm{b}$ from above, denoted $\bm{a} \prec^w \bm{b}$, if and only if for every $k \in I$, $\sum_{i=1}^k a_i^{\uparrow}\geqslant \sum_{i=1}^k b_i^{\uparrow}$. The stronger condition of $\bm{a}$ being majorised by $\bm{b}$, denoted $\bm{a} \prec \bm{b}$, is satisfied if both $\bm{a} \prec_w \bm{b}$ and $\bm{a} \prec^w \bm{b}$ (or alternatively, if one of these conditions is met together with   $\sum_i a_i=\sum_i b_i$).  A  sufficient condition for $\bm{a} \prec_w \bm{b}$ is if for all $i\in I$, $a_i^{\downarrow} \leqslant b_i^{\downarrow}$. Similarly, a sufficient condition for    $\bm{a} \prec^w \bm{b}$ is if for all $i\in I$, $a_i^{\uparrow} \geqslant b_i^{\uparrow}$. We now introduce a theorem that will be central to many results in this article.

\
\begin{thm}\label{monotone-lattice-superadditive}
For two vectors $\bm{a}$ and $\bm{b}$, in $\re^N$, their inner product obeys the relation
\begin{equation*}
\bm{a}^{\downarrow} \cdot \bm{b}^{\uparrow} \leqslant \bm{a} \cdot \bm{b} \leqslant \bm{a}^{\downarrow} \cdot \bm{b}^{\downarrow}. 
\end{equation*}
\end{thm}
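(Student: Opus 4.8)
The plan is to prove the rearrangement inequality $\bm{a}^{\downarrow} \cdot \bm{b}^{\uparrow} \leqslant \bm{a} \cdot \bm{b} \leqslant \bm{a}^{\downarrow} \cdot \bm{b}^{\downarrow}$ by a swapping argument. It suffices to establish the upper bound $\bm{a} \cdot \bm{b} \leqslant \bm{a}^{\downarrow} \cdot \bm{b}^{\downarrow}$; the lower bound then follows by applying the upper bound to $\bm{a}$ and $-\bm{b}$ (since $(-\bm{b})^{\downarrow}$ is the entrywise negation of $\bm{b}^{\uparrow}$, so $\bm{a}^{\downarrow}\cdot(-\bm{b})^{\downarrow} = -\bm{a}^{\downarrow}\cdot\bm{b}^{\uparrow}$), giving $-\bm{a}\cdot\bm{b} \leqslant -\bm{a}^{\downarrow}\cdot\bm{b}^{\uparrow}$, i.e.\ $\bm{a}^{\downarrow}\cdot\bm{b}^{\uparrow} \leqslant \bm{a}\cdot\bm{b}$.

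For the upper bound, the key step is the two-element exchange lemma: if $a_i \leqslant a_j$ but $b_i > b_j$ (an ``inversion''), then swapping $b_i$ and $b_j$ in the pairing does not decrease the inner product, because
\begin{equation*}
(a_i b_j + a_j b_i) - (a_i b_i + a_j b_j) = (a_j - a_i)(b_i - b_j) \geqslant 0.
\end{equation*}
Thus each such transposition weakly increases $\bm{a}\cdot\bm{b}$ while reducing the number of inversions between the two sequences. Starting from the given pairing, I would repeatedly apply such transpositions (for instance by running a bubble-sort on the indices of $\bm{b}$ relative to the order that makes $\bm{a}$ non-increasing) until $\bm{b}$ is arranged in non-increasing order alongside $\bm{a}^{\downarrow}$; since there are only finitely many pairs and the inversion count strictly decreases at each step, the procedure terminates, and the terminal configuration realises exactly $\bm{a}^{\downarrow}\cdot\bm{b}^{\downarrow}$.

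One technical wrinkle to handle carefully is the presence of ties: if $a_i = a_j$, reordering the corresponding $b$-entries changes nothing in the value of the inner product, so one may fix any convention (e.g.\ break ties so that the $b$-entries end up non-increasing) without affecting the argument; the inequality $(a_j-a_i)(b_i-b_j)\geqslant 0$ degenerates to an equality in that case. The main — though modest — obstacle is simply bookkeeping: making precise that the sorting terminates and that the final pairing is genuinely $(\bm{a}^{\downarrow},\bm{b}^{\downarrow})$ up to the irrelevant permutation freedom within ties. An alternative, more compact route would be an Abel summation / partial-summation argument: write $\bm{a}\cdot\bm{b}$ with $\bm{a}^{\downarrow}$ fixed, express it via the partial sums $S_k$ of the $b$-values paired with the $k$ largest $a$-values, use that $S_k \leqslant \sum_{i=1}^{k} b_i^{\downarrow}$ for every $k$, and combine with the non-negativity of the successive differences $a_k^{\downarrow} - a_{k+1}^{\downarrow}$; this trades the combinatorial induction for a one-line summation by parts, but the exchange argument is more self-contained and I would present that.
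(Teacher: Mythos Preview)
Your argument is correct: the two-element exchange lemma together with a finite sorting procedure establishes the upper bound, and the sign-flip trick with $-\bm{b}$ yields the lower bound. The paper, however, does not supply its own proof of this theorem at all --- it simply refers the reader to Theorem~II.4.2 of Bhatia's \emph{Matrix Analysis}. So there is nothing to compare at the level of argument; your write-up is strictly more self-contained than what the paper provides, and either the transposition proof or the Abel-summation variant you sketch would be an acceptable replacement for the bare citation.
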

For a proof we refer to Theorem II.4.2 in \citep{Matrix-Analysis}. This leads to the simple corollary:

\
\begin{cor}\label{corollary-monotone-lattice-superadditive}
Consider the pairs of vectors  $\{\bm{a_1},\bm{a_2}\}$ and $\{\bm{b_1},\bm{b_2}\}$,  such that $\bm{a_1} \prec \bm{a_2}$ and $\bm{b_1} \prec \bm{b_2}$.  It follows from \thmref{monotone-lattice-superadditive} that $\bm{a_1}^\downarrow \cdot \bm{b_1}^\downarrow  \leqslant \bm{a_2}^\downarrow \cdot \bm{b_2}^\downarrow$, and $\bm{a_2}^\downarrow \cdot \bm{b_2}^\uparrow  \leqslant  \bm{a_1}^\downarrow \cdot \bm{b_1}^\uparrow$. 
\end{cor}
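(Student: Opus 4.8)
The plan is to derive the corollary directly from \thmref{monotone-lattice-superadditive} by inserting the majorisation hypotheses at the right moment. Recall that $\bm{a_1}\prec\bm{a_2}$ and $\bm{b_1}\prec\bm{b_2}$ mean, in particular, that the two pairs share the same total sum ($\sum_i a_{1,i}=\sum_i a_{2,i}$ and likewise for the $\bm{b}$'s), and that each partial sum of ordered components is dominated. The key classical fact I would invoke is the dual characterisation of majorisation: $\bm{x}\prec\bm{y}$ if and only if $\bm{x}$ lies in the convex hull of the permutations of $\bm{y}$, equivalently $\bm{x}=D\bm{y}$ for some doubly stochastic matrix $D$. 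This makes the inner-product monotonicity almost automatic.

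For the first inequality, $\bm{a_1}^\downarrow\cdot\bm{b_1}^\downarrow\leqslant\bm{a_2}^\downarrow\cdot\bm{b_2}^\downarrow$, I would proceed in two steps, changing one vector at a time. First fix $\bm{b_1}$ and compare $\bm{a_1}^\downarrow\cdot\bm{b_1}^\downarrow$ with $\bm{a_2}^\downarrow\cdot\bm{b_1}^\downarrow$: writing $\bm{a_1}=D\bm{a_2}$ with $D$ doubly stochastic and using that $\bm{a}^\downarrow\cdot\bm{c}^\downarrow=\max_{\pi}\bm{a}\cdot\pi(\bm{c})$ over permutations $\pi$ (by \thmref{monotone-lattice-superadditive}), one gets $\bm{a_1}^\downarrow\cdot\bm{b_1}^\downarrow\leqslant\bm{a_2}^\downarrow\cdot\bm{b_1}^\downarrow$ because sorting-aligned inner products are Schur-convex in each argument; concretely, $\bm{a_1}^\downarrow\cdot\bm{b_1}^\downarrow=\bm{a_1}\cdot\bm{b_1}^{\downarrow}\leqslant\sum_i (D\bm{a_2})_i b_{1,i}^\downarrow=\sum_{i,j}D_{ij}a_{2,j}b_{1,i}^\downarrow\leqslant\sum_j a_{2,j}\bigl(\max_i b_{1,i}^\downarrow\text{-type bound}\bigr)$ — more cleanly, $\sum_{i,j}D_{ij}a_{2,j}b_{1,i}^\downarrow\leqslant\sum_j a_{2,j}^\downarrow b_{1,j}^\downarrow$ since for each fixed ordering the doubly stochastic average cannot exceed the maximally aligned pairing. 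Then repeat the same argument in the second slot, replacing $\bm{b_1}$ by $\bm{b_2}$ while keeping $\bm{a_2}^\downarrow$ fixed, to obtain $\bm{a_2}^\downarrow\cdot\bm{b_1}^\downarrow\leqslant\bm{a_2}^\downarrow\cdot\bm{b_2}^\downarrow$. Chaining the two gives the claim. The second inequality, $\bm{a_2}^\downarrow\cdot\bm{b_2}^\uparrow\leqslant\bm{a_1}^\downarrow\cdot\bm{b_1}^\uparrow$, follows by the identical two-step scheme applied to the lower bound $\bm{a}^\downarrow\cdot\bm{c}^\uparrow=\min_\pi\bm{a}\cdot\pi(\bm{c})$ from \thmref{monotone-lattice-superadditive}, noting that the anti-aligned pairing is Schur-concave, so averaging by a doubly stochastic matrix can only increase it — hence passing from $\bm{a_2},\bm{b_2}$ to $\bm{a_1},\bm{b_1}$ raises the value. (Alternatively, substitute $\bm{c}\mapsto-\bm{c}$ to reduce the $\uparrow$ statement to the $\downarrow$ one.)

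\textbf{Main obstacle.} The only real subtlety is making the "one argument at a time" reduction rigorous, i.e.\ verifying that $\bm{x}\mapsto\bm{x}^\downarrow\cdot\bm{c}^\downarrow$ is monotone under majorisation for \emph{fixed} $\bm{c}$. This is exactly Schur-convexity of that functional and is standard, but it is the one place where one must be careful not to circularly invoke the corollary itself. I would handle it cleanly by the doubly stochastic representation: if $\bm{x}\prec\bm{y}$ then $\bm{x}^\downarrow\cdot\bm{c}^\downarrow\leqslant\bm{y}^\downarrow\cdot\bm{c}^\downarrow$ because $\bm{x}^\downarrow\cdot\bm{c}^\downarrow=\sup_{D}\langle D\bm{x},\bm{c}\rangle$ where the supremum over doubly stochastic $D$ is attained at a permutation aligning the orders, and $\bm{x}=D'\bm{y}$ makes $\langle D\bm{x},\bm{c}\rangle=\langle DD'\bm{y},\bm{c}\rangle\leqslant\bm{y}^\downarrow\cdot\bm{c}^\downarrow$ since $DD'$ is again doubly stochastic. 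Everything else is bookkeeping, and the whole corollary is genuinely a short consequence of \thmref{monotone-lattice-superadditive} once this monotonicity is in hand.
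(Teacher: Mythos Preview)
Your argument is correct. The paper itself offers no proof of this corollary at all---it simply states it as ``the simple corollary'' of \thmref{monotone-lattice-superadditive} and moves on---so there is nothing to compare against; you have supplied the details the paper omits. The essential mechanism you identify is the right one: for fixed $\bm{c}$, the functional $\bm{x}\mapsto \bm{x}^\downarrow\cdot\bm{c}^\downarrow=\max_{P}\,(P\bm{x})\cdot\bm{c}^\downarrow$ is symmetric and convex (a maximum of linear forms over the Birkhoff polytope, attained at a permutation), hence Schur-convex; dually $\bm{x}\mapsto\bm{x}^\downarrow\cdot\bm{c}^\uparrow$ is Schur-concave. Your doubly-stochastic chain $\bm{x}=D_0\bm{y}\Rightarrow \max_D(D\bm{x})\cdot\bm{c}^\downarrow\leqslant \max_{D'}(D'\bm{y})\cdot\bm{c}^\downarrow$ makes the monotonicity rigorous without circularity, and the two-step ``one argument at a time'' reduction then yields both inequalities.

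One minor comment: the intermediate display in your first paragraph, where you write ``$\sum_j a_{2,j}(\max_i b_{1,i}^\downarrow\text{-type bound})$'', is not an actual inequality and should be dropped---you correctly recognise this and give the clean argument in your ``Main obstacle'' paragraph, so just use that version from the start. Also note that your alternative route (replace $\bm{c}\mapsto -\bm{c}$ to reduce the anti-aligned case to the aligned one) is perfectly valid, since $(-\bm{c})^\downarrow=-\bm{c}^\uparrow$ and majorisation is preserved under negation.
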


\section{An Equivalence class of unitary operators}\label{equivalence class of unitary operators}      
Here we expound on the sense in which  \eq{maximising probability rule equation}  characterises an equivalence class of unitary operators, instead of just one unique unitary operator. The arguments here translate to the other equivalence classes of unitary operators mentioned in the article. Here, two unitary operators $U$ and $V$ are said to be equivalent so far as the probability of information erasure is concerned, if and only if 
\begin{equation}
\tr[(\pr{\varphi_1}\otimes \one)U\rho U^\dagger]=\tr[(\pr{\varphi_1}\otimes \one)V \rho V^\dagger].
\end{equation}

   First of all, a degeneracy in the probability distribution $\bm{p^\downarrow}$ will mean that the representation of $\rho$, as given in \eq{composite system state representation initial}, will not be unique. If, for example, we have $p_i^\downarrow=p_j^\downarrow$, then $\ket{\psi_i}$ and $\ket{\psi_j}$ in \eq{composite system state representation initial} can be replaced by any orthonormal pair of vectors $\{\ket{\psi},\ket{\psi ^\perp}\}$ that span the same subspace. As such, replacing $U_\mathrm{maj}^g\ket{\psi_i}=  \ket{\varphi_1} \otimes\ket{ \xi_i'}$ with $U_\mathrm{maj}^g\ket{\psi}=  \ket{\varphi_1} \otimes\ket{ \xi_i'}$, and similarly  $U_\mathrm{maj}^g\ket{\psi_j}=  \ket{\varphi_1} \otimes\ket{ \xi_j'}$ with $U_\mathrm{maj}^g\ket{\psi^\perp}=  \ket{\varphi_1} \otimes\ket{ \xi_j'}$, would give a different unitary operator, but the same probability of information erasure. As such, both unitary operators belong in the same equivalence class with respect to the probability of information erasure, denoted $[U_\mathrm{maj}]$. Additionally, as the probability of information erasure is unaffected by the orthonormal basis $\{\ket{\xi_m'}\}$ in the transformation rules of \eq{maximising probability rule equation}, then any choice of this basis will define a different unitary operator that, nonetheless, belongs to the same equivalence class $[U_\mathrm{maj}]$.

\section{Technical proofs}\label{technical proofs}

\subsection{Maximising the probability of information erasure}\label{proof maximising probability of information erasure}
 Recall that the probability of preparing  $\rho_\oo'$ in the state $\ket{\varphi_1}$ is defined as 
\begin{align}
p(\varphi_1| \rho_\oo'):= \<\varphi_1|\rho_\oo'|\varphi_1\> &= \sum_{n=1}^{d_\oo d_\rr}  p_n ^{\downarrow} \<\psi_n|U^\dagger(\pr{\varphi_1}\otimes \one_\rr)U|\psi_n\>, \nonumber \\ & = \sum_{n=1}^{d_\oo d_\rr}  p_n^{\downarrow} g_n(U) \equiv\bm{p}^{\downarrow}\cdot\bm{g(U)},
\end{align} 
where $\bm{g(U)}$ is a vector of positive numbers $g_n(U):= \<\psi_n|U^\dagger(\pr{\varphi_1}\otimes \one_\rr)U|\psi_n\>$ such that $\sum_n g_n(U)=d_\rr$.  

\
\begin{lem}\label{maximum-probability}
The maximum probability of information erasure is $p_{\varphi_1}^{\max}= \sum_{m=1}^{d_\rr} p_m^{\downarrow}$. The equivalence class of unitary operators that achieve this, denoted $[U^g_\mathrm{maj}]$,   is characterised by the rule
\begin{equation}
\text{for all } m\in \{1, \dots, d_\rr\} \ , \ U_\mathrm{maj}^g\ket{\psi_m}=  \ket{\varphi_1} \otimes\ket{ \xi_m'},
\end{equation}
where $\{\ket{\xi_m'}\}_m$ is an arbitrary orthonormal basis in $\h_\rr$.
\end{lem}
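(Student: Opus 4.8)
The plan is to separate the statement into three parts and prove each in turn: (a) the upper bound $p(\varphi_1|\rho_\oo')\leqslant\sum_{m=1}^{d_\rr}p_m^{\downarrow}$ holds for every unitary $U$; (b) this bound is attained by every $U$ satisfying the stated rule; and (c) conversely, every $U$ attaining it lies in the equivalence class $[U^g_\mathrm{maj}]$ in the sense of \app{equivalence class of unitary operators}. For (a), the only facts needed are that $\Pi_U:=U^\dagger(\pr{\varphi_1}\otimes\one_\rr)U$ is an orthogonal projector of rank $d_\rr$, so its diagonal entries $g_n(U)=\bra{\psi_n}\Pi_U\ket{\psi_n}$ lie in $[0,1]$ and satisfy $\sum_n g_n(U)=\tr\Pi_U=d_\rr$. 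Since $\bm{p}^{\downarrow}$ is non-increasing, maximising the linear functional $\bm{p}^{\downarrow}\cdot\bm{g}$ over the polytope $\{\bm{g}:0\leqslant g_n\leqslant 1,\ \sum_n g_n=d_\rr\}$ puts all the weight on the $d_\rr$ largest entries. Concretely, writing $\bm{p}^{\downarrow}\cdot\bm{g}-\sum_{m=1}^{d_\rr}p_m^{\downarrow}=-\sum_{m\leqslant d_\rr}p_m^{\downarrow}(1-g_m)+\sum_{m>d_\rr}p_m^{\downarrow}g_m$ and using $\sum_{m\leqslant d_\rr}(1-g_m)=\sum_{m>d_\rr}g_m$ together with $p_m^{\downarrow}\geqslant p_{d_\rr}^{\downarrow}$ for $m\leqslant d_\rr$ and $p_m^{\downarrow}\leqslant p_{d_\rr}^{\downarrow}$ for $m>d_\rr$, the difference is $\leqslant 0$; equivalently one notes $\bm{g}(U)$ is majorised by the vector with $d_\rr$ ones followed by zeros and invokes \thmref{monotone-lattice-superadditive}.

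For (b): the rule is consistent because $\{\ket{\psi_m}\}_{m=1}^{d_\rr}$ and $\{\ket{\varphi_1}\otimes\ket{\xi_m'}\}_{m=1}^{d_\rr}$ are each orthonormal, so the partial isometry extends to a unitary by sending the remaining $\ket{\psi_n}$, $n>d_\rr$, onto any orthonormal basis of the orthogonal complement of $\ket{\varphi_1}\otimes\h_\rr$ (the dimensions agree, both being $(d_\oo-1)d_\rr$). Since $\pr{\varphi_1}\otimes\one_\rr$ is exactly the projector onto $\ket{\varphi_1}\otimes\h_\rr$, for such a $U$ one gets $g_m(U)=1$ for $m\leqslant d_\rr$ and $g_n(U)=0$ otherwise, hence $p(\varphi_1|\rho_\oo')=\sum_{m=1}^{d_\rr}p_m^{\downarrow}$, which by (a) is the maximum, so $p_{\varphi_1}^{\max}=\sum_{m=1}^{d_\rr}p_m^{\downarrow}$.

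For (c), which I expect to be the only delicate point: retracing the equality conditions in (a) forces $g_m(U)=1$ for every $m\leqslant d_\rr$ with $p_m^{\downarrow}>p_{d_\rr}^{\downarrow}$ and $g_m(U)=0$ for every $m>d_\rr$ with $p_m^{\downarrow}<p_{d_\rr}^{\downarrow}$ --- that is, $U$ carries every eigenvector of $\rho$ strictly above the threshold weight into $\ket{\varphi_1}\otimes\h_\rr$ and every one strictly below it into the orthogonal complement. If the weight $p_{d_\rr}^{\downarrow}$ is not part of a degenerate block straddling position $d_\rr$, this already pins $U$ to the rule with $\ket{\xi_m'}:=(\bra{\varphi_1}\otimes\one_\rr)U\ket{\psi_m}$. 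When $p_{d_\rr}^{\downarrow}$ does lie in such a block, the spectral decomposition $\rho=\sum_n p_n^{\downarrow}\pr{\psi_n}$ is itself non-unique on that eigenspace $V$, and this is exactly the freedom the equivalence class is meant to absorb: the restriction to $V$ of the projector $Q:=U^\dagger(\pr{\varphi_1}\otimes\one_\rr)U$ has rank equal to the number of ``top'' slots not yet accounted for, and choosing the eigenbasis of $\rho$ inside $V$ adapted to the splitting $V=(\mathrm{ran}\,Q\cap V)\oplus(\ker Q\cap V)$ produces a legitimate decomposition for which $U\ket{\psi_m}\in\ket{\varphi_1}\otimes\h_\rr$ for all $m\leqslant d_\rr$; setting $\ket{\xi_m'}:=(\bra{\varphi_1}\otimes\one_\rr)U\ket{\psi_m}$ then yields an orthonormal basis of $\h_\rr$ and reproduces the rule exactly. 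The bookkeeping of this degenerate case, and checking that it matches the notion of equivalence class described in \app{equivalence class of unitary operators}, is the main work; everything else is the projector/polytope argument above.
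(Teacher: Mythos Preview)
Your proposal is correct and follows essentially the same approach as the paper: both exploit that $g_n(U)\in[0,1]$ with $\sum_n g_n(U)=d_\rr$, so the majorising $\bm{g}$-vector consists of $d_\rr$ ones followed by zeros, and then invoke \thmref{monotone-lattice-superadditive}/\corref{corollary-monotone-lattice-superadditive} (or, equivalently, your direct polytope argument). Your part (c), which traces the equality conditions and handles the degenerate eigenvalue block explicitly, is considerably more careful than the paper's own proof, which simply asserts the characterisation without analysing when equality is attained.
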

\begin{proof}
By \thmref{monotone-lattice-superadditive} we know that $\bm{p}^{\downarrow}\cdot\bm{g(U)} \leqslant \bm{p}^{\downarrow}\cdot\bm{g^{\downarrow}(U)}$. Let $U^g_\mathrm{maj}$ be a member of an equivalence class of unitary operators such that $ \bm{g(U_\mathrm{maj}^g)}= \bm{g^{\downarrow}(U_\mathrm{maj}^g)}$ and $ \bm{g^\downarrow(U)}\prec \bm{g^\downarrow(U_\mathrm{maj}^g)}$ for all $U$ acting on $\h_\oo \otimes \h_\rr$.   Therefore, by \corref{corollary-monotone-lattice-superadditive} we get $\bm{p}^{\downarrow}\cdot\bm{g^{\downarrow}(U)} \leqslant  \bm{p}^{\downarrow}\cdot\bm{g^{\downarrow}(U^g_\mathrm{maj})}$, and hence  $p(\varphi_1|\rho_\oo')$ is maximised by $U^g_\mathrm{maj}$. Because  $g_n(U) \in [0,1]$ for all $n$, and  $\sum_n g_n(U)= d_\rr$, the first $d_\rr$ elements in $\bm{g^{\downarrow}(U^g_\mathrm{maj})}$ must be one, and the rest zero.  
\end{proof}

\subsection{Minimising the heat dissipation}\label{proof Minimising the heat dissipation}

 We may always write the post-transformation marginal state of the reservoir as 
\begin{equation}
\rho_\rr'=\sum_{m=1}^{d_\rr} r_m'^{\downarrow}(U) \pr{\xi_m'},
\end{equation}
with $\bm{r'^{\downarrow}(U)} :=\{r_m'^{\downarrow}(U)\}_m$ a non-increasing vector of probabilities and $\{\ket{\xi_m'}\}_m$ an arbitrary orthonormal basis in $\h_\rr$. Because $\rho_\rr(\beta)$ is fixed,  minimising $\Delta Q$ is achieved by minimising the average energy of this state, given as 
 \begin{align}
\tr[H_\rr \rho_\rr']&=\sum_{m=1}^{d_\rr} r_m'^{\downarrow}(U)\<\xi_m'| H_\rr|\xi_m'\>\equiv \bm{r'^{\downarrow}(U)} \cdot \bm{\lambda'} ,
 \end{align}
where $\bm{\lambda'}$ is a vector of real numbers $\lambda_m':= \<\xi_m'| H_\rr|\xi_m'\> $. To determine how $\Delta Q$ can be minimised, we first provide a recursive proof of the Ky Fan principle \citep{Matrix-Analysis} to show that  the set of eigenvalues $\bm{\lambda}$    majorises    all possible $\bm{\lambda'}$.

\
\begin{lem} \label{recursive-ky-fan} 
$\bm{\lambda'} \prec \bm{\lambda}$ for all orthonormal bases $\{\ket{\xi_m'} \in \h_\rr\}_m$.
\end{lem}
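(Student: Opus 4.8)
The statement asserts that for any orthonormal basis $\{\ket{\xi_m'}\}_m$ of $\h_\rr$, the vector of diagonal entries $\lambda_m' := \<\xi_m'|H_\rr|\xi_m'\>$ is majorised by the vector $\bm{\lambda}$ of eigenvalues of $H_\rr$. This is the classical Schur/Ky Fan result, and the excerpt already announces the intended route: "a recursive proof of the Ky Fan principle." So the plan is to establish the two weak-majorisation inequalities $\bm{\lambda'}\prec_w\bm{\lambda}$ and $\bm{\lambda'}\prec^w\bm{\lambda}$, together with equality of the total sums, which by the definition in Appendix~\ref{Majorisation theory section} yields full majorisation.

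First, the easy ingredient: $\sum_m \lambda_m' = \sum_m \<\xi_m'|H_\rr|\xi_m'\> = \tr[H_\rr] = \sum_m \lambda_m$, since the trace is basis-independent. So it suffices to prove one of the two weak-majorisation conditions, say $\sum_{m=1}^k (\lambda')^\downarrow_m \leqslant \sum_{m=1}^k \lambda^\downarrow_m$ for every $k$. The standard argument: fix $k$, and let $S$ be the $k$-dimensional subspace spanned by those $\ket{\xi_m'}$ realising the $k$ largest values of $\lambda_m'$. Then $\sum_{m=1}^k (\lambda')^\downarrow_m = \tr[P_S H_\rr P_S]$, where $P_S$ is the orthogonal projection onto $S$. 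The maximum of $\tr[P H_\rr P]$ over all rank-$k$ projections $P$ equals $\sum_{m=1}^k \lambda^\downarrow_m$ (the Ky Fan maximum principle), attained by projecting onto the span of the top $k$ eigenvectors of $H_\rr$. Hence $\sum_{m=1}^k (\lambda')^\downarrow_m \leqslant \sum_{m=1}^k \lambda^\downarrow_m$. Symmetrically (or by applying the same argument to $-H_\rr$) one gets the $\prec^w$ condition; combined with the sum equality this gives $\bm{\lambda'}\prec\bm{\lambda}$.

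The phrase "recursive proof" in the excerpt suggests the authors prefer an inductive derivation of the Ky Fan bound rather than citing it: induct on $k$, at each stage peeling off the largest diagonal entry, bounding it by $\lambda^\downarrow_1$ via $\<\xi|H_\rr|\xi\>\leqslant \lambda^\downarrow_1$ for any unit vector $\xi$, and then restricting attention to the orthogonal complement to handle the remaining $k-1$ entries — being careful that the complementary space still behaves like the eigenvalue list with the top eigenvalue removed, which requires a min-max style argument because $\ket{\xi_1'}$ need not be an eigenvector. I expect this bookkeeping — cleanly justifying that after removing one diagonal direction the relevant partial sums are controlled by $\{\lambda^\downarrow_2,\lambda^\downarrow_3,\dots\}$ — to be the only genuine obstacle; everything else (the trace identity, assembling weak majorisations into majorisation via the Appendix definitions) is routine. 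An alternative that sidesteps the recursion entirely is to invoke Theorem~\ref{monotone-lattice-superadditive} or Birkhoff's theorem: the map $\lambda_m' = \sum_n |\<\xi_m'|\xi_n\>|^2 \lambda^\downarrow_n$ expresses $\bm{\lambda'}$ as a doubly stochastic image of $\bm{\lambda}$ (the matrix $D_{mn} := |\<\xi_m'|\xi_n\>|^2$ is doubly stochastic because both $\{\ket{\xi_m'}\}$ and $\{\ket{\xi_n}\}$ are orthonormal bases), and a doubly stochastic image is always majorised by the original vector — this is arguably the cleanest route and I would present it as such if the recursive Ky Fan argument proves cumbersome.
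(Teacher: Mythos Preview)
Your proposal is correct; each route you outline (the Ky Fan variational bound on partial sums, the doubly-stochastic/Schur argument, and the anticipated recursion) would establish the lemma. The paper follows the recursive path but frames it differently from what you sketch: instead of bounding partial sums inductively, it asserts the stronger \emph{pointwise} sufficient condition $\lambda_m^\uparrow \leqslant (\lambda')_m^\uparrow$ for every $m$ and every basis, and argues for it by sequential minimisation --- first setting $\ket{\xi_1'}=\ket{\xi_1}$ to minimise $(\lambda')_1^\uparrow$, then, with that fixed, setting $\ket{\xi_2'}=\ket{\xi_2}$, and so on. Your wariness about the recursive bookkeeping is well placed: the pointwise inequality is in fact false for general bases (e.g.\ for eigenvalues $\{0,1,2\}$ one can arrange diagonals $\{0.1,\,0.9,\,2\}$, violating it at $m=2$), and the paper's sequential argument only controls the particular minimising path rather than an arbitrary basis. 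Your partial-sum Ky Fan argument, and especially the doubly-stochastic route --- which needs only that $D_{mn}=|\<\xi_m'|\xi_n\>|^2$ is doubly stochastic together with the standard fact that doubly-stochastic images are majorised --- sidestep this difficulty entirely and constitute the cleaner, fully rigorous proof.
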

\begin{proof}
To show this, it is sufficient to show that $\sum_m \lambda_m= \sum_m \lambda_m' $ and $\lambda' \prec^w \lambda$ for all $\{\ket{\xi'_m}\}_m$. The first condition is trivial, as $\sum_m \lambda_m' = \tr[H_\rr]$ and is independent of $\{\ket{\xi_m'}\}_m$. To show that $\lambda' \prec^w \lambda$,   it is sufficient to prove that for all $m$ and $\{\ket{\xi_m'}\}_m$, $\lambda_m^{\uparrow}\leqslant \lambda_m'^{\uparrow}$. This can be done by showing that the minimal value attainable by $\lambda_1'^{\uparrow}$ is $\lambda_1^{\uparrow}$ and, given this constraint, the minimal value attainable by $\lambda_2'^{\uparrow}$ is $\lambda_2^{\uparrow}$, and  so on.  One may always write $\ket{\xi_m'}= \alpha_m \ket{\xi_m}+\beta_m \ket{\xi_m ^\perp}$ where $\ket{\xi_m^\perp}$ is the orthogonal complement to $\ket{\xi_m}$ in $\h_\rr$. Consequently, we have $\lambda_m'^{\uparrow}=|\alpha_m|^2 \<\xi_m|H_\rr | \xi_m\> + |\beta_m|^2 \<\xi_m^\perp|H_\rr | \xi_m^\perp\>$. It is evident  that 
$\<\xi_1^\perp|H_\rr |\xi_1^\perp\> \geqslant \<\xi_1|H_\rr |\xi_1\>=: \lambda_1^{\uparrow}$. Therefore we know that $\lambda_1'^{\uparrow}$ is minimised by setting $|\alpha_1|^2=1$.   In the next step, the fact that $\<\xi_1'|\xi_2'\>=0$ and that our previous step sets $\ket{\xi_1'}=\ket{\xi_1}$ implies that $\<\xi_1|\xi_2^\perp\>=0$. This in turn implies that $\<\xi_2^\perp| H_\rr|\xi_2^\perp\>\geqslant \<\xi_2|H_\rr |\xi_2\> =:\lambda_2^{\uparrow}$, so that $\<\xi_2'|H_\rr |\xi'_2\>$ is minimised by setting $|\alpha_2|^2=1$. This argument can be made recursively for all $m$. \end{proof}

Now we are able to characterise the equivalence class of unitary operators that minimise $\Delta Q$.  

\
\begin{lem}\label{minimising-heat}
$\Delta Q$ is minimised by the equivalence class of unitary operators $[U_\mathrm{maj}^f]$ characterised by the rule
\begin{equation*}
\text{for all } m \in \{1, \dots,d_\rr\}  \text{ and } n \in\{(m-1)d_\oo+1,\dots ,m d_\oo\} \ , \ U_\mathrm{maj}^f\ket{\psi_n}= \ket{\varphi_{l}^m} \otimes \ket{\xi_m} ,
\end{equation*}
with the set  $\{\ket{\varphi_{l}^m}|l \in \{1, \dots, d_\oo\}\}$ forming an orthonormal basis in $\h_\oo$ for each $m$. 
\end{lem}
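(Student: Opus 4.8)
The plan is to separate the minimisation of $\Delta Q$ — equivalently, of $\tr[H_\rr\rho_\rr'] = \bm{r'^{\downarrow}(U)}\cdot\bm{\lambda'}$ — into two majorisation facts: (i) for a \emph{fixed} spectrum of $\rho_\rr'$, the reservoir energy is smallest when $\rho_\rr'$ is passive; and (ii) among all reservoir spectra attainable by \emph{some} unitary, there is a distinguished one that majorises every other, and it is realised — simultaneously with passivity — by the rule in the statement. Both ingredients are then fed into \corref{corollary-monotone-lattice-superadditive}.

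For (i): since $\bm{r'^{\downarrow}(U)}$ is non-increasing, \thmref{monotone-lattice-superadditive} gives $\bm{r'^{\downarrow}(U)}\cdot\bm{\lambda'} \geqslant \bm{r'^{\downarrow}(U)}\cdot\bm{\lambda'^{\uparrow}}$. By \lemref{recursive-ky-fan}, $\bm{\lambda'}\prec\bm{\lambda}$; because $\bm{\lambda}=\bm{\lambda^{\uparrow}}$ is non-decreasing, \corref{corollary-monotone-lattice-superadditive} (with the trivial majorisation $\bm{r'^{\downarrow}(U)}\prec\bm{r'^{\downarrow}(U)}$) yields $\bm{r'^{\downarrow}(U)}\cdot\bm{\lambda^{\uparrow}}\leqslant\bm{r'^{\downarrow}(U)}\cdot\bm{\lambda'^{\uparrow}}$. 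Chaining, $\tr[H_\rr\rho_\rr']\geqslant\bm{r'^{\downarrow}(U)}\cdot\bm{\lambda}$, and equality holds when $\rho_\rr'$ is diagonal in $\{\ket{\xi_m}\}_m$ with eigenvalues non-increasing in energy, i.e.\ when $\rho_\rr'$ is passive.

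For (ii): set $P_k := \sum_{n=(k-1)d_\oo+1}^{k d_\oo} p_n^{\downarrow}$, the sum of the $k$-th block of $d_\oo$ largest eigenvalues of $\rho$. The images $\{\ket{\varphi_l^m}\otimes\ket{\xi_m}\}$ form an orthonormal basis, so the rule extends to a genuine unitary $U_\mathrm{maj}^f$, and tracing out $\oo$ gives $\rho_\rr'(U_\mathrm{maj}^f)=\sum_m P_m\pr{\xi_m}$ — passive (the $P_m$ are non-increasing because $\bm{p^{\downarrow}}$ is) with spectrum $\{P_m\}_m$. To see this spectrum majorises every attainable one, fix any $U$ and any rank-$k$ projector $\Pi$ on $\h_\rr$; then $U^\dagger(\one_\oo\otimes\Pi)U$ is a rank-$kd_\oo$ projector, so $\tr_\rr[\Pi\,\rho_\rr'(U)] = \tr[U^\dagger(\one_\oo\otimes\Pi)U\,\rho] \leqslant \sum_{n=1}^{k d_\oo} p_n^{\downarrow} = \sum_{j=1}^{k}P_j$ by the Ky Fan maximum principle for $\rho$. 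Choosing $\Pi$ as the projector onto the top $k$ eigenvectors of $\rho_\rr'(U)$ gives $\sum_{j=1}^{k} r_j'^{\downarrow}(U)\leqslant\sum_{j=1}^{k}P_j$ for every $k$, with equality at $k=d_\rr$ (both sides $=1$), hence $\bm{r'^{\downarrow}(U)}\prec\{P_m\}_m$.

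Combining: for arbitrary $U$, $\tr[H_\rr\rho_\rr']\geqslant\bm{r'^{\downarrow}(U)}\cdot\bm{\lambda}$ by (i), and $\bm{r'^{\downarrow}(U)}\cdot\bm{\lambda}\geqslant\{P_m\}_m\cdot\bm{\lambda}$ by (ii) together with \corref{corollary-monotone-lattice-superadditive} (once more using that $\bm{\lambda}$ is non-decreasing), while $U_\mathrm{maj}^f$ attains $\{P_m\}_m\cdot\bm{\lambda}$ since the state it produces is both passive and of spectrum $\{P_m\}_m$; therefore $[U_\mathrm{maj}^f]$ minimises $\Delta Q$. Finally, the freedom in the per-block orthonormal sets $\{\ket{\varphi_l^m}\}_l$ and in relabelling degenerate $\ket{\psi_n}$'s leaves $\rho_\rr'$, hence $\Delta Q$, unchanged, which is the sense in which the rule defines an equivalence class (cf.\ \app{equivalence class of unitary operators}). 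I expect the majorisation claim in (ii) to be the step needing most care: it is a Ky Fan / Schur–Horn type argument, and one must verify both that $\max_{\mathrm{rank}\,P=r}\tr[P\rho]=\sum_{n=1}^{r}p_n^{\downarrow}$ and that $\one_\oo\otimes\Pi$ lifts a rank-$k$ reservoir projector to a rank-$kd_\oo$ projector on $\h_\oo\otimes\h_\rr$; everything else is bookkeeping with the two majorisation lemmas already in hand.
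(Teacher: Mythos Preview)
Your proof is correct and follows the same broad architecture as the paper's: first reduce to passive $\rho_\rr'$ via \lemref{recursive-ky-fan}, then show the rule produces a reservoir spectrum that majorises every attainable one, then apply \corref{corollary-monotone-lattice-superadditive}. The difference lies in how the majorisation claim (your step (ii)) is established. The paper introduces the auxiliary vectors $\bm{f(U,m)}$ with $f_n(U,m):=\<\psi_n|U^\dagger(\one_\oo\otimes\pr{\xi_m})U|\psi_n\>$ and argues \emph{recursively}: maximise $r_1'^\downarrow(U)$ first (forcing the first $d_\oo$ entries of $\bm{f(U,1)}$ to one), then observe that orthogonality of the $U\ket{\psi_n}$ pins the first $d_\oo$ entries of $\bm{f(U,2)}$ to zero, so the best one can do next is send the second $d_\oo$ entries to one, and so on. You instead bound all the partial sums in one shot via the Ky Fan maximum principle, noting that $\one_\oo\otimes\Pi$ lifts a rank-$k$ reservoir projector to a rank-$kd_\oo$ projector on the composite space, so $\sum_{j=1}^{k}r_j'^\downarrow(U)\leqslant\sum_{n=1}^{kd_\oo}p_n^\downarrow=\sum_{j=1}^{k}P_j$ directly. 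Your route is shorter and more self-contained: it certifies majorisation outright rather than relying on the (correct but not fully spelled-out) claim that the greedy cascade actually yields the majorising vector. The paper's route, on the other hand, is more constructive in spirit and makes the orthogonality constraint that drives the block structure of $U_\mathrm{maj}^f$ visibly explicit.
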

\begin{proof}
By \corref{corollary-monotone-lattice-superadditive} and \lemref{recursive-ky-fan},  $  \bm{r'^{\downarrow}(U)} \cdot \bm{\lambda^{\uparrow}} \leqslant \bm{r'^{\downarrow}(U)} \cdot \bm{\lambda'} $. Therefore $\tr[H_\rr \rho_\rr']$ is minimal when for all $m$, $\ket{\xi_m'}= \ket{\xi_m}$. In such a case, we have 
 \begin{align}
 r_m'^{\downarrow}(U):=\<\xi_m|\rho_\rr'|\xi_m\>&=\sum_{n=1}^{d_\oo d_\rr} p_n^{\downarrow}\<\psi_n|U^\dagger(\one_\oo \otimes \pr{\xi_m})U|\psi_n\>, \nonumber \\
&= \sum_{n=1}^{d_\oo d_\rr}  p_n ^{\downarrow}f_n(U,m) =\bm{p}^{\downarrow}\cdot\bm{f(U,m)},
 \end{align}
  where $\bm{f(U,m)}$ is a vector of positive numbers  $f_n(U,m):= \<\psi_n|U^\dagger(\one_\oo \otimes \pr{\xi_m})U|\psi_n\>$ such that $\sum_n f_n(U,m)=d_\oo$ for all $m$. Let    $U_\mathrm{maj}^f$ be a member of the equivalence class of unitary operators such that $\bm{r'^{\downarrow}(U)}\prec\bm{r'^{\downarrow}(U_\mathrm{maj}^f)} $ for all $U$ acting on $\h_\oo \otimes \h_\rr$. By \corref{corollary-monotone-lattice-superadditive} it would then follow that    $\bm{r'^{\downarrow}(U_\mathrm{maj}^f)} \cdot \bm{\lambda^{\uparrow}}\leqslant\bm{r'^{\downarrow}(U)} \cdot \bm{\lambda^{\uparrow}} $, resulting in the minimisation of $\tr[H \rho_\rr']$ and hence $\Delta Q$. To find $\bm{r'^{\downarrow}(U_\mathrm{maj}^f)}$, we first need to maximise $r_1'^{\downarrow}(U)$ and then, given this constraint, maximise $r_2'^{\downarrow}(U)$, and so on. This, in turn, is achieved by choosing $U_\mathrm{maj}^f$ so that $\bm{f(U_\mathrm{maj}^f,1)}=\bm{f^{\downarrow}(U_\mathrm{maj}^f,1)}$ and $\bm{f^{\downarrow}(U_\mathrm{maj}^f,1)}\succ\bm{f^{\downarrow}(U,1)}$ for all $U$. Note that for each $m$,   $f_n(U,m) \in [0,1]$ for all $n$, and $\sum_nf_n(U,m)=d_\oo$.   Hence, the first $d_\oo$ entries of $\bm{f^{\downarrow}(U_\mathrm{maj}^f,1)}$ are  taken to one and the rest to zero.   Because of the constraint posed by the orthogonality of the vectors $\{U\ket{\psi_n}\}_n$, however, the first $d_\oo$ elements of $\bm{f(U_\mathrm{maj}^f,2)}$ must be zero, and to maximise $r_2'^{\downarrow}(U)$ the best we can do is to only take the second $d_\oo$ entries of $\bm{f(U_\mathrm{maj}^f,2)}$  to one, with the rest being zero. This argument is then made recursively for all $m$.
\end{proof}

\subsection{Minimal heat dissipation conditional on maximising the probability of information erasure }\label{proof maximal-probability-optimal-unitary}

  Let us  divide the vector of probabilities $\bm{p ^{\downarrow}}$ to form the non-increasing vector of cardinality $d_\rr$, denoted $\Pi_0^{\downarrow}$, and the non-increasing vectors of cardinality $d_\oo-1$, denoted $\{\Pi_m^{\downarrow}|m \in \{1,\dots, d_\rr\}\}$,  defined as
\begin{align}
\Pi_0^{\downarrow}&:= \{ p_m^{\downarrow}| m \in\{1,\dots,d_\rr\}\}, \nonumber \\
 \Pi^{\downarrow}_{m\geqslant 1}&:= \{ p^{\downarrow}_{d_\rr + (m-1)( d_\oo-1) +l} | l \in \{1, \dots, d_\oo-1\}\}.
\end{align}
 We refer to the $m\ts{th}$   element of $\Pi_0^{\downarrow}$ as $\Pi_0^{\downarrow}(m)$, and the $l\ts{th}$ element of  $\Pi^{\downarrow}_{m\geqslant 1}$ as $\Pi^{\downarrow}_{m\geqslant 1}(l)$. 

\
\begin{thm}\label{optimal-unitary-maximal}
The equivalence class of  unitary operators that maximise the probability of information erasure and, given this constraint, minimise the heat dissipation, is denoted as $[U_\mathrm{opt}(0)]$. This  is characterised by the rules 
\begin{align}
U_\mathrm{opt}(0):\begin{cases}\ket{\psi_n} \mapsto \ket{\varphi_1} \otimes \ket{\xi_m} & \text{if }  p(\psi_n|\rho) = \Pi^{\downarrow}_0(m), \\
\ket{\psi_n} \mapsto \ket{\varphi_l^m}\otimes\ket{\xi_{m}} & \text{if }  p(\psi_n|\rho)= \Pi^{\downarrow}_m(l)  \text{ and }  m\geqslant 1,\\
\end{cases}\label{optimal U appendix}
\end{align}
where, for all $m$, each member of the orthonormal set $\{\ket{\varphi_l^m}\}_l$ is orthogonal to $\ket{\varphi_1}$.  
\end{thm}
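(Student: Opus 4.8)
The plan is to reduce the statement to the two majorisation lemmas already proved. By \lemref{maximum-probability}, demanding $p(\varphi_1|\rho_\oo')=p^{\max}_{\varphi_1}$ is the same as taking $U$ in the class $[U^g_\mathrm{maj}]$: for $m\leqslant d_\rr$ one has $U\ket{\psi_m}=\ket{\varphi_1}\otimes\ket{\xi_m'}$ with $\{\ket{\xi_m'}\}_m$ an arbitrary orthonormal basis of $\h_\rr$, while the remaining images $U\ket{\psi_n}$, $n>d_\rr$, form an orthonormal basis of the subspace $W$, defined as the orthogonal complement of $\ket{\varphi_1}\otimes\h_\rr$ in $\h_\oo\otimes\h_\rr$, whose dimension is $(d_\oo-1)d_\rr$ and whose projection is $(\one_\oo-\pr{\varphi_1})\otimes\one_\rr$. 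Since $\rho_\rr(\beta)$ is fixed, minimising $\Delta Q$ over $[U^g_\mathrm{maj}]$ is the same as minimising $\tr[H_\rr\rho_\rr'(U)]$ over this class, where $\rho_\rr'(U):=\tr_\oo[U\rho U^\dagger]$. First I would record the elementary bound $\tr[H_\rr\rho_\rr'(U)]\geqslant \bm{r'^{\downarrow}(U)}\cdot\bm{\lambda^{\uparrow}}$, with equality precisely when $\rho_\rr'(U)$ is passive: this follows by diagonalising $\rho_\rr'(U)$, applying \lemref{recursive-ky-fan} to the diagonal of $H_\rr$ in that eigenbasis, and then \thmref{monotone-lattice-superadditive} together with \corref{corollary-monotone-lattice-superadditive}.

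The heart of the argument is to show that, among all $U\in[U^g_\mathrm{maj}]$, the reservoir spectrum produced by the rule \eqref{optimal U appendix} majorises every attainable one. Fix a rank-$K$ projection $P$ on $\h_\rr$ with $K\leqslant d_\rr$; then, using $\sum_{m=1}^K r_m'^{\downarrow}(U)=\max_{\dim P=K}\tr[(\one_\oo\otimes P)U\rho U^\dagger]$,
\[
\tr[(\one_\oo\otimes P)\,U\rho U^\dagger]=\sum_{n\leqslant d_\rr}p_n^{\downarrow}\langle\xi_n'|P|\xi_n'\rangle+\sum_{n>d_\rr}p_n^{\downarrow}\langle U\psi_n|(\one_\oo\otimes P)|U\psi_n\rangle .
\]
Every coefficient lies in $[0,1]$; moreover $\sum_{n\leqslant d_\rr}\langle\xi_n'|P|\xi_n'\rangle=\tr P=K$ because $\{\ket{\xi_n'}\}_m$ is a complete basis of $\h_\rr$, and $\sum_{n>d_\rr}\langle U\psi_n|(\one_\oo\otimes P)|U\psi_n\rangle=\tr[(\one_\oo-\pr{\varphi_1})\otimes P]=(d_\oo-1)K$ because the $U\ket{\psi_n}$ with $n>d_\rr$ span $W$. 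Maximising each sum under its weight budget and the $[0,1]$ bounds, and then taking the supremum over $P$, gives $\sum_{m=1}^K r_m'^{\downarrow}(U)\leqslant \sum_{n=1}^{K}p_n^{\downarrow}+\sum_{j=1}^{(d_\oo-1)K}p_{d_\rr+j}^{\downarrow}$, which is exactly $\sum_{m=1}^K r_m'^{\downarrow}(U_\mathrm{opt}(0))$ once one reads off the block decomposition $\Pi_0^{\downarrow},\Pi_1^{\downarrow},\dots$ of $\bm{p^{\downarrow}}$. As $\sum_m r_m'^{\downarrow}=1$ for every $U$, this is the majorisation $\bm{r'^{\downarrow}(U)}\prec\bm{r'^{\downarrow}(U_\mathrm{opt}(0))}$.

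It remains to verify that the rule \eqref{optimal U appendix} attains both optima. It belongs to $[U^g_\mathrm{maj}]$ by construction, so it maximises $p(\varphi_1|\rho_\oo')$; and since it sends eigenvectors only to product vectors $\ket{\cdot}\otimes\ket{\xi_m}$, a direct trace gives $\rho_\rr'(U_\mathrm{opt}(0))=\sum_m r_m'\pr{\xi_m}$ with $r_m'=p_m^{\downarrow}+\sum_{l=1}^{d_\oo-1}\Pi_m^{\downarrow}(l)$, which is non-increasing in $m$ (because $p_m^{\downarrow}\geqslant p_{m+1}^{\downarrow}$ and $\Pi_m^{\downarrow}(l)\geqslant\Pi_{m+1}^{\downarrow}(l)$ entrywise), hence passive; therefore $\tr[H_\rr\rho_\rr'(U_\mathrm{opt}(0))]=\bm{r'^{\downarrow}(U_\mathrm{opt}(0))}\cdot\bm{\lambda^{\uparrow}}$. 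Chaining the three facts above with \corref{corollary-monotone-lattice-superadditive} (applied to $\bm{r'^{\downarrow}(U)}\prec\bm{r'^{\downarrow}(U_\mathrm{opt}(0))}$ against $\bm{\lambda}$) yields $\tr[H_\rr\rho_\rr'(U)]\geqslant\bm{r'^{\downarrow}(U)}\cdot\bm{\lambda^{\uparrow}}\geqslant\bm{r'^{\downarrow}(U_\mathrm{opt}(0))}\cdot\bm{\lambda^{\uparrow}}=\tr[H_\rr\rho_\rr'(U_\mathrm{opt}(0))]$, so $U_\mathrm{opt}(0)$ is a minimiser.

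Finally, to identify the \emph{entire} equivalence class one reverses the chain: equality in the first inequality forces $\rho_\rr'$ passive; equality in the majorisation/inner-product step forces $\bm{r'^{\downarrow}(U)}=\bm{r'^{\downarrow}(U_\mathrm{opt}(0))}$ up to the plateaus of $\bm{\lambda^{\uparrow}}$; and tightness of the weight-budget estimate forces, fibre by fibre, exactly the assignment of the blocks $\Pi_0^{\downarrow}$ and $\Pi_m^{\downarrow}$ prescribed by \eqref{optimal U appendix}, the only residual freedom being the relabelling of degenerate $p_n^{\downarrow}$ and the choice of each orthonormal set $\{\ket{\varphi_l^m}\}_l$ inside the orthogonal complement of $\ket{\varphi_1}$ in $\h_\oo$ — which is precisely the freedom described in \app{equivalence class of unitary operators}. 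I expect this last bookkeeping of the equality cases, in particular handling degeneracies in $\bm{p^{\downarrow}}$ and in the spectrum of $H_\rr$, to be the fiddliest part; the forward direction, that the stated rule attains the minimum, is the clean corollary of \lemref{recursive-ky-fan}, \thmref{monotone-lattice-superadditive}, and the weight-counting estimate.
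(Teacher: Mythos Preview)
Your argument is correct, and in fact more carefully worked out than the paper's own proof. The paper's proof is a short sketch: it observes that the first line of \eqref{optimal U appendix} is exactly \lemref{maximum-probability} with the particular choice $\ket{\xi_m'}=\ket{\xi_m}$ (justified by \lemref{recursive-ky-fan} and \corref{corollary-monotone-lattice-superadditive}), and then asserts that the second line minimises the remaining heat ``by the same line of reasoning as in \lemref{minimising-heat}'', i.e.\ by recursively maximising $p(\xi_1|\rho_\rr')$, then $p(\xi_2|\rho_\rr')$, and so on, over the subspace orthogonal to $\ket{\varphi_1}\otimes\h_\rr$.

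You take a genuinely different route for the core step: rather than invoking the recursive $f$-vector argument of \lemref{minimising-heat} by analogy, you prove the majorisation $\bm{r'^{\downarrow}(U)}\prec\bm{r'^{\downarrow}(U_\mathrm{opt}(0))}$ directly via a Ky Fan weight-counting estimate over rank-$K$ projections, exploiting the clean decomposition of $[U^g_\mathrm{maj}]$ into the $\ket{\varphi_1}\otimes\h_\rr$ block and its complement $W$. This buys you an explicit, self-contained inequality for every $K$, with the budget constraints $K$ and $(d_\oo-1)K$ making the block structure $\Pi_0^\downarrow,\Pi_1^\downarrow,\dots$ appear transparently. The paper's approach is shorter but leans on the reader to transplant the recursive argument of \lemref{minimising-heat} into the constrained setting; yours is longer but makes the majorisation a standalone computation and also sets up the equality analysis for the full equivalence class more cleanly. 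Both are valid; yours is the more robust write-up.
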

\begin{proof}
 The first line conforms with the conditions imposed by \lemref{maximum-probability} and, as such, results in $p(\varphi_1|\rho_\oo')= p_{\varphi_1}^{\max}$. However, here we are restricted to the case  $\ket{\xi_m'}=\ket{\xi_m}$ for all $m$, thereby minimising the contribution to heat dissipation by \corref{corollary-monotone-lattice-superadditive} and \lemref{recursive-ky-fan}. The second line, by virtue of not affecting $p(\varphi_1| \rho_\oo')$, is evidently allowed for a unitary operator in the equivalence class $[U_{\mathrm{maj}}^g]$. This rule takes the $d_\rr$ largest remaining probabilities to states $ \ket{\varphi_l^1}\otimes\ket{\xi_{1}}$, thereby maximising the probability associated with $\ket{\xi_1}$, and so on for the other states $\ket{\xi_m}$. By the same line of reasoning as in \lemref{minimising-heat}, therefore, the contribution to heat dissipation from this line is minimal.   
\end{proof}

\subsection{The tradeoff between probability of information erasure and minimal heat dissipation}\label{proof trade-off relation}

Let us make the following observations:

\begin{enumerate}[(a)]
\item
For any value of $\Delta Q$, $p(\varphi_1| \rho_\oo')$ is maximised when the eigenvectors of $\rho_\oo'$ that have support on $\ket{\varphi_1}$ are given by the set $\{\ket{\varphi_l}\}_l$. This follows from \corref{corollary-monotone-lattice-superadditive}, which implies that  $p(\varphi_1| \rho_\oo')=\sum_l o_l^{'\downarrow} |\<\varphi_1|\varphi_l'\>|^2\leqslant o_1^{'\downarrow}$, where $\rho_\oo'=\sum_l o_l^{'\downarrow}\pr{\varphi_l'} $.
\item For any value of $p(\varphi_1|\rho_\oo')$, $\Delta Q$ is minimised when the eigenvectors of $\rho_\rr'$ are given by the set $\{\ket{\xi_m}\}_m$. This follows from \lemref{recursive-ky-fan}.
\end{enumerate}
  Observations (a) and (b), together, show that in general the optimal case will require that, for all $n$,   $U\ket{\psi_n}=\sum_l\sqrt{\gamma^n_l} \ket{\varphi_l}\otimes \ket{\xi_l^n}$. Here  $\gamma_l^n\geqslant 0$ are the Schmidt coefficients, and  $\ket{ \xi^n_l}= e^{\imag \phi^n_l}\sigma_n \ket{\xi_l}$ with $\sigma_n$  a permutation on the set $\{\ket{\xi_l}\}_l$ and $\phi^n_l \in [0,2\pi)$ a phase. 

Consider now the  algorithm for sequential swaps between 2-dimensional subspaces of $\h_\oo$ and  $\h_\rr$, shown in \fig{sequential swap appendix}.  

\begin{figure}[!h]
\begin{tabular}{|c||c|}\hline
Step (1) & Set $i=2$ and $m=d_\rr$. \\\hline
Step (2) & Sequentially swap  $\ket{\varphi_1}\otimes \ket{\xi_{i}}$  with the vectors $\ket{\varphi_{l}}\otimes \ket{\xi_{m}}$ \\ &  with $l$ running from $d_\oo$ down through to $2$, only if $p_{1,i}< p_{l,m}$. \\\hline
Step (3) & If $m>1$, set $m= m-1$ and go back to Step (2). Else, proceed to Step (4). \\\hline
Step (4) & If $i <d_\rr$, set $i=i+1$, $m=d_\rr$, and go back to Step (2). Else, terminate.  \\\hline
\end{tabular}
\caption{The sequential swap algorithm.}\label{sequential swap appendix}
\end{figure}

During each step of the algorithm, we denote the (updated) probability $p(\varphi_l\otimes \xi_m|U_\mathrm{step} \rho U_\mathrm{step}^\dagger)$ as $p_{l,m}$. Here, $U_\mathrm{step}$ represents the unitary operator that results from conducting the algorithm up to some particular step.

\
\begin{lem}\label{sequential-swap-decreasing-error}
The sequential swap algorithm produces a non-increasing sequence of errors, $\bm{\delta}^\downarrow:=\{\delta_j^{\downarrow}\}_j$, commensurate with a non-decreasing sequence of heat, $\bm{\Delta Q}^{\uparrow}:= \{\Delta Q_j^\uparrow\}_j$, such that the resultant state $\rho_\oo'$ is always passive.

\end{lem}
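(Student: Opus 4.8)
The plan is to carry out the whole argument in the probability-matrix picture. Since every operation in the algorithm is a transposition of two vectors of the product basis $\{\ket{\varphi_l}\otimes\ket{\xi_m}\}_{l,m}$ and the initial state $\rho$ is diagonal in that basis, the running state is diagonal in it at every stage; write $p_{l,m}$ for its diagonal entry at $\ket{\varphi_l}\otimes\ket{\xi_m}$, so $p_{l,m}=o_l^\downarrow r_m^\downarrow$ initially. Then $\rho_\oo'=\sum_l o_l'\,\pr{\varphi_l}$ with $o_l'=\sum_m p_{l,m}$, $\rho_\rr'=\sum_m r_m'\,\pr{\xi_m}$ with $r_m'=\sum_l p_{l,m}$, the error is $\delta=p_{\varphi_1}^{\max}-\sum_m p_{1,m}$, and, since $\rho_\rr(\beta)$ is fixed, $\Delta Q=\sum_m\lambda_m^\uparrow r_m'-\tr[H_\rr\rho_\rr(\beta)]$. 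I also assume, as in the definition of passive information erasure, that $\{\ket{\varphi_l}\}_l$ is the eigenbasis of $H_\oo$ in non-decreasing order of energy, so that ``$\rho_\oo'$ passive'' means precisely $o_1'\geqslant o_2'\geqslant\dots\geqslant o_{d_\oo}'$.

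The first and central step is a structural claim: \emph{in outer iteration $i$ every executed swap has target column $m<i$.} I would prove this by induction on $i$, carrying along the auxiliary statement that at the start of iteration $i$ the column of $\ket{\xi_m}$ is still in its original product form $p_{l,m}=o_l^\downarrow r_m^\downarrow$ for all $m\geqslant i$ and $p_{1,i}=o_1^\downarrow r_i^\downarrow$. Granting this, while the middle loop of iteration $i$ runs over $m\geqslant i$ one has, for every $l\geqslant 2$, the bound $p_{l,m}=o_l^\downarrow r_m^\downarrow\leqslant o_2^\downarrow r_m^\downarrow\leqslant o_1^\downarrow r_i^\downarrow=p_{1,i}$, using $o_1^\downarrow\geqslant o_2^\downarrow$ and $r_m^\downarrow\leqslant r_i^\downarrow$; hence the test $p_{1,i}<p_{l,m}$ fails, no swap occurs, and $p_{1,i}$ is unchanged. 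Therefore during iteration $i$ only columns $m\leqslant i-1$ (and the single entry $p_{1,i}$) are ever modified, which leaves columns $\geqslant i+1$ and the entry $p_{1,i+1}$ in product form and closes the induction.

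Given this, the three asserted properties follow. For \textbf{monotonicity}: an executed swap of $\ket{\varphi_1}\otimes\ket{\xi_i}$ with $\ket{\varphi_l}\otimes\ket{\xi_m}$ satisfies $p_{l,m}>p_{1,i}$ and transfers net weight $p_{l,m}-p_{1,i}>0$ from column $m$ to column $i$; since $m<i$ and $\bm{\lambda}^\uparrow$ is non-decreasing, $\Delta Q$ changes by $(\lambda_i^\uparrow-\lambda_m^\uparrow)(p_{l,m}-p_{1,i})\geqslant 0$, so the heat sequence is non-decreasing, while $\sum_m p_{1,m}$ strictly increases, so the error sequence is strictly decreasing; the inequality $\sum_m p_{1,m}\leqslant p_{\varphi_1}^{\max}$ of \lemref{maximum-probability} keeps $\delta\geqslant 0$. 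For \textbf{passivity} I would establish the invariant that every column is non-increasing, $p_{1,m}\geqslant p_{2,m}\geqslant\dots\geqslant p_{d_\oo,m}$: it holds initially; it survives a column playing the role of the source column $i$ (only its top entry changes, and only upward); and — because the middle loop runs $l$ from $d_\oo$ down to $2$ — one pass over a target column acts on it as a single insertion step (the old value $p_{1,i}$ is placed at its sorted rank among rows $2,\dots,d_\oo$, the entries above it shift one row up, and the largest of them is ejected into position $(1,i)$), which visibly preserves sortedness. Summing these inequalities over $m$ gives $o_1'\geqslant\dots\geqslant o_{d_\oo}'$, so $\rho_\oo'$ is passive at every stage.

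It remains to argue that the walk terminates at $\delta=0$ and that the heat $\Delta Q_j^\uparrow$ attached to each error $\delta_j^\downarrow$ is \emph{minimal} — the stronger claim made around \fig{sequential swap appendix}. Here the plan is to recycle the reasoning of \thmref{optimal-unitary-maximal}: by observations (a) and (b) above, for a fixed target $P=p_{\varphi_1}^{\max}-\delta$ an optimal post-state may be taken diagonal in $\{\ket{\varphi_l}\otimes\ket{\xi_m}\}_{l,m}$, so minimising $\Delta Q$ reduces to distributing the multiset $\{p_n^\downarrow\}_n$ over the matrix with row-$1$ sum at least $P$ so as to minimise $\sum_m\lambda_m^\uparrow(\sum_l p_{l,m})$; an exchange argument of exactly the type used for \lemref{minimising-heat} and \thmref{optimal-unitary-maximal} shows the greedy filling produced by the algorithm is optimal, and in particular that at termination row $1$ holds precisely $\{p_1^\downarrow,\dots,p_{d_\rr}^\downarrow\}$, whence $\sum_m p_{1,m}=p_{\varphi_1}^{\max}$ and $\delta=0$. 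I expect this last step — pinning the intermediate configurations down exactly and proving their heat-optimality, rather than merely their qualitative trend — to be the principal obstacle; the monotonicity and passivity statements themselves come out cleanly from the structural claim and the column-sortedness invariant.
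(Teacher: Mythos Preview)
Your proposal is correct and follows essentially the same line as the paper: the central structural claim that every executed swap has $m<i$ is exactly the observation the paper isolates (``if $o_1^\downarrow r_i^\downarrow<o_l^\downarrow r_m^\downarrow$ with $l\geqslant 2$, then by necessity $i>m$'', then ``recursively for all $i$''), and your column-sortedness invariant is precisely what the paper invokes for passivity --- your write-up is simply more explicit about the induction and about why a single pass of the inner loop preserves sortedness of the target column. Your final paragraph, on termination at $\delta=0$ and on the heat-\emph{minimality} of each $\Delta Q_j^\uparrow$, actually addresses the content of the subsequent \thmref{sequential-swap-optimal} rather than the present lemma, so you need not carry that obstacle through here.
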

\begin{proof}
 For every iteration of Step (2), each swap operation increases $p(\varphi_1|\rho_\oo')$, so we obtain the non-increasing sequence of errors $\bm{\delta^\downarrow}$ by construction. Furthermore, each swap increases $p(\xi_i|\rho_\rr')$, while decreasing $p(\xi_m|\rho_\rr')$. To show that this always leads to an increase in heat by \corref{corollary-monotone-lattice-superadditive}, we must show that, for each swap, $i>m$.  Every swap in each iteration of Step (2) effects a permutation on the set $\{p_{1,i},p_{2,m},\dots,p_{d_\oo,m}\}$.   Initially, $p_{1,i}=o_1^\downarrow r_i^\downarrow$. We note that if $o_1^\downarrow r_i^\downarrow < o_l^\downarrow r_m^\downarrow$ with $l\geqslant 2$,  then by necessity $i > m$. As such, the swaps for the first iteration of Step (2), that involve state $\ket{\varphi_1}\otimes \ket{\xi_2}$ and  lead to a permutation in $\{p_{1,2},p_{2,1},\dots,p_{d_\oo,1}\}$, result in a decrease in $p(\xi_1|\rho_\rr')$ and an increase in $p(\xi_2|\rho_\rr')$, which indeed  leads to a non-decreasing sequence of heat.  And so on recursively for all $i$. To show that $\rho_\oo'$ is always passive, we need to show that after each swap,  $p(\varphi_i|U_\mathrm{step} \rho U_\mathrm{step}^\dagger)=\sum_mp_{i,m} \geqslant p(\varphi_j|U_\mathrm{step} \rho U_\mathrm{step}^\dagger)= \sum_m p_{j,m}$ for all $i<j$. This follows from the fact that $\{p_{i,m}\}_i$ are always in non-increasing order, and that every element in $\{p_{i,m}\}_{i\geqslant 2}$ is greater than or equal to all those in $\{p_{i,m'}\}_{i\geqslant 2}$ if $m < m'$.    
\end{proof} 

Now, we wish to show that the non-decreasing sequence of heat $\bm{\Delta Q}^\uparrow$ is optimal for the associated non-increasing sequence of errors $\bm{\delta}^\downarrow$.
 
\  

\begin{thm}\label{sequential-swap-optimal}
If an error $\delta$ can be achieved using the sequential swap algorithm, the consequent heat dissipation will be optimal. Achieving the same $\delta$ with the presence of entanglement in the vectors $\{U_\mathrm{step} \ket{\psi_n}\}_n$ will either increase $\Delta Q$, $\Delta W$, or both.   
\end{thm}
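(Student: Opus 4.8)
The plan is to recast the whole question as a lexicographic linear program over a permutohedron and then identify the sequential swap algorithm with its optimal vertex. First I would use observations (a) and (b) above to fix the form of a candidate optimal $U$: for any prescribed value of $p(\varphi_1|\rho_\oo')$ the heat-optimal unitary may be taken to act as $U\ket{\psi_n}=\sum_l\sqrt{\gamma_l^n}\,\ket{\varphi_l}\otimes\ket{\xi_l^n}$, with each $\{\ket{\xi_l^n}\}_l$ a phased permutation of the energy eigenbasis $\{\ket{\xi_m}\}_m$; hence $\rho_\oo'$ and $\rho_\rr'$ are diagonal in $\{\ket{\varphi_l}\}$ and $\{\ket{\xi_m}\}$ respectively, and — in the passive-erasure setting, where $\{\ket{\varphi_l}\}$ is the eigenbasis of $H_\oo$ in increasing order of energy and $\ket{\varphi_1}$ the ground state — passive. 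Then the three relevant observables $\pr{\varphi_1}\otimes\one_\rr$, $\one_\oo\otimes H_\rr$ and $H_\oo\otimes\one_\rr$ mutually commute, hence are simultaneously diagonal in the product basis $\{\ket{\varphi_l}\otimes\ket{\xi_m}\}$. Writing $\bm q=\{q_{l,m}\}$ for the diagonal of $\rho':=U\rho U^\dagger$ in that basis, for \emph{any} $U$ the Schur–Horn theorem gives $\bm q\prec\bm p^\downarrow$, and, up to additive constants fixed by the initial state,
\[
p(\varphi_1|\rho_\oo')=\sum_m q_{1,m},\qquad \Delta Q=\sum_{l,m}q_{l,m}\lambda_m^\uparrow-\text{const},\qquad \Delta W=\sum_{l,m}q_{l,m}\langle\varphi_l|H_\oo|\varphi_l\rangle-\text{const}.
\]
Since all three quantities depend on $U$ only through $\bm q$, the optimisation is bounded below by the lexicographic linear program: minimise first $\sum_{l,m}q_{l,m}\lambda_m^\uparrow$ and then $\sum_{l,m}q_{l,m}\langle\varphi_l|H_\oo|\varphi_l\rangle$ over the polytope $\{\bm q:\bm q\prec\bm p^\downarrow,\ \sum_m q_{1,m}\geqslant p_{\varphi_1}^{\max}-\delta\}$.

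Next I would solve this program using \thmref{monotone-lattice-superadditive} and \corref{corollary-monotone-lattice-superadditive}. The constraint $\sum_m q_{1,m}\geqslant p_{\varphi_1}^{\max}-\delta$ is cheapest to satisfy by piling the largest probabilities $p_1^\downarrow\geqslant p_2^\downarrow\geqslant\cdots$ into the $\ket{\varphi_1}$-column at the lowest reservoir energies until the required mass is reached; the leftover mass is then packed, energy shell by energy shell, into the remaining $d_\oo-1$ columns (this minimises $\sum_{l,m}q_{l,m}\lambda_m^\uparrow$, exactly as in \lemref{minimising-heat}) and, within each shell, onto the lowest object energies $\langle\varphi_l|H_\oo|\varphi_l\rangle$ (this minimises $\Delta W$ among those, leaving $\rho_\oo'$ passive and majorising the alternatives). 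The resulting $\bm q^\star$ is a vertex of the polytope, i.e.\ a permutation of $\bm p^\downarrow$, and by inspection it is precisely the diagonal produced by the sequential swap algorithm of \fig{sequential swap appendix} when $\delta=\delta_j^\downarrow$. A permutation unitary realises $\bm q^\star$, so the swap-algorithm value of $\Delta Q$ (and, given that, of $\Delta W$) equals the lexicographic minimum — this is the first assertion.

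For the second assertion I would argue that a $U$ whose vectors $\{U_\mathrm{step}\ket{\psi_n}\}_n$ contain a genuinely entangled state (one that cannot be removed by the equivalence-class freedom) produces a $\rho'$ with nonzero coherence in the product basis; since $\rho'$ has spectrum $\bm p^\downarrow$ with generically distinct entries, a diagonal equal to a permutation of $\bm p^\downarrow$ would force $\rho'$ to be diagonal, so the diagonal $\bm q(U)$ is \emph{not} a vertex of the permutohedron. Feeding $\bm q(U)$ into the program: either it does not lie on the heat-optimal face, so $\Delta Q(U)>\Delta Q^\star$; or it lies on that face, in which case \corref{corollary-monotone-lattice-superadditive} applied on the face gives $\Delta W(U)\geqslant\Delta W^\star$, with equality only if $\bm q(U)$ shares the extremal structure of $\bm q^\star$, which is impossible for a non-vertex modulo the equivalence class. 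Hence entanglement increases $\Delta Q$, $\Delta W$, or both.

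The hardest step is the middle paragraph: rigorously showing that the greedy ``fill the $\ket{\varphi_1}$-column at the lowest energies first, then fill by shells'' prescription is the \emph{global} optimum of the constrained lexicographic program, since the erasure constraint couples the columns and competes with the pure energy ordering of \lemref{minimising-heat}. One must also handle degeneracies in $\bm p^\downarrow$ carefully (these generate the equivalence-class freedom, so ``entanglement'' is only well posed after quotienting by it) and the borderline case $\delta\in(\delta_{j+1}^\downarrow,\delta_j^\downarrow)$, where a single entangling swap $\mathrm{SW}_\gamma$ is forced and is itself the optimum for that $\delta$.
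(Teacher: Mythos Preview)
Your permutohedron/Schur--Horn framework is a genuinely different route from the paper's. The paper gives only a sketch: it starts at $U_\mathrm{opt}^{\mathrm p}(0)$, runs the swap algorithm one step in reverse, and checks by hand that replacing that single swap by an entangling two-level rotation (at the same erasure probability) moves population from a low-energy to a high-energy reservoir level, or from a low- to a high-energy object level, hence raises $\Delta Q$ or $\Delta W$; the comparison is local and only the first step of the induction is written out. Your global linear-program picture is more systematic, and it makes the second assertion especially clean: genuine entanglement forces the diagonal $\bm q$ off the vertices of the permutohedron, and a non-vertex cannot be the lexicographic minimiser of a linear objective on a polytope.

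There is, however, a concrete gap in your middle paragraph. The greedy prescription ``pile the consecutive largest $p_1^\downarrow,p_2^\downarrow,\dots$ into the $\ket{\varphi_1}$-column at the lowest reservoir energies until the target mass is reached, then fill shells'' does \emph{not} reproduce the swap algorithm at intermediate $\delta_j$, and is not the LP optimum there. Take $d_\oo=2$, $d_\rr=3$, $(o_1^\downarrow,o_2^\downarrow)=(0.6,0.4)$, $(r_1^\downarrow,r_2^\downarrow,r_3^\downarrow)=(0.5,0.3,0.2)$, so $\bm p^\downarrow=(0.30,0.20,0.18,0.12,0.12,0.08)$. After its \emph{first} swap the algorithm has $\ket{\varphi_1}$-column $(0.30,0.20,0.12)=\{p_1^\downarrow,p_2^\downarrow,p_5^\downarrow\}$ summing to $0.62$: it has \emph{skipped} $p_3^\downarrow=0.18$. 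Your consecutive-largest greedy reaches $0.50$ after two entries and $0.68$ after three and never produces $0.62$; and forcing the column to be $\{p_1^\downarrow,p_2^\downarrow,p_3^\downarrow\}$ gives shell totals $(0.42,0.32,0.26)$, strictly more heat than the swap configuration's $(0.48,0.32,0.20)$. The swap algorithm is not ``largest-into-column-first''; it promotes entries into the $\ket{\varphi_1}$-column in the order of least heat cost per swap, starting from the product state. So the ``by inspection'' identification fails, and the appeal to \lemref{minimising-heat} for the leftover mass is invalid once the column is not an initial segment of $\bm p^\downarrow$. Your LP framework can still carry the proof, but the missing step is to verify directly that each swap-algorithm permutation is LP-optimal at its own $\delta_j$ --- e.g.\ by an exchange argument over adjacent vertices of the permutohedron, which is precisely the paper's local comparison rephrased in your language.
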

\begin{proof}
By \corref{corollary-monotone-lattice-superadditive}, \lemref{recursive-ky-fan} and \lemref{sequential-swap-decreasing-error}, the heat dissipation due to the sequential swap algorithm is minimal if we are restricted to swap operations. If we are not restricted to performing swap operations, we could also achieve the same error $\delta$ by allowing for entanglement in the vectors $\{U_\mathrm{step} \ket{\psi_n}\}_n$. To show that this  will result in a greater amount of heat dissipation,  it is sufficient to show that doing so would increase  $p_{i,m}$ and decrease $p_{i,m'}$, for some $i$, such that $m>m'$. Likewise, we may show that this would increase the average energy of $\rho_\oo'$, and hence increase $\Delta W$,  by demonstrating that the process would  increase  $p_{i,m}$ and decrease $p_{j,m}$, for some $m$, such that $i>j$. 

Here is a sketch of the proof. First start with $\rho = U_\mathrm{opt}^p(0) \rho {U_\mathrm{opt}^p(0)}^\dagger$, which coincides with  the smallest error $\delta^{\downarrow}_{j_\mathrm{max}}=0$, where $j_\mathrm{max}$ represents  the final step in the swap algorithm.  Here, we have $p_{1,d_\rr}^{j_\mathrm{max}}= p^\downarrow_{d_\rr}$ and $p_{2,1}^{j_\mathrm{max}}= p^\downarrow_{d_\rr+1}$.  The first step of the sequential swap algorithm, run backwards, gives us  $p_{1,d_\rr}^{j_\mathrm{max}-1}= p^\downarrow_{d_\rr+1} $ and $p_{2,1}^{{j_\mathrm{max}}-1}= p^\downarrow_{d_\rr}$, with $\delta^{\downarrow}_{j_\mathrm{max}-1}= p^\downarrow_{d_\rr}- p^\downarrow_{d_\rr+1}$. All other values are the same as before.  Now instead of the swap operation, have 
\begin{equation}
U_{j_\mathrm{max}-1}\ket{\psi_{d_\rr}}= \sqrt{\gamma} \ket{\varphi_1} \otimes \ket{\xi_{d_\rr}} + \sqrt{1-\gamma} \ket{\varphi_i}\otimes \ket{\xi_m},
\end{equation} and 
\begin{equation}
U_{j_\mathrm{max}-1}\ket{\psi_{d_\rr+(m-1)(d_\oo-1) +i}}= \sqrt{1-\gamma} \ket{\varphi_1} \otimes \ket{\xi_{d_\rr}} - \sqrt{\gamma} \ket{\varphi_i}\otimes \ket{\xi_m},
\end{equation} with all other $U_{j_\mathrm{max}-1}\ket{\psi_n}$ defined by $U_\mathrm{opt}^p(0)$. With some choice of $\gamma,i,m$, we can again obtain 
\begin{equation}
p_{1,d_\rr}^{j_\mathrm{max}-1}= \gamma p^\downarrow_{d_\rr} + (1-\gamma) p^\downarrow_{d_\rr+(m-1)(d_\oo-1)+i}= p^\downarrow_{d_\rr+1},
\end{equation} and hence the same value of $\delta^{\downarrow}_{j_\mathrm{max}-1}$ as before. This, however, will lead to     
\begin{equation}
p_{2,1}^{j_\mathrm{max}-1}=   p^\downarrow_{d_\rr+1}\leqslant p^\downarrow_{d_\rr} ,
\end{equation} and 
\begin{equation}
p_{i,m}^{j_\mathrm{max}-1}= (1-\gamma )p^\downarrow_{d_\rr} + \gamma p^\downarrow_{d_\rr+(m-1)(d_\oo-1)+i} \geqslant p^\downarrow_{d_\rr+(m-1)(d_\oo-1)+i}.
\end{equation}  
In other words, using the new entangling unitary operator, instead of the sequential swap algorithm, will result in $p_{2,1}$ to decrease, and $p_{i,m}$ to increase.  If $i=2$ and $m\geqslant 2$, this will result in a larger $\Delta Q^\uparrow_{j_\mathrm{max}-1}$. Conversely, if $m=1$ and $i\geqslant 3$, this will increase the average energy of  the object, and thereby increase $\Delta W$.  If both $i\geqslant 3$ and $m\geqslant 2$, then both $\Delta Q$ and $\Delta W$ will be larger.   The same line of reasoning would apply for entanglement of higher Schmidt-rank. 
\end{proof}
    
\subsection{Full erasure of a maximally mixed qudit with a  harmonic oscillator}\label{qudit erasure calculations appendix} 

\begin{figure}[!htb]
\centering
\subfigure[Decreasing $\omega$ and increasing $d$: constant $\| H_\rr \|$.]{\label{reset-Landauer-small-freq-norm-constant}
\includegraphics[width=8 cm]{reset-Landauer-small-freq-norm-constant.eps}}
\subfigure[Decreasing $\omega$ and increasing $d$: increasing $\| H_\rr \|$]{\label{reset-Landauer-small-freq-norm-increase}
\includegraphics[width=8 cm]{reset-Landauer-small-freq-norm-increase.eps}}
\caption{ (Colour online)  Comparison between two different methods of taking the double limit of $d \to \infty, \omega \to 0$, and their effect on  $p^{\max}_{\varphi_1}$ and $\Delta L$, when $d_\oo=2$  (a)  Here the frequencies are  $\omega=1/(d-1)$. Therefore the Hamiltonian norm is  $1$ for all $d$.  For any given $\beta$, as $d$ grows larger, thereby making $\omega$ smaller, both $\Delta L$ and $p^{\max}_{\varphi_1}$ decrease. (b) Here $d=2^n+1$  for frequencies  $\omega=1/n$ with $n \in \mathds{N}$. Therefore the Hamiltonian norm is  $2^n/n$.  For a sufficiently large $\beta$, as $n$ grows larger, thereby making $\omega$ smaller,  $\Delta L$ decreases while $p^{\max}_{\varphi_1}$ increases.   }
\end{figure}

Here, we expound on the example of using a ladder system as a reservoir, but consider what happens as we take the limit of infinitely large $d$. In this limit we may call the ladder system a harmonic oscillator. Furthermore, we consider the object as a qudit, with Hilbert space $\h_\oo \simeq \co^{d_\oo}$, prepared in the maximally mixed state
\begin{equation}
\rho_\oo= \frac{1}{d_\oo}\sum_{l=1}^{d_\oo} \pr{\varphi_l}.
\end{equation}
Consider a harmonic oscillator of frequency $\omega$, with the ground state energy, $\lambda_1^\uparrow$, defined as zero. As such, the $m$\ts{th} smallest energy is $\lambda^\uparrow_m= \omega(m-1)$. Given a fixed and finite $\omega$,  in the limit as $d$ tends to infinity there will be infinitely many eigenvalues of $H_\rr$ that become formally infinite, and hence infinitely many probabilities $r_m^\downarrow$ vanish. As such, we have
\begin{equation}
\lim_{d \to \infty}\rho'=\pr{\varphi_1}\otimes \rho_\rr',
\end{equation}
whereby $p_{\varphi_1}^\mathrm{max}=1$. In addition, 
\begin{equation}
\lim_{d \to \infty} \rho_\rr'= \sum_{m=1}^\infty \frac{r^\downarrow_m}{d_\oo}\sum_{j=0}^{d_\oo-1}\left(\pr{\xi_{d_\oo m-j}} \right),
\end{equation}
and a resulting heat dissipation of
\begin{align}
 \lim_{d \to \infty}\Delta Q&=\sum_{m=1}^\infty \frac{r^\downarrow_m}{d_\oo}\sum_{j=1}^{d_\oo-1}(\lambda^\uparrow_{d_\oo m-j})- \sum_{m=1}^\infty r^\downarrow_m \lambda_m^\uparrow, \nonumber \\ &=  \frac{\omega(d_\oo-1)}{2}\coth\left(\frac{\beta \omega}{2}\right)> \frac{(d_\oo-1)}{\beta}.
\end{align}
 $\Delta Q$ approaches $(d_\oo-1)k_B T$ in the limit as $\omega$ becomes vanishingly small, and hence the optimal case is achieved when we take the double limit of  $d$ going to infinity while $\omega$ goes to zero. 

Of course, the ``\emph{rate}'' at which we take the limit $d\to \infty$ must  be greater than that at which $\omega$ approaches zero.     As shown in \fig{reset-Landauer-small-freq-norm-constant}, for the case of $d_\oo=2$,  if we increase $d$ while decreasing $\omega$ in such a way so as to keep $\|H_\rr \|$ constant,  both the probability of qubit erasure and the heat dissipation  decrease. Precisely, this may be achieved if we define the frequency as $\omega := \| H_\rr \|/(d-1)$.
 In the limit as $d$ tends to infinity and $\omega$ vanishes, the spectra of $H_\rr$ and $\rho_\rr(\beta)$ become  continuous. That is to say, $\lambda_{m+1}^\uparrow- \lambda_m^\uparrow \to 0$ and $r_{m}^\downarrow- r_{m+1}^\downarrow \to 0$, for all $m$. We may therefore simplify our calculations by replacing sums with Riemann integrals. First, we note that in this case
the maximum probability of qudit erasure is\begin{equation}
\lim_{\substack{\omega \to 0\\d \to \infty}}p_{\varphi_1}^{\max} =\frac{\int_0^{\frac{\|H_\rr\|}{d_\oo}}e^{-\beta x}dx}{\int_0^{\|H_\rr\|}e^{-\beta x}dx}= \frac{1}{\sum_{j=0}^{d_\oo-1} e^{-\frac{\beta j\|H_\rr \|}{d_\oo}}}.\label{maximum probability case 2}
\end{equation} 
 Moreover the heat dissipation is 
\begin{align}
\lim_{\substack{\omega \to 0 \\d \to \infty}} \Delta Q &= \lim_{\substack{\omega \to 0\\ d \to \infty}}\frac{1}{d_\oo}\sum_{m=1}^{d/d_\oo} \left(\sum_{j=0}^{d_\oo-1}r^\downarrow_{m+jd/d_\oo}\right)\left(\sum_{j=0}^{d_\oo-1}\lambda^\uparrow_{d_\oo m - j}\right) -\lim_{\substack{\omega \to 0\\ d \to \infty}}\sum_{m=1}^d r^\downarrow_m \lambda^\uparrow_m,  \nonumber \\  &= 
\frac{\frac{1}{d_\oo}\int_0^{\|H_\rr\|}\left(\sum_{j=0}^{d_\oo-1}e^{-\beta (\frac{x+j\|H_\rr\|}{d_\oo})}\right)x \ dx}{\int_0^{\|H_\rr\|}e^{-\beta x} \ dx} -\frac{\int_0^{\|H_\rr\|}e^{-\beta x} x \ dx}{\int_0^{\|H_\rr\|}e^{-\beta x} \ dx}, \nonumber \\
&= \frac{d_\oo-1}{\beta} + \frac{\|H_\rr\|}{2}\left[\coth\left(\frac{\beta \| H_\rr\|}{2}\right) - \coth\left(\frac{\beta \| H_\rr\|}{2 d_\oo}\right) \right].\label{maximum heat case 2}
\end{align}
These functions take the values of $1$ and $(d_\oo-1)k_B T$, respectively, precisely when $\|H_\rr\|$ is infinitely large. Therefore, if $\omega$ and $d$ decrease and increase, respectively, in such a way so that $\|H_\rr\|$ also increases, then in this limit we achieve the optimal case of full information erasure with the minimal heat dissipation of $(d_\oo-1)k_B T$. One way of ensuring this,  as shown in \fig{reset-Landauer-small-freq-norm-increase}, is to  define the dimension of the reservoir as $d=2^n+1$,  where $n$ is a natural number, while defining the frequency as $\omega=\thickbar \omega/n$. The Hamiltonian norm will be 
\begin{equation}
\| H_\rr\|= \thickbar \omega \frac{2^n}{n}, 
\end{equation}     
 which, in the limit as $n$ tends to infinity, becomes infinitely large. 
 
\subsubsection{Full Erasure of a qubit with an initial bias} \label{general qubit erasure appendix}

We have shown that when the whole harmonic oscillator is used as a reservoir we can  fully  purify a qubit in a maximally mixed state, where the entropy reduction is $\Delta S=\log(2)$, with a heat cost of $\Delta Q > k_B T$. Here we wish to evaluate the optimal $\Delta Q$ for arbitrary initial states of the qubit and, hence, arbitrary entropy changes $\Delta S$. To this end, define the initial state of the object as
\begin{equation}
\rho_\oo = q \pr{\varphi_1}+(1-q) \pr{\varphi_2} \ , \ q \in \left[\frac{1}{2},1\right).
\end{equation} 
The non-increasing vector of probabilities $\bm{p^\downarrow}$ can therefore be written  as
\begin{equation}
\bm{p^\downarrow}= \{ q r^\downarrow_1,\dots, q r^\downarrow_k, (1-q) r^\downarrow_1, \dots, (1-q) r^\downarrow_k,\dots\}, 
\end{equation}  
where the ordering implies that 
\begin{equation}
\frac{q}{1-q}\geqslant \frac{r^\downarrow_1}{r^\downarrow_k}=e^{\beta \omega(k-1)}.
\end{equation}

After the joint evolution with an infinite-dimensional reservoir, the above sequence $\bm{p^\downarrow}$ describes the spectrum of $\rho_\rr'$, with the first entry associated with eigenvector $\ket{\xi_1}$, and so on.   In the limit of infinitesimally small $\omega$, the energy spectrum of the reservoir and, hence, the probabilities $\bm{r^\downarrow}$ can be approximated as a continuum. We may therefore evaluate $\Delta Q$ by
\begin{align}
\Delta Q &=\frac{ \sum_{n=1}^\infty Q(n)- \int_0^\infty x e^{-\beta x}dx}{\int_0^{\infty}e^{-\beta x} \ dx}= \frac{2q(1-q)\log(\frac{q}{1-q})}{\beta(2q -1)}, \nonumber \\
Q(n)&=q \int_{(2n-2)\Omega}^{(2n-1)\Omega}xe^{-\beta(x-(n-1)\Omega)}dx+(1-q)\int_{(2n-1)\Omega}^{2n \Omega}xe^{-\beta(x-n \Omega)}dx,
\end{align}
where $\Omega$ is the energy ``\emph{width}'' which satisfies $q/(1-q)=e^{\beta \Omega}$. In the limit as $q$ tends to one-half, $\Delta Q$ approaches $1/\beta$ as in our previous analysis.

\subsection{Object as a component of a thermal system}\label{Object as a component of a thermal system proof}

In this case, the heat dissipation is 
\begin{align}
\Delta Q:= \tr[H(U\rho(\beta) U^\dagger-\rho(\beta))]&= \frac{1}{\beta}S(U\rho(\beta) U^\dagger\| \rho(\beta)), \nonumber \\ &= \frac{1}{\beta} \left(\sum_{n=1}^{d_\oo d_\kk} q^U_n \log\left(\frac{1}{p_n^\downarrow}\right)-S(\rho')\right),\nonumber \\
&= \frac{1}{\beta} \left(\bm{q^U} \cdot \bm{\mathrm{log}_p^\uparrow} - S(\rho(\beta))\right),\label{relative entropy term appendix}
\end{align}
where $\bm{q^U}$ is a vector of real numbers
\begin{equation}
q^U_n:=\sum_{m=1}^{d_\oo d_\kk}p_m^\downarrow|\<\xi_n|U|\xi_m\>|^2,\label{relative entropy term probabilities appendix}
\end{equation}
and $\bm{\mathrm{log}_p^\uparrow}:= \{\log(1/p_n^\downarrow)\}_n$ a non-decreasing vector. We now determine the properties that $U\in [U_\mathrm{maj}]$ must satisfy so as to minimise $\Delta Q$ conditional on maximising $p(\Psi|\rho_\oo')$.

\
\begin{prop}\label{optimal unitary global thermal}
For a fixed Hamiltonian, $\Delta Q$ given maximally probable information erasure is minimised by choosing $U$ from an equivalence class of unitary operators $[U_\mathrm{maj}^1] \subset [U_\mathrm{maj}]$ such that $\bm{{q^{U_\mathrm{maj}^1}}^\downarrow}= \bm{{q^{U_\mathrm{maj}^1}}}$ and  $\bm{{q^{U_\mathrm{maj}^1}}^\downarrow }\succ \bm{{q^{U}}}^\downarrow$ for all $U \in [U_\mathrm{maj}]$.
\end{prop}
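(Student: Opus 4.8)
The plan is to reduce the statement to an inequality about the vector $\bm{q^U}$ and then apply the majorisation results of Appendix~\ref{Majorisation theory section}. From \eqref{relative entropy term appendix}, $\beta\Delta Q = \bm{q^U}\cdot\bm{\mathrm{log}_p^\uparrow} - S(\rho(\beta))$, and since the entropy term is $U$-independent, minimising $\Delta Q$ over $U\in[U_\mathrm{maj}]$ is equivalent to minimising $\bm{q^U}\cdot\bm{\mathrm{log}_p^\uparrow}$. Here $\bm{\mathrm{log}_p^\uparrow}=\{\log(1/p_n^\downarrow)\}_n$ is non-decreasing precisely because $\bm{p^\downarrow}$ is non-increasing, so in its natural index order it already coincides with its own increasing rearrangement.

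The heart of the argument is a short chain of inequalities. For any $U\in[U_\mathrm{maj}]$, \thmref{monotone-lattice-superadditive} with $\bm{a}=\bm{q^U}$ and $\bm{b}=\bm{\mathrm{log}_p^\uparrow}$ gives $\bm{{q^U}^\downarrow}\cdot\bm{\mathrm{log}_p^\uparrow}\leqslant\bm{q^U}\cdot\bm{\mathrm{log}_p^\uparrow}$, so reordering the entries of $\bm{q^U}$ into non-increasing order never raises the objective. If moreover $U_\mathrm{maj}^1\in[U_\mathrm{maj}]$ satisfies $\bm{{q^U}^\downarrow}\prec\bm{{q^{U_\mathrm{maj}^1}}^\downarrow}$ for all $U\in[U_\mathrm{maj}]$, then \corref{corollary-monotone-lattice-superadditive} (with the second pair of vectors both equal to $\bm{\mathrm{log}_p^\uparrow}$, which majorises itself) gives $\bm{{q^{U_\mathrm{maj}^1}}^\downarrow}\cdot\bm{\mathrm{log}_p^\uparrow}\leqslant\bm{{q^U}^\downarrow}\cdot\bm{\mathrm{log}_p^\uparrow}$. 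Chaining, $\bm{{q^{U_\mathrm{maj}^1}}^\downarrow}\cdot\bm{\mathrm{log}_p^\uparrow}\leqslant\bm{q^U}\cdot\bm{\mathrm{log}_p^\uparrow}$ for every $U\in[U_\mathrm{maj}]$; and because $\bm{q^{U_\mathrm{maj}^1}}=\bm{{q^{U_\mathrm{maj}^1}}^\downarrow}$ by hypothesis, the left side equals $\beta\Delta Q+S(\rho(\beta))$ evaluated at $U_\mathrm{maj}^1$, so $U_\mathrm{maj}^1$ attains the minimum. Tracking when these inequalities are equalities then pins down precisely which $U\in[U_\mathrm{maj}]$ are optimal — namely those whose $\bm{q^U}$ is non-increasing and majorises all the others — which is the claimed characterisation.

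What remains — and what I expect to be the main obstacle — is showing that inside $[U_\mathrm{maj}]$ there really is a majorisation-greatest $\bm{{q^{U_\mathrm{maj}^1}}^\downarrow}$, since majorisation is only a partial order. I would build it by a greedy recursion in the spirit of \lemref{recursive-ky-fan} and \lemref{minimising-heat}: writing $q_n^U=\langle\xi_n|\rho'|\xi_n\rangle$ with $\rho'=U\rho(\beta)U^\dagger$, membership in $[U_\mathrm{maj}]$ forces $\rho'=\pr{\Psi}\otimes\sigma_\kk+\sigma^\perp$ with $\sigma_\kk=\sum_{m\leqslant d_\kk}p_m^\downarrow\pr{\phi_m}$ and $\sigma^\perp$ supported on $(\one_\oo-\pr{\Psi})\otimes\h_\kk$ with spectrum $\{p_m^\downarrow\}_{m>d_\kk}$, where $\ket{\Psi}$, the $\h_\kk$-basis $\{\ket{\phi_m}\}_m$ and the eigenbasis of $\sigma^\perp$ are all still free. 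One maximises $q_1^U$, then $q_2^U$ subject to that, and so on; at each step the remaining freedom is an orthogonality constraint of the kind already handled in \lemref{minimising-heat}, so the greedy choices are mutually consistent and assemble into a valid $U_\mathrm{maj}^1\in[U_\mathrm{maj}]$, with the usual exchange argument making $\bm{q^{U_\mathrm{maj}^1}}$ non-increasing with maximal partial sums. The delicate point is the single global parameter $\ket{\Psi}$, fixed once at the first step and then constraining all later ones: I would need to check that maximising $q_1^U$ — which selects $\ket{\Psi}$ through the largest Schmidt coefficient of $\ket{\xi_1}$ across the $\oo{:}\kk$ cut — does not obstruct the later maxima, i.e. that the feasible set has the matroid-like structure that makes the greedy construction yield the majorisation maximum. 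Correctly identifying those stagewise maxima when the Hamiltonian eigenvectors are entangled is where the bulk of the work lies; the closing remark that freely choosing $H$ with product eigenvectors on $\{\ket{\Psi}\otimes\ket{\phi_j}\}_j$ pushes $\bm{{q^{U_\mathrm{maj}^1}}^\downarrow}$ all the way up to $\bm{p^\downarrow}$ is then immediate.
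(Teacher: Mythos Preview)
Your first two paragraphs are exactly the paper's proof: reduce $\beta\Delta Q$ to $\bm{q^U}\cdot\bm{\mathrm{log}_p^\uparrow}$ minus a $U$-independent constant via \eqref{relative entropy term appendix}, then apply \thmref{monotone-lattice-superadditive} and \corref{corollary-monotone-lattice-superadditive}. The paper's argument is a single sentence to that effect and uses precisely those two results.

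Your third paragraph --- the existence of a majorisation-greatest $\bm{q^{U}}$ inside $[U_\mathrm{maj}]$ --- goes beyond what the paper actually proves. The proposition is stated and proved purely as a characterisation: any $U$ whose $\bm{q^U}$ is non-increasing and majorises all others is optimal. The paper does not construct such a $U_\mathrm{maj}^1$, nor does it address the greedy/matroid concern you raise about the global choice of $\ket{\Psi}$; that analysis is deferred to the subsequent \propref{optimal Hamiltonian global thermal} and the worked two-qubit example, where the optimal $\ket{\Psi}$ and bases are exhibited case by case. So your existence sketch is supplementary rather than a gap in your reproduction of the paper's proof.
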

\begin{proof}
As $S(\rho')= S(\rho(\beta))$ and $\bm{\mathrm{log}_p^\uparrow}$ are fixed by the initial conditions, then $\Delta Q$ is minimised by minimising $\bm{q^U}\cdot \bm{\mathrm{log}_p^\uparrow}$.  This is achieved by $U_\mathrm{maj}^1$ as a consequence of \thmref{monotone-lattice-superadditive}   and \corref{corollary-monotone-lattice-superadditive}. \end{proof}
 
Of course, we may also minimise $\Delta Q$ by engineering the Hamiltonian itself.

\
\begin{prop} \label{optimal Hamiltonian global thermal}
$\Delta Q$ given maximally probable information erasure will be minimised if all  $\ket{\xi_n}$ that have support on $\{\ket{\Psi} \otimes \ket{\phi_j}\}_j$ are given from the  set $\{\ket{\Psi} \otimes \ket{\phi_j}\}_j$. 
\end{prop}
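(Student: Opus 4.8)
The plan is to mimic the proof of \propref{optimal unitary global thermal}: reduce the statement to a majorisation fact about the vector $\bm{q^U}$ of \eq{relative entropy term probabilities appendix}, and then recognise the product form of the Hamiltonian eigenvectors as exactly the condition under which $\bm{q^U}$ can be pushed to the majorisation-maximal vector $\bm{p^\downarrow}$. From \eq{relative entropy term appendix},
\[
\beta\,\Delta Q=\bm{q^U}\cdot\bm{\mathrm{log}_p^\uparrow}-S(\rho(\beta)),
\]
where both $\bm{\mathrm{log}_p^\uparrow}$ and $S(\rho(\beta))$ depend only on the spectrum of $H$, not on its eigenvectors nor on the choice of $U\in[U_\mathrm{maj}]$; so it suffices to minimise $\bm{q^U}\cdot\bm{\mathrm{log}_p^\uparrow}$ and to see when that minimum is attained. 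First I would note that the matrix with entries $D_{nm}:=|\<\xi_n|U|\xi_m\>|^2$ is doubly stochastic --- its rows and columns sum to one by unitarity of $U$ together with orthonormality of $\{\ket{\xi_n}\}_n$ --- so that $\bm{q^U}=D\,\bm{p^\downarrow}$ and hence $\bm{q^U}\prec\bm{p^\downarrow}$.

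Since $\bm{\mathrm{log}_p^\uparrow}$ is non-decreasing, \thmref{monotone-lattice-superadditive} gives $\bm{q^U}\cdot\bm{\mathrm{log}_p^\uparrow}\geqslant\bm{{q^U}^\downarrow}\cdot\bm{\mathrm{log}_p^\uparrow}$, and \corref{corollary-monotone-lattice-superadditive}, with $\bm{q^U}\prec\bm{p^\downarrow}$ playing the role of $\bm{a_1}\prec\bm{a_2}$ and $\bm{\mathrm{log}_p^\uparrow}$ that of both $\bm{b_1}$ and $\bm{b_2}$, gives $\bm{{q^U}^\downarrow}\cdot\bm{\mathrm{log}_p^\uparrow}\geqslant\bm{p^\downarrow}\cdot\bm{\mathrm{log}_p^\uparrow}=S(\rho(\beta))$. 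This both recovers $\Delta Q\geqslant 0$ and shows that equality holds throughout if and only if $\bm{q^U}=\bm{p^\downarrow}$; when $\bm{p^\downarrow}$ is non-degenerate this in turn is equivalent to the diagonal of $\rho':=U\rho(\beta)U^\dagger$ in the eigenbasis $\{\ket{\xi_n}\}_n$ being equal to the spectrum of $\rho'$, which forces $\rho'=\rho(\beta)$, i.e.\ $U$ must be a symmetry of $\rho(\beta)$ and hence, up to phases, a permutation of $\{\ket{\xi_n}\}_n$.

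Next I would argue that such a minimising $U$ can be chosen inside $[U_\mathrm{maj}]$ precisely when the eigenvectors of $H$ with a non-zero $\ket{\Psi}$-component have product form $\ket{\Psi}\otimes\ket{\phi_j}$. A permutation of $\{\ket{\xi_n}\}_n$ sends eigenvectors of $H$ to eigenvectors of $H$, so for it also to lie in $[U_\mathrm{maj}]$ --- which by \eq{correlated systems maximal information erasure unitary} requires the $d_\kk$ largest-weight eigenvectors to be carried into the $d_\kk$-dimensional subspace $\ket{\Psi}\otimes\h_\kk$ --- we need $d_\kk$ of the $\ket{\xi_n}$ to lie in that subspace already and to be exactly its largest-weight (lowest-energy) eigenvectors. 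An eigenvector lying in $\ket{\Psi}\otimes\h_\kk$ is exactly a vector $\ket{\Psi}\otimes\ket{\phi}$, and a set of $d_\kk$ of them spanning that subspace is $\{\ket{\Psi}\otimes\ket{\phi_j}\}_j$ for an orthonormal basis $\{\ket{\phi_j}\}_j$ of $\h_\kk$; placing the $d_\kk$ smallest energies on these makes $U=\one\in[U_\mathrm{maj}]$ already optimal, with $\bm{q^U}=\bm{p^\downarrow}$ and $\Delta Q=0$. Conversely, if some eigenvector with a non-zero $\ket{\Psi}$-component is entangled across the $\oo:\kk$ cut, then $\ket{\Psi}\otimes\h_\kk$ is not spanned by eigenvectors of $H$, no permutation of $\{\ket{\xi_n}\}_n$ can carry a $d_\kk$-element subset into it, every $U\in[U_\mathrm{maj}]$ acts on $\rho(\beta)$ as a genuine non-permutation unitary, and so $\bm{q^U}\neq\bm{p^\downarrow}$ and $\Delta Q>0$. (If one prefers to keep the partition of the spectrum between the two sectors fixed rather than free, the same doubly-stochastic reasoning applied to the optimal $U_\mathrm{maj}^1$ of \propref{optimal unitary global thermal} still shows that the product-eigenvector Hamiltonians yield the majorisation-largest $\bm{q^{U_\mathrm{maj}^1}}$ within that family, hence the smallest $\bm{q^{U_\mathrm{maj}^1}}\cdot\bm{\mathrm{log}_p^\uparrow}$ by \corref{corollary-monotone-lattice-superadditive}, hence the smallest $\Delta Q$.)

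I expect the main obstacle to be the converse half of the previous paragraph --- ruling out that some non-permutation $U\in[U_\mathrm{maj}]$ might accidentally still produce $\bm{q^U}=\bm{p^\downarrow}$. The clean way around it is the observation already used above: if the diagonal entries of the density operator $\rho'$ in the $H$-eigenbasis equal its (distinct) eigenvalues, then $\rho'$ is forced to be diagonal in that basis, so $\rho'=\rho(\beta)$ and $U$ is a symmetry, hence a permutation. The only loophole is degeneracy in $\bm{p^\downarrow}$, which is handled exactly as in \app{equivalence class of unitary operators}, by allowing free rotations within degenerate eigenspaces. Everything else --- the inequalities, the doubly-stochastic reduction, and the equality analysis --- is a direct application of \thmref{monotone-lattice-superadditive} and \corref{corollary-monotone-lattice-superadditive}, entirely parallel to the proofs of \lemref{maximum-probability} and \propref{optimal unitary global thermal}.
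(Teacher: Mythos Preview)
Your proof is correct but takes a genuinely different route from the paper's. The paper's argument is a two-line invocation of Klein's inequality: given fixed spectra, $S(\rho'\|\rho(\beta))$ is minimised precisely when $\rho'$ commutes with $\rho(\beta)$, and since any $U\in[U_\mathrm{maj}]$ sends the $d_\kk$ lowest-energy eigenvectors into $\ket{\Psi}\otimes\h_\kk$, commutativity forces those image vectors to already be eigenvectors of $H$, hence product states. Your argument instead derives the same commutativity condition from first principles: you observe that $D_{nm}=|\<\xi_n|U|\xi_m\>|^2$ is doubly stochastic, so $\bm{q^U}=D\,\bm{p^\downarrow}\prec\bm{p^\downarrow}$, and then use \thmref{monotone-lattice-superadditive} and \corref{corollary-monotone-lattice-superadditive} to reach the equality case $\bm{q^U}=\bm{p^\downarrow}$, which (in the non-degenerate case) pins $U$ down to a permutation of $\{\ket{\xi_n}\}_n$. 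This is more self-contained --- it uses only the majorisation machinery already set up in Appendix~\ref{Majorisation theory section} rather than citing an external source --- and in effect re-proves the relevant instance of Klein's inequality along the way; it also makes the converse direction and the equality analysis explicit, which the paper's sketch leaves implicit. The paper's approach is shorter precisely because it outsources that step.
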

\begin{proof}
As shown during the proof of the Klein inequality in \citep{Sagawa-relative-entropy}, given a constant spectrum of $\rho$ and $\sigma$,  $S(\rho \| \sigma )$ is minimised when $\rho$ commutes with $\sigma$. Since  $\Delta Q$  takes its smallest value by minimising $S(U\rho(\beta) U^\dagger\| \rho(\beta))$, to achieve this $U\rho(\beta) U^\dagger$ must commute with $\rho(\beta)$.  By construction, $U_\mathrm{maj}^1\ket{\xi_n}=\ket{\Psi}\otimes \ket{\phi_j}$ for all $n\in \{1,\dots,d_\kk\}$. So, if $|\<\xi_m|U_\mathrm{maj}^1 |\xi_n\>|>0$, to minimise $\Delta Q$ we must have $\ket{\xi_m} \in \{\ket{\Psi} \otimes \ket{\phi_j}\}_j$.
\end{proof}


\bibliographystyle{ieeetr}

\end{document}